  \providecommand\BibTeX{{%
    \normalfont B\kern-0.5em{\scshape i\kern-0.25em b}\kern-0.8em\TeX}}}
\newcommand\vldbdoi{XX.XX/XXX.XX}
\newcommand\vldbpages{XXX-XXX}
\newcommand\vldbvolume{14}
\newcommand\vldbissue{1}
\newcommand\vldbyear{2023}
\newcommand\vldbauthors{\authors}
\newcommand\vldbtitle{\shorttitle}
\newcommand\vldbavailabilityurl{https://github.com/toggled/vldbsubmission}
\newcommand\vldbpagestyle{plain}
\algrenewcommand\algorithmicrequire{\textbf{Input:}}
\algrenewcommand\algorithmicensure{\textbf{Output:}}
\DeclarePairedDelimiter\abs{\lvert}{\rvert}%
\DeclareMathSymbol{\mlq}{\mathord}{operators}{``}
\DeclareMathSymbol{\mrq}{\mathord}{operators}{`'}
\newtheorem{theor}{Theorem}
\newtheorem{lem}{Lemma}
\newtheorem{defn}{Definition}
\newtheorem{exam}{Example}
\newcommand{\spara}[1]{\smallskip\noindent{\bf #1}}
\newcommand{\squishlist}{
 \begin{list}{$\bullet$}
  {  \setlength{\itemsep}{0pt}
     \setlength{\parsep}{3pt}
     \setlength{\topsep}{3pt}
     \setlength{\partopsep}{0pt}
     \setlength{\leftmargin}{2em}
     \setlength{\labelwidth}{1.5em}
     \setlength{\labelsep}{0.5em}
} }
\newcommand{\squishlisttight}{
 \begin{list}{$\bullet$}
  { \setlength{\itemsep}{0pt}
    \setlength{\parsep}{0pt}
    \setlength{\topsep}{0pt}
    \setlength{\partopsep}{0pt}
    \setlength{\leftmargin}{2em}
    \setlength{\labelwidth}{1.5em}
    \setlength{\labelsep}{0.5em}
} }
\newcommand{\squishdesc}{
 \begin{list}{}
  {  \setlength{\itemsep}{0pt}
     \setlength{\parsep}{3pt}
     \setlength{\topsep}{3pt}
     \setlength{\partopsep}{0pt}
     \setlength{\leftmargin}{1em}
     \setlength{\labelwidth}{1.5em}
     \setlength{\labelsep}{0.5em}
} }
\newcommand{\squishend}{
  \end{list}
}
\newcommand{\eat}[1]{}
\newcounter{ccc}
\DeclareMathOperator*{\argmin}{arg\,min}
\DeclareMathOperator*{\argmax}{arg\,max}
\newcommand{\bigO}{\mathcal{O}}
\newcommand{\naheed}[1]{{\leavevmode\color{black}#1}}
\algnewcommand\algorithmicparfor{\textbf{parallel for}}
\algnewcommand\algorithmicpardo{\textbf{do}}
\algnewcommand\algorithmicendparfor{\textbf{end\ parallel for}}
\begin{document}
\title{Neighborhood-based Hypergraph Core Decomposition}

\author{Naheed Anjum Arafat}
\affiliation{%
  \institution{National University of Singapore}
  \country{Singapore}
}
\email{naheed_anjum@u.nus.edu}

\author{Arijit Khan}
\affiliation{%
  \institution{Aalborg University}
  \country{Denmark}
}
\email{arijitk@cs.aau.dk}

\author{Arpit Kumar Rai}
\affiliation{%
  \institution{Indian Institute of Technology, Kanpur}
  \country{India}
}
\email{arpitkr20@iitk.ac.in}

\author{Bishwamittra Ghosh}
\affiliation{%
  \institution{National University of Singapore}
  \country{Singapore}
}
\email{bghosh@u.nus.edu}

\begin{abstract}
We propose \emph{neighborhood-based core decomposition}: a novel way of decomposing hypergraphs into hierarchical neighborhood-cohesive subhypergraphs.
Alternative approaches to decomposing hypergraphs, e.g., reduction to clique or bipartite graphs, are not meaningful in certain applications, the later also results in inefficient decomposition; while existing degree-based hypergraph decomposition does not distinguish nodes with different neighborhood sizes.
Our case studies show that the proposed decomposition is more effective than degree and clique graph-based decompositions in 
disease intervention and in extracting provably approximate and application-wise meaningful densest subhypergraphs.
We propose three algorithms: \textbf{Peel}, its efficient variant \textbf{E-Peel}, and a novel local algorithm:
\textbf{Local-core} with parallel implementation.
Our most efficient parallel algorithm \textbf{Local-core(P)} decomposes
hypergraph with 27M nodes and 17M hyperedges in-memory within 91 seconds
by adopting various optimizations.
Finally, we develop a new hypergraph-core model, the
\emph{(neighborhood, degree)-core} by considering
both neighborhood and degree constraints, design its decomposition algorithm \textbf{Local-core+Peel}, and demonstrate its superiority in spreading diffusion.

\end{abstract}

\settopmatter{printfolios=true}
\maketitle

\pagestyle{\vldbpagestyle}
\begingroup\small\noindent\raggedright\textbf{PVLDB Reference Format:}\\
\vldbauthors. \vldbtitle. PVLDB, \vldbvolume(\vldbissue): \vldbpages, \vldbyear.\\
\href{https://doi.org/\vldbdoi}{doi:\vldbdoi}
\endgroup
\begingroup
\renewcommand\thefootnote{}\footnote{\noindent
This work is licensed under the Creative Commons BY-NC-ND 4.0 International License. Visit \url{https://creativecommons.org/licenses/by-nc-nd/4.0/} to view a copy of this license. For any use beyond those covered by this license, obtain permission by emailing \href{mailto:info@vldb.org}{info@vldb.org}. Copyright is held by the owner/author(s). Publication rights licensed to the VLDB Endowment. \\
\raggedright Proceedings of the VLDB Endowment, Vol. \vldbvolume, No. \vldbissue\ %
ISSN 2150-8097. \\
\href{https://doi.org/\vldbdoi}{doi:\vldbdoi} \\
}\addtocounter{footnote}{-1}\endgroup

\ifdefempty{\vldbavailabilityurl}{}{
\vspace{.3cm}
\begingroup\small\noindent\raggedright\textbf{PVLDB Artifact Availability:}\\
The source code, data, and/or other artifacts have been made available at \url{\vldbavailabilityurl}.
\endgroup
}

\section{Introduction}
\label{sec:intro}
Decomposition of a graph into hierarchically cohesive subgraphs is an important tool
for solving many graph data management problems, e.g., community detection \cite{Malvestio2020InterplayBK},
densest subgraph discovery \cite{charikar}, identifying influential nodes \cite{Malliaros16},
and network visualization \cite{Alvarez-HamelinDBV05,BatageljMZ99}.
Depending on different notions of cohesiveness,
there are several decomposition approaches: core-decomposition \cite{Batagelj11}, truss-decomposition \cite{WangC12},
nucleus-decomposition \cite{SariyuceSPC15}, etc. In this work, we are interested in decomposing hypergraphs, a generalization of graphs
where an edge may connect more than two entities.

Many real-world relations consist of polyadic entities, e.g., relations between
individuals in co-authorships \cite{HanZPJ09}, legislators in
parliamentary voting~\cite{benson2018simplicial}, items in e-shopping carts \cite{XYYWC021}, proteins in protein complexes and  metabolites in a metabolic process \cite{bty570,FSS15}. For convenience, such relations are often reduced to a clique graph or a bipartite
graph (\S \ref{sec:diff}).
However, these reductions may not be desirable due to two reasons.
First, such reductions might not be meaningful, e.g., 
a pair of proteins in a certain protein complex may not necessarily interact pairwise to create a new functional protein complex.
Second, reducing a hypergraph to a clique graph or a bipartite graph inflates the problem-size 
 \cite{huang2015scalable}:
A hypergraph in~\cite{yang2015defining} with 2M nodes and 15M hyperedges is converted to a bipartite graph
with 17M nodes and 1B edges.
A $k$-uniform hypergraph with $m$ hyperedges causes its clique graph to have $\bigO(mk^2)$ edges. The bipartite graph representation also requires distance-2 core decomposition \cite{bonchi2019distance,LiuZHX21}, which is more expensive due to inflated problem-size (\S \ref{sec:experiments}).

To this end, we propose a novel  neighborhood-cohesion based 
hypergraph core decomposition that 
decomposes a hypergraph into nested, strongly-induced maximal subhypergraphs
such that all the nodes in every subhypergraph have at least a certain number of neighbors in that subhypergraph.
Being strongly-induced means that a hyperedge is only present in a subhypergraph
if and only if all its constituent nodes are present in that subhypergraph.
\begin{figure}[tb!]
    \centering
    \fbox{
        \centering
        \includegraphics[clip, trim=0.1cm 0cm 0.05cm 0cm,width=0.9\linewidth]{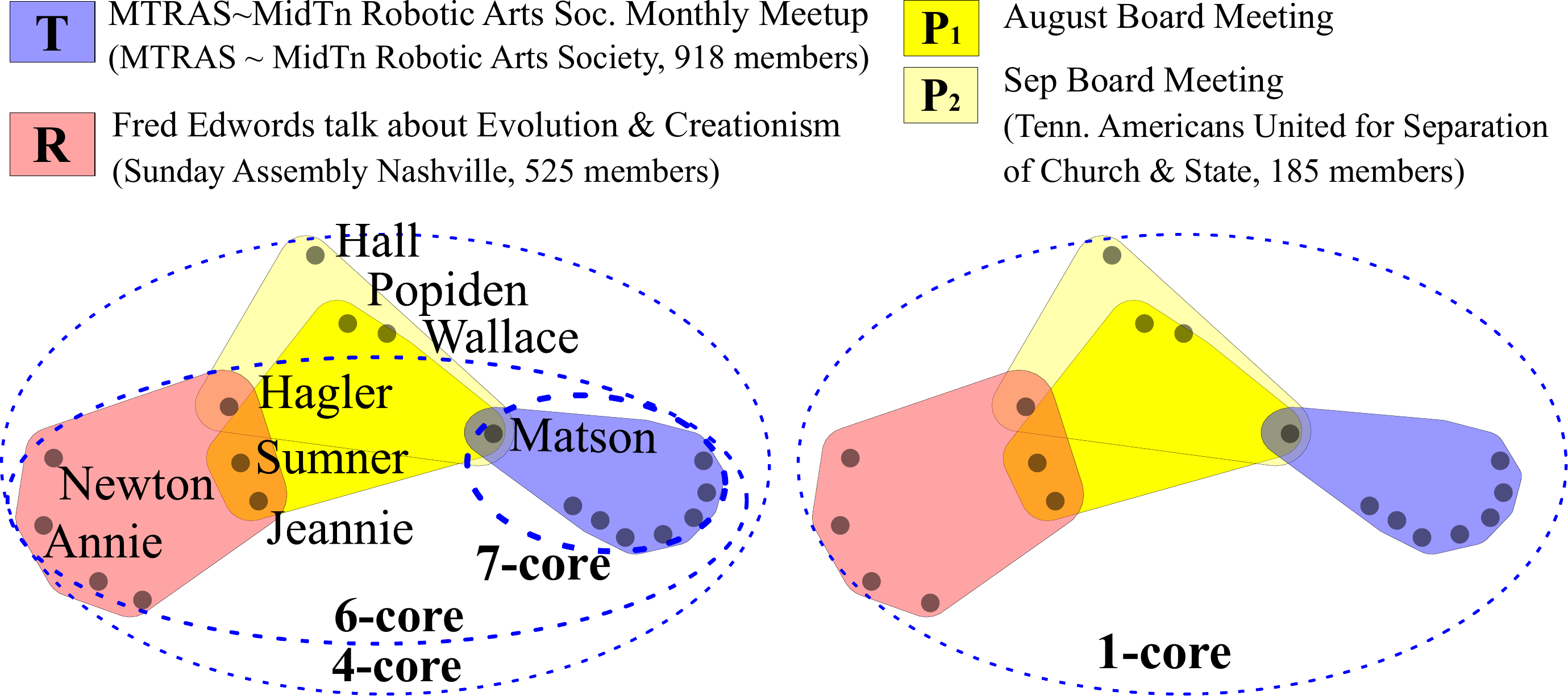}}\vspace{-4mm}
        \subfloat[]{\hspace{.5\linewidth}}
        \subfloat[]{\hspace{.5\linewidth}}
        \vspace{-1mm}
        \caption{\label{fig:intro_nbr}(a) Neighborhood-based and (b) degree-based core decomposition of hypergraph $H$}\vspace{-5mm}
\end{figure}
\vspace{-0.5mm}
\begin{exam}[Neighborhood-based core decomposition]
The hypergraph $H$ in ~\Cref{fig:intro_nbr}(a) is constructed based on \naheed{four events ($T$, $R$, $P_1$, $P_2$)} 
from the {\em Nashville Meetup Network} dataset \cite{meetup}. Each hyperedge denotes an event, nodes in a hyperedge are participants of that event. Two nodes are neighbors if
they \naheed{co-participate} in an event. \naheed{We notice that 
\textit{Hall} has $4$ neighbors: \textit{Popiden}, \textit{Hagler}, \textit{Wallace}, and \textit{Matson}. 
Similarly, \textit{Matson},  \textit{Hagler}, \textit{Jeannie}, \textit{Sumner}, \textit{Annie}, \textit{Newton}, \textit{Wallace}, and \textit{Popiden} have $13, 10, 9, 9, 6, 6, 6$, and $6$ neighbors, respectively.
As every node has $\geq 4$ neighbors, the neighborhood-based $4$-core, denoted by $H[V_4]$, is the hypergraph $H$ itself. 
The neighborhood based $6$-core is the subhypergraph $H[V_6]=\{T,R\}$ because participants of $T$ (e.g., \textit{Newton}, \textit{Annie}) and $R$ (e.g., \textit{Matson}) respectively have $6$ and $7$ neighbors in $H[V_6]$. Finally, the neighborhood based $7$-core is the subhypergraph $\{T\}$.}
\end{exam}
\vspace{-0.5mm}
\spara{Motivation.}
The only hypergraph decomposition existing in the literature is that based on degree (i.e., number of hyperedges incident on a node)
~\cite{ramadan2004hypergraph,sun2020fully}. In \emph{degree-based core decomposition},
every node in the $k$-core has degree at least $k$ in that core. 
It does not consider hyperedge sizes. As a result, nodes in the same core may have vastly 
different neighborhood sizes. \naheed{For instance, \textit{Hall} and \textit{Matson} have $4$ and $13$ neighbors, respectively, yet they belong to the same
1-core in the degree-based decomposition (Figure~\ref{fig:intro_nbr}(b)).} There are applications,
e.g., propagation of contagions in epidemiology, diffusion of information in viral-marketing, where it is desirable
to capture such differences because nodes with the same number of neighbors in a subhypergraph are known to
exhibit similar diffusion characteristics~\cite{kitsak}. Indeed from a practical viewpoint, if the infectious diseases 
control authority is looking for some key events that cause higher infection spread (e.g., meeting with 6 neighbors per participant) and hence such events need to be intervened, 
they are events \naheed{ $\{T,R\}$ (nbr-6-core) as they cause meeting with at least 6 neighbors per participant.} 
Such distinction across various events and participants is not possible according to degree-based core decomposition of $H$. \naheed{Moreover, the neighborhood based decomposition is also logical: innermost core contains $T$, which is a tech event organized by a relatively popular group with 918 members, followed by the second innermost core containing $\{R,T\}$. $R$ is an event organized by a secular organization with 525 members which often discusses religion matters. The outermost core contains two events held by a niche social activist group with only 185 members.}
\eat{
\spara{Applications.}
First, we demonstrate the usefulness of neighborhood-based core decomposition in diffusion-related domains~\cite{giovanni20,influential2022}.
Specifically, we show in \S~\ref{sec:inf_applications} that nodes in the innermost core of our decomposition are not only the most
influential in spreading information but also the earliest adopters of diffused information.
Besides, deleting an innermost core node is more likely to disrupt the spread of an infectious disease (e.g., COVID19, Ebola)
compared to an outer core node. 
We show such an intervention to be generally
effective due to a decrease in connectivity and an increase in the average length of shortest paths after the intervention.
Furthermore, an intervention based on our neighborhood-based decomposition is much more effective than a degree-based decomposition strategy, 
as discussed in the \emph{Motivation}.

Second, the proposed core-decomposition gives rise to a new type of densest
subhypergraph, which we refer to as the \emph{volume-densest subhypergraph}.
Nodes in the volume-densest subhypergraph have the largest number of average neighbors
(in the subhypergraph) among all subhypergraphs.
Our neighborhood-based decomposition induces a node-ordering
which we exploit to obtain the volume-densest subhypergraph
approximately with a theoretical guarantee (\S \ref{sec:dens_applications}).
In~\S \ref{subsubsec:dens_effect},
we show that the proposed volume-densest subhypergraphs capture neighborhood-cohesive
regions more effectively than the existing degree-densest subhypergraphs \cite{hu2017maintaining}.
Case study on human protein complexes 
(\S \ref{subsubsec:casestudyI}) shows that the volume-densest subhypergraph extracts complexes that participate in RNA metabolism and localization.
Case study on organizational emails 
(\S \ref{subsubsec:casestudyII}) shows that
the volume-densest subhypergraph extracts emails about internal announcements, meetings, and employee gatherings.
}

\spara{Challenges.}
A hyperedge can relate to more than two nodes.
Furthermore, a pair of nodes may be related by multiple yet distinct hyperedges. Thus, a trivial
adaptation of core-decomposition algorithms for graphs to hypergraphs is difficult.
For instance, in the classic peeling algorithm for graph core decomposition, when a node is removed,
the degree of its neighbors is decreased by 1: this allows important optimizations which makes the
peeling algorithm {\em linear-time} and efficient \cite{Batagelj11,ChengKCO11,S13}. 
However, in a neighborhood-based hypergraph core decomposition, deleting a node may reduce
the neighborhood size of its neighboring node by more than $1$.
Hence, to recompute the number of neighbors of a deleted
node's neighbor, one must construct the residual hypergraph after deletion, which makes the
decomposition {\em polynomial-time} and thus expensive.
Furthermore, the existing lower bound that makes graph core decomposition more efficient~\cite{bonchi2019distance}
does not work for neighborhood-based hypergraph core decomposition, and we require a new lower-bound
for this purpose.
In the following, we discuss the challenges associated with adopting
the local approach~\cite{eugene15}, one of the most efficient methods for graph core decomposition to hypergraphs.

In a local approach,  a core-number estimate is updated iteratively~\cite{eugene15,K15}
or in a distributed manner~\cite{distributedcore} for every node in a graph. The initial value of a node's
core-number estimate is an upper bound of its core-number. In subsequent rounds, this estimate is iteratively decreased
based on estimates of neighboring nodes.~\cite{eugene15} uses $h$-index~\cite{hirsch2005index} for such update.
Khaouid et al. \cite{K15} also applies a similar approach to iteratively update core-number estimates.
They have shown that the following invariant must hold: \emph{every node with core-number $k$ has $h$-index at least $k$}, and \emph{the subgraph induced by nodes with $h$-index at least $k$ has at least $k$ neighbors per node in that subgraph}.
The former holds but the later may not hold in a hypergraph,
because the subhypergraph induced by nodes with $h$-index $\geq k$ may not include hyperedges that partially contain other nodes (\S \ref{sec:local}).
Due to those `missing' hyperedges, the number of neighbors of some nodes in that subhypergraph may drop below $k$,
violating the coreness condition. Thus a local approach used for computing the $k$-core~\cite{distributedcore,eugene15,K15}
or 
$(k,h)$-core~\cite{LiuZHX21}
in graphs results in {\em incorrect} neighborhood-based hypergraph cores
(\S \ref{sec:effectiveness}).

\spara{Our contributions} are summarized below.

{\bf Novel problem and characterization (\S\ref{sec:preliminaries}).}
We define and investigate the novel problem of neighborhood-based core decomposition in hypergraphs.
We prove that neighborhood-based $k$-cores are unique and the $k$-core contains the $(k+1)$-core.

{\bf Exact algorithms (\S\ref{sec:algorithm}).} We propose three exact algorithms 
with their correctness and time complexity analyses. Two of them, \textbf{Peel} and its enhancement \textbf{E-Peel}, adopt classic peeling~\cite{Batagelj11}, incurring global changes to the hypergraph. For \textbf{E-Peel}, we derive {\em novel lower-bound} on core-number that eliminates many redundant neighborhood recomputations. Our third algorithm, \textbf{Local-core} 
only makes node-level local computations. Even though the existing local method~\cite{distributedcore,eugene15} fails to correctly find neighborhood-based core-numbers in a hypergraph, our algorithm \textbf{Local-core} applies a {\em novel} \textbf{Core-correction} procedure, ensuring correct core-numbers.

{\bf Optimization and parallelization strategies (\S\ref{sec:optimization}).} We propose four optimization strategies to improve the efficiency of \textbf{Local-core}. Compressed representations for hypergraph (optimization-I) and the family of optimizations for efficient \textbf{Core-correction} (optimization-II) are novel to core-decomposition literature. The other optimizations, though inspired from graph literature, have not been adopted in earlier hypergraph-related works. 
We also propose parallelization of \textbf{Local-core} for the shared-memory programming paradigm.

{\bf (Neighborhood, degree)-core  (\S\ref{sec:kdcore}).} We define a more general hypergraph-core model,
\emph{(neighborhood, degree)-core} by considering both neighborhood and degree constraints, propose its decomposition
algorithm \textbf{Local-core+Peel}, and demonstrate its superiority in diffusion spread. 

{\bf Empirical evaluation (\S\ref{sec:experiments})}
on real and synthetic hypergraphs shows that the proposed algorithms are
effective, efficient, and practical.
Our OpenMP parallel implementation \textbf{Local-core(P)}
decomposes hypergraph with 27M nodes, 17M hyperedges in 91 seconds.

{\bf Applications (\S\ref{sec:application}).}
We show our decomposition to be more effective in disrupting diffusion than other decompositions.
Our greedy algorithm proposed for volume-densest subhypergraph recovery achieves $(d_{pair}(d_{card}-2)+2)$-approximation,
where hyperedge-cardinality (\# nodes in that hyperedge) and node-pair co-occurrence (\# hyperedges containing that pair) are at most $d_{card}$  and $d_{pair}$, respectively. 
If the hypergraph is a graph ($d_{card} = 2$), our result generalizes Charikar's $2$-approximation
guarantee for densest subgraph discovery \cite{charikar}.
Our volume-densest subhypergraphs
capture differently important meetup events
compared to degree and clique graph
decomposition-based densest subhypergraphs.

\section{Our Problem and Characterization}
\label{sec:preliminaries}
\spara{Hypergraph.}
A hypergraph $H=(V,E)$ consists of a set of nodes $V$ and a set of hyperedges $E\subseteq P(V)\setminus \phi$,
where $P(V)$ is the power set of $V$. A hyperedge is modeled as an unordered set of nodes.

\spara{Neighbors.}
Neighbors $N(v)$ of a node $v$ in a hypergraph $H=(V,E)$ is the set of nodes $u \in V$ that co-occur with $v$ in some hyperedge $e\in E$. That is,
$N(v) = \{u \in V\mid u \neq v \land \exists~{e\in E} \, \text{ s.t } \, u,v \in e \}$.

\spara{Strongly induced subhypergraph \cite{dewar2017subhypergraphs, bahmanian2015,graham_lovasz}.}
A strongly induced subhypergraph $H[S]$ of a hypergraph $H=(V, E)$, induced by a node set $S \subseteq V$,
is a hypergraph with the node set $S$ and the hyperedge set $E[S] \subseteq E$, consisting of all the hyperedges
that are subsets of $S$.
\begin{small}
\begin{align}
H[S] = (S,E[S]), \,\text{where}\, E[S]=\{e \mid e \in E \land e \subseteq S\}
\end{align}
\end{small}
In other words, every hyperedge in a strongly induced subhypergraph must exist in its parent hypergraph.
%
\subsection{Problem Formulation}
\label{sec:problem}
The {\em nbr-$k$-core} $H[V_k] = (V_k,E[V_k])$ of a hypergraph $H = (V,E)$ is the maximal (strongly) induced subhypergraph such that every node $u \in V_k$ has at least $k$ neighbours in $H[V_k]$. For simplicity, 
we denote nbr-$k$-core, $H[V_k]$ as $H_k$.
The \emph{maximum core} of $H$ is the largest $k$ for which $H_k$ is non-empty. The \emph{core-number} $c(v)$ of a node $v \in V$ is the largest $k$ such that $v \in V_k$ and $v \notin V_{k+1}$. The {\em core decomposition} of a hypergraph assigns to each node its core-number. Given a hypergraph, the problem studied in this paper is to correctly and efficiently compute its neighborhood-based core decomposition.
\subsection{Differences with Other Core Decompositions}
\label{sec:diff}
One can adapt broadly two kinds of approaches from the literature.

\spara{Approach-1.}
Transform the hypergraph into other objects (e.g., a graph),
apply existing decomposition approaches \cite{Batagelj11,bonchi2019distance} on that object, and then project the
decomposition back to the hypergraph. For instance, a hypergraph is transformed to a clique graph
by replacing the hyperedges with cliques, and classical graph core decomposition \cite{Batagelj11} is applied.
A hypergraph can also be transformed into a bipartite graph by representing the hyperedges as nodes in the
second partition and creating an edge between two cross-partition nodes if the hyperedge in the second partition
contains a node in the first partition. Finally, distance-2 core decomposition \cite{bonchi2019distance,LiuZHX21}
is applied. 
In distance-2 core-decomposition, nodes in $k$-core have at least $k$ $2$-hop neighbors in the subgraph. 
{\bf First}, the decomposition they yield may be different from that of ours: \naheed{$c(Popiden) = 5$ in both clique graph and dist-2 bipartite graph decompositions, whereas \textit{Popiden} has core-number $4$ in our decomposition. Clique graph and bipartite graph decompositions fail to identify that the relation $\langle$\textit{Popiden}-\textit{Hagler}$\rangle$ should not exist without the existence of event $P_2$, since $P_2$ is the only event where they co-participate. {\bf Second}, the resulting decomposition may be unreasonable: low-importance events $P_1$ and $P_2$ by the same interest group are placed in different cores causing difficulty in separating less-important events from more-important ones.}{\bf Third}, such transformations inflate the problem size (\S \ref{sec:intro}). Bipartite graph representation results in inefficient decomposition (\S \ref{sec:exp_efficiency}).

\spara{Approach-2.}
The \emph{degree} $d(v)$ of a node $v$ in hypergraph $H$ is the number of
hyperedges incident on $v$~\cite{berge}, i.e.,
$ d(v) = \abs{\{e \in E\mid v \in e\}}$.
Sun et al.~\cite{sun2020fully} define
the {\em deg-$k$-core} $H_k^{deg}$ of a hypergraph $H$ as the maximal (strongly) induced subhypergraph of $H$ such
that every node $u$ in $H_k^{deg}$ has degree at least $k$ in $H_k^{deg}$.
This approach does not consider hyperedge sizes (\S \ref{sec:intro}).
\naheed{Therefore, it does not necessarily yield the same decomposition as our approach (Figure~\ref{fig:intro_nbr})}.
\subsection{Nbr-$k$-Core: Properties}
\label{sec:properties}
\begin{theor}
The nbr-$k$-core $H_k$ is unique for any $k>0$.
\label{th:unique}
\end{theor}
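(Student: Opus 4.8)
The plan is to show that the family of node sets inducing strongly induced subhypergraphs with the required neighborhood property is closed under union; uniqueness of the maximal element then follows at once. Call a set $S \subseteq V$ \emph{$k$-good} if every node of $S$ has at least $k$ neighbors in the strongly induced subhypergraph $H[S]$. I would first record the monotonicity fact that underpins everything: if $S \subseteq S'$ then $E[S] \subseteq E[S']$, and consequently $N_{H[S]}(v) \subseteq N_{H[S']}(v)$ for every $v \in S$. This is immediate from the definition of strong induction: a hyperedge $e$ with $e \subseteq S$ also satisfies $e \subseteq S'$, hence $e \in E[S']$; and any neighbor of $v$ witnessed by such an $e$ in $H[S]$ is still witnessed by the same $e$ in $H[S']$.

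Next I would prove the union-closure step: if $S_1$ and $S_2$ are $k$-good, then $S_1 \cup S_2$ is $k$-good. Take any $v \in S_1 \cup S_2$; without loss of generality $v \in S_1$. Since $S_1 \subseteq S_1 \cup S_2$, the monotonicity observation gives $N_{H[S_1]}(v) \subseteq N_{H[S_1 \cup S_2]}(v)$, and because $S_1$ is $k$-good the left-hand side has size at least $k$. Hence $v$ has at least $k$ neighbors in $H[S_1 \cup S_2]$, so $S_1 \cup S_2$ is $k$-good. By a trivial induction this extends to any finite union of $k$-good sets.

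Finally, since $V$ is finite there are only finitely many $k$-good sets, and the empty set is (vacuously, using $k>0$) $k$-good, so the family is nonempty and its union $V_k := \bigcup\{S \subseteq V : S \text{ is } k\text{-good}\}$ is itself $k$-good and contains every $k$-good set. Thus $H[V_k]$ is a $k$-good strongly induced subhypergraph that contains every other one, i.e.\ it is \emph{the} unique maximal such subhypergraph, which is precisely the claim.

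I do not expect a genuine obstacle here; the only point requiring care is the monotonicity of the hyperedge set under strong induction, namely that passing to a larger vertex set never deletes a hyperedge and hence never deletes a neighbor relation. The hyperedges that ``straddle'' $S_1$ and $S_2$ and therefore appear in $E[S_1\cup S_2]$ but in neither $E[S_1]$ nor $E[S_2]$ are harmless, since we only ever need a \emph{lower} bound on each node's neighbor count. One should also confirm the boundary convention that with $k>0$ the empty subhypergraph satisfies the condition vacuously, so the union above is taken over a nonempty collection.
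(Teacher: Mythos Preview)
Your proposal is correct and follows essentially the same idea as the paper's proof: both hinge on the observation that the family of ``$k$-good'' node sets is closed under union, which in turn rests on the monotonicity $E[S_1]\cup E[S_2]\subseteq E[S_1\cup S_2]$ of strong induction. The paper phrases this as a short proof by contradiction (two distinct maximal cores would be strictly contained in their union), whereas you give the constructive version (take the union of all $k$-good sets); the underlying argument is the same.
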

%
\begin{proof}
Let, if possible, there be two distinct nbr-$k$-cores: $H_{k_1} = (V_{k_1},E[{k_1}])$ and $H_{k_2} = (V_{k_2},E[{k_2}])$ of a hypergraph $H = (V,E)$.
By definition, both $H_{k_1}$ and $H_{k_2}$ are maximal strongly induced subhypergraphs of $H$. 
Construct the union hypergraph $H_k = (V_{k_1} \cup V_{k_2}, E[V_{k_1} \cup V_{k_2}])$. For any $u \in V_{k_1} \cup V_{k_2}$, $u$ must be in either $V_{k_1}$ or $V_{k_2}$.
Thus, $u$ must have at least $k$ neighbours in 
$H_{k_1}$ or $H_{k_2}$.
Since $E[V_{k_1}] \cup E[V_{k_2}] \subseteq E[V_{k_1} \cup V_{k_2}]$,
$u$ must also have at least $k$ neighbours in $H_k$.
Since $H_k$ is a supergraph of both  $H_{k_1}$ and $H_{k_2}$, 
$H_{k_1}$ and $H_{k_2}$ are not maximal,
leading to a contradiction. 
\end{proof}
\begin{theor}
The $(k+1)$-core is contained in the $k$-core, $\forall \, k>0$.
\label{th:containment}
\end{theor}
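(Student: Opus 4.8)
The plan is to prove the containment at the level of node sets, i.e.\ $V_{k+1} \subseteq V_k$, from which the statement about subhypergraphs follows immediately: since both cores are \emph{strongly} induced, $V_{k+1}\subseteq V_k$ gives $E[V_{k+1}] \subseteq E[V_k]$, so $H_{k+1}=H[V_{k+1}]$ is a strongly induced subhypergraph of $H_k = H[V_k]$.

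First I would observe that $H_{k+1}$ already satisfies the defining property of a ``$k$-neighbor'' strongly induced subhypergraph: every node $u \in V_{k+1}$ has at least $k+1 \ge k$ neighbors inside $H_{k+1}$. Hence $H_{k+1}$ is \emph{one of} the strongly induced subhypergraphs over which the nbr-$k$-core $H_k$ is, by definition, the maximal one.

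Next, reusing the union construction from the proof of \Cref{th:unique}, I would form the strongly induced subhypergraph $H[V_k \cup V_{k+1}]$. Any $u \in V_k \cup V_{k+1}$ lies in $V_k$ or in $V_{k+1}$, so it has at least $k$ neighbors in $H_k$ or in $H_{k+1}$; since $E[V_k] \cup E[V_{k+1}] \subseteq E[V_k \cup V_{k+1}]$, all of those neighbors are still present in $H[V_k \cup V_{k+1}]$, so $u$ has at least $k$ neighbors there as well. Thus $H[V_k \cup V_{k+1}]$ is a strongly induced subhypergraph in which every node has at least $k$ neighbors, and it contains $H_k$. By maximality of the nbr-$k$-core, $H[V_k \cup V_{k+1}] = H_k$, which forces $V_{k+1} \subseteq V_k$ and completes the argument.

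The one point worth stating carefully — and the place where the argument could go wrong if handled sloppily — is the monotonicity of neighbor counts as the node set grows: enlarging the inducing set $S$ can only add hyperedges to $E[S]$, never delete them, so the number of neighbors of a fixed node in $H[S]$ is non-decreasing in $S$. This is precisely what licenses the step ``at least $k$ neighbors in $H_k$ or $H_{k+1}$ $\Rightarrow$ at least $k$ neighbors in $H[V_k \cup V_{k+1}]$''; the rest is bookkeeping, and the use of \emph{strongly} induced subhypergraphs (rather than an arbitrary notion of sub-object) is what keeps this monotonicity clean.
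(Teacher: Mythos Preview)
Your proof is correct and follows essentially the same approach as the paper: form $S=V_k\cup V_{k+1}$, observe that every node in $H[S]$ retains at least $k$ neighbours (via the monotonicity $E[V_k]\cup E[V_{k+1}]\subseteq E[S]$), and conclude by maximality of the nbr-$k$-core. The only cosmetic difference is that the paper phrases this as a contradiction (assuming some $u\in V_{k+1}\setminus V_k$ and deriving $|S|>|V_k|$) whereas you argue directly that $H[S]=H_k$; the underlying construction and maximality step are identical.
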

%
\begin{proof}
Let, if possible, for some node $u \in V_{k+1}$,  $u \notin V_{k}$.
Construct $S=V_{k}\cup V_{k+1}$. Since $u~\notin V_k$, but $u \in V_{k+1} \subset S$, 
we get $\abs{S} \geq \abs{V_k} + 1$.
It is easy to verify that every node $v \in S$ has at least $k$ neighbours in $H[S]$ and $\abs{S} > \abs{V_k}$.
Then, $V_k$ is not maximal and thus not nbr-$k$-core, which is a contradiction. The theorem follows.
\end{proof}
\begin{figure}[tb!]
    \centering
        \fbox{
        \centering
        \includegraphics[clip, trim=0.05cm 0cm 0.05cm 0cm,width=0.7\linewidth]{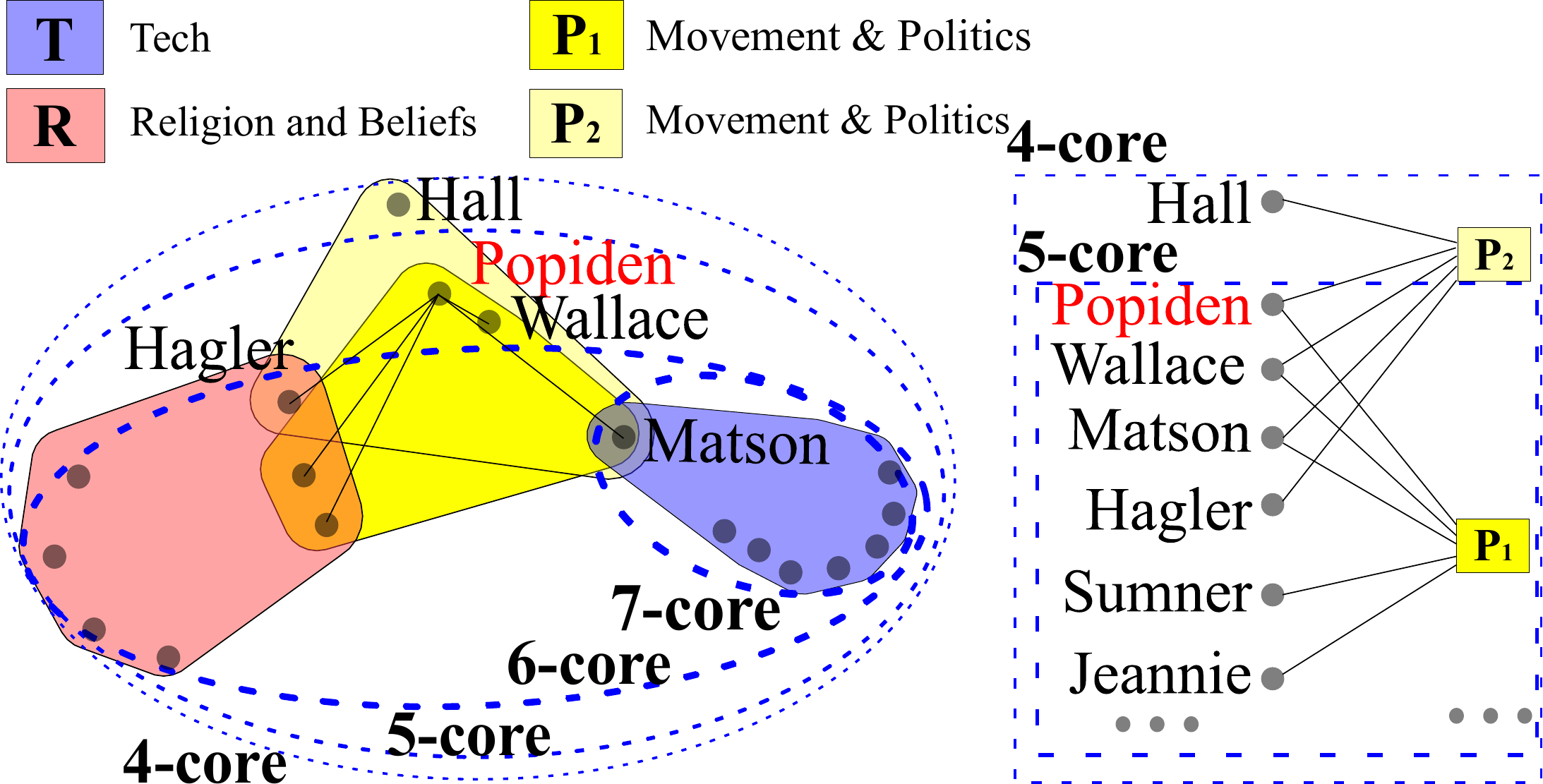}}\vspace{-4mm}
        \subfloat[]
        {\hspace{0.5\linewidth}}
        \subfloat[]
        {\hspace{0.2\linewidth}}\vspace{-3mm}
        \caption{Alternative decompositions are sometimes undesirable: (a) Core decomposition of clique graph of H and Dist-2 core decomposition of the bipartite graph of H produce the same decomposition. Similar events ($P_1$ and $P_2$) are in the different cores. (b) Bipartite graph representation of H.\label{fig:cl_bp}}\vspace{-3mm}
\end{figure}

\section{Algorithms}
\label{sec:algorithm}
We propose three algorithms: 
Our algorithms \textbf{Peel} and its efficient variant \textbf{E-Peel} are inspired by peeling methods similar to graph core computation~\cite{bonchi2019distance,Batagelj11}.
The algorithm \textbf{Local-core} is inspired by local approaches to graph core computation~\cite{eugene15,distributedcore}.
In the classic peeling algorithm
for graph core decomposition, when a node is removed, the degree
of its neighbors reduces by 1: this permits efficient optimizations \cite{Batagelj11,ChengKCO11,S13}, e.g., \cite{Batagelj11}
proposes bin-sort to order nodes
and develops a {\em linear-time} peeling algorithm.
However, in a neighborhood-based hypergraph core decomposition, deleting a node may reduce the neighborhood size of
its neighboring node by more than 1. Hence, to recompute the number of neighbors of a deleted node’s neighbor, one must construct
the residual hypergraph, which makes the decomposition {\em polynomial-time} and costly.
The algorithms $\textbf{E-Peel}$ and $\textbf{Local-core}$, despite being inspired by the existing family of graph algorithms, are by no means trivial adaptations. For $\textbf{E-Peel}$, we devise a new local lower-bound for core-numbers as
the lower-bound for graph core~\cite{bonchi2019distance} is insufficient for our purpose. For \textbf{Local-core}, we show how a direct adaptation of local algorithm \cite{distributedcore,eugene15,K15} leads to incorrect core computations (\S \ref{sec:local} and \S \ref{sec:effectiveness}). Hence, we devise \emph{hypergraph $h$-index} and \emph{local coreness constraint} and employ them to compute hypergraph cores correctly.
\subsection{Peeling Algorithm}
\label{sec:peeling}
Following Theorem~\ref{th:containment}, the $(k +1)$-core can be computed from the $k$-core
by ``peeling'' all nodes whose neighborhood sizes are less than $k+1$.
~\Cref{alg:naivecore} describes our peeling algorithm: \textbf{Peel}, which
processes the nodes in increasing order of their neighborhood sizes (Lines 4-10).
$B$ is a vector of lists: Each cell $B[i]$ is a list storing all
nodes whose neighborhood sizes are $i$ (Line 3).
When a node $v$ is processed at iteration $k$,
its core-number is assigned to $c(v)=k$ (Line 7),
it is deleted from the set of ``remaining'' nodes $V$ (Line 10).
The neighborhood sizes of the nodes in $v$'s neighborhood
are recomputed (each neighborhood size can decrease by more than $1$,
since when $v$ is deleted, all hyperedges involving $v$ are also deleted),
and these nodes are moved to the appropriate cells in $B$ (Lines 8-9).
The algorithm completes when all nodes in the hypergraph are
processed and have their respective core-numbers computed.

\spara{Proof of correctness.}
Initially $B[i]$ contains all nodes whose neighborhood sizes are $i$.
When we delete some neighbor of a node $u$, the neighborhood size of $u$ is recomputed,
and $u$ is reassigned to a new cell corresponding to its reduced neighborhood size
until we find that the removal of a neighbor $v$ of $u$ reduces $u$'s neighborhood size
even below the current iteration number $k$ (Line 9). When this happens, we
correctly assign $u$'s core-number $c(u)=k$.
{\bf (1)} Consider the remaining subhypergraph formed by the remaining nodes and
hyperedges at the end of the $(k-1)^{{\text{th}}}$ iteration. Clearly, $u$ is in the $k$-core since $u$
has at least $k$ neighbors in this remaining subhypergraph, where all nodes in the remaining subhypergraph
also have neighborhood sizes $\geq k$. {\bf (2)} The removal of $v$
decreases $u$'s neighborhood size smaller than the current iteration number $k$, thus when
the current iteration number increases to $k+1$, $u$ will not have enough neighbors
to remain in the $(k+1)$-core.

\spara{Time complexity.}
Each node $v$ is processed exactly once from $B$ in ~\Cref{alg:naivecore};
when it is processed and thereby deleted from $V$, neighborhood sizes of the nodes in $v$'s neighborhood are
recomputed. Assume that the maximum number of neighbors and hyperedges of a node
be $d_{nbr}$ and $d_{hpe}$, respectively.
Thus, ~\Cref{alg:naivecore} has time complexity
$\bigO\big(|V|\cdot d_{nbr}\cdot(d_{nbr}+d_{hpe})\big)$.
\begin{algorithm}[!tb]
\begin{algorithmic}[1]
\caption{\label{alg:naivecore}\small Peeling algorithm: \textbf{Peel}}
\scriptsize
\Require Hypergraph $H = (V,E)$
\Ensure Core-number $c(u)$ for each node $u \in V$
\ForAll {$u \in V$}
    \State Compute $N_V(u)$ \Comment{set of neighbors of $u$ in $H=(V,E)$}
    \State $B[\abs{N_V(u)}] \gets B[\abs{N_V(u)}] \cup \{u\}$
\EndFor
\ForAll {$k = 1,2,\ldots, \abs{V}$}
    \While{$B[k] \neq \phi$}
        \State Remove a node $v$ from $B[k]$
        \State $c(v) \gets k$
        \ForAll{$u \in N_V(v)$}
            \State Move $u$ to $B[\max{\left(\abs{N_{V \setminus \{v\}}(u)},k \right)}]$
        \EndFor
        \State $V \gets V \setminus \{v\}$
    \EndWhile
\EndFor
\State Return $c$
\end{algorithmic}
\end{algorithm}
\subsection{Efficient Peeling with Bounding}
\label{subsec:improved_peel}
An inefficiency in~\Cref{alg:naivecore} is that
it updates the cell index of every node $u$ that is a neighbor of a deleted node $v$.
To do so, it has to compute the number of neighbors of $u$ in the newly constructed
subhypergraph. To delay this recomputation, 
we derive a local lower-bound for $c(u)$ via~\Cref{lem:localLB} and use it to eliminate
many redundant neighborhood recomputations and cell updates (\Cref{alg:Improvednaivecore}).
The intuition is that a node $u$ will not be deleted at some iteration $k<$ the lower-bound on $c(u)$,
thus we do not require computing $u$'s neighborhood size until the value of $k$ reaches the lower-bound on
$c(u)$. Our lower-bound is local since it is specific to each node. 
\begin{lem}[Local lower-bound]
\label{lem:localLB}
Let $e_m(v)= \argmax\{\abs{e}: e \in E \land v \in e \}$ be the highest-cardinality hyperedge
incident on $v \in V$. For all $v \in V$,
\begin{small}
\begin{align}
c(v) \geq \max\left( \abs{e_m(v)} - 1, \min_{u \in V}\,\abs{N(u)} \right) = \mathbf{LB}(v)
\end{align}
\end{small}
\end{lem}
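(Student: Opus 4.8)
The plan is to bound $c(v)$ from below by each of the two quantities inside the $\max$ separately, in each case by exhibiting an explicit strongly induced subhypergraph that is certified to lie inside some nbr-$k$-core.

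First I would make explicit the monotonicity fact implicit in \Cref{th:unique}: by the same union argument used to prove uniqueness, $H_k$ is the union of \emph{all} strongly induced subhypergraphs in which every node has at least $k$ neighbours; hence, any strongly induced subhypergraph $H[S]$ in which every node has $\geq k$ neighbours satisfies $S \subseteq V_k$, and therefore $c(w)\geq k$ for every $w\in S$. This is the only structural tool the proof needs.

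For the term $\abs{e_m(v)}-1$, take $S=e_m(v)$ and consider $H[S]$. Since $e_m(v)$ is itself a hyperedge and $e_m(v)\subseteq S$, the hyperedge $e_m(v)$ belongs to $E[S]$; consequently every node of $S$ is, inside $H[S]$, a neighbour of all the other $\abs{e_m(v)}-1$ nodes of $S$. Thus every node of $H[S]$ has at least $\abs{e_m(v)}-1$ neighbours there, so by the fact above $v\in S\subseteq V_{\abs{e_m(v)}-1}$, i.e.\ $c(v)\geq \abs{e_m(v)}-1$. For the term $\min_{u\in V}\abs{N(u)}$, take $S=V$, so $H[S]=H$: by definition of the minimum, every node already has at least $\delta:=\min_{u\in V}\abs{N(u)}$ neighbours in $H$, so $V=V_\delta$ and hence $c(v)\geq\delta$ for every $v$. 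Combining the two lower bounds gives $c(v)\geq\max(\abs{e_m(v)}-1,\ \delta)=\mathbf{LB}(v)$.

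I do not expect a genuine obstacle; the one point requiring care is the ``strongly induced'' subtlety, namely verifying that $H[e_m(v)]$ actually retains the hyperedge $e_m(v)$ — which it does precisely because that hyperedge is a subset of the inducing set — since otherwise the neighbour count inside the induced subhypergraph could drop. Stating the monotonicity fact up front reduces the remainder to these two short containment checks.
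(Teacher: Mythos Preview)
Your proof is correct and, for the $\abs{e_m(v)}-1$ bound, actually more direct than the paper's. The paper argues by contradiction: assuming $c(v)<\abs{e_m}-1$, it builds $V'=V_{\abs{e_m}-1}\cup e_m$, verifies every node of $V'$ has $\geq\abs{e_m}-1$ neighbours in $H[V']$, and concludes that $H[V_{\abs{e_m}-1}]$ was not maximal. You instead invoke the monotonicity/union characterization of $H_k$ up front and exhibit $H[e_m(v)]$ directly as a witness subhypergraph, which sidesteps both the contradiction wrapper and the auxiliary union with $V_{\abs{e_m}-1}$. The paper's detour through $V'$ is unnecessary once one has your monotonicity fact, so your route is shorter; the paper's route, on the other hand, avoids stating that fact explicitly by working only with maximality. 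For the $\min_{u}\abs{N(u)}$ term both proofs are essentially identical one-liners.
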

\begin{algorithm}[!tb]
\begin{algorithmic}[1]
\caption{\label{alg:Improvednaivecore}\small Efficient peeling algorithm with bounding: \textbf{E-Peel}}
\scriptsize
\Require Hypergraph $H = (V,E)$
\Ensure Core-number $c(u)$ for each node $u \in V$
\ForAll {$u \in V$}
    \State Compute $\mathbf{LB}(u)$
    \State $B[\mathbf{LB}(u)] \gets B[\mathbf{LB}(u)] \cup \{u\}$
    \State $setLB(u) \gets True$
\EndFor
\ForAll {$k = 1,2,\ldots, \abs{V}$}
    \While{$B[k] \neq \phi$}
        \State Remove a node $v$ from $B[k]$
        \If{$setLB(v)$}
            \State $B[\abs{N_V(v)}] \gets B\left[\max\left(\abs{N_V(v)},k\right)\right] \cup \{v\}$ 
            \State $setLB(v) \gets False$
        \Else
            \State $c(v) \gets k$
            \ForAll{$u \in N_V(v)$}
                \If{ $\neg setLB(u)$}
                    \State Move $u$ to $B\left[\max{\left(\abs{N_{V\setminus \{v\}}(u)},k \right)}\right]$
                \EndIf
            \EndFor
            \State $V \gets V \setminus \{v\}$
        \EndIf
    \EndWhile
\EndFor
\State Return $c$
\end{algorithmic}
\end{algorithm}
%
\begin{proof}
Notice that $c(v) \geq  \min_{u \in V}\abs{N(u)}$, since all nodes in the input hypergraph must be
in the $(\min_{u \in V}\abs{N(u)})$-core. Next, we show that $c(v) \geq \abs{e_m(v)} - 1$, by contradiction.
Let, if possible, $ \abs{e_m} - 1 > c(v)$. This implies that
$v$ is not in the $(\abs{e_m} - 1)$-core, denoted by $H[V_{\abs{e_m} - 1}]$.
Consider $V' = V_{\abs{e_m}-1} \cup \{u:u \in e_m\}$.
Clearly, $\abs{V'} \geq \abs{V_{\abs{e_m} - 1}} + 1$, since $v \notin V_{\abs{e_m} -1}$,
but $v \in e_m$, so $v\in V'$. We next show that $H[V_{\abs{e_m} - 1}]$ is
not the maximal subhypergraph where every node has at least $\abs{e_m} - 1$ neighbors,
which is a contradiction.

To prove non-maximality of $H[V_{\abs{e_m} - 1}]$,
it suffices to show that for any $u \in V'$, $N_{V'}(u) \geq \abs{e_m} - 1$.
If $u \in V_{\abs{e_m} - 1} \subset V'$,
$\abs{N_{V'}(u)} \geq \abs{N_{V_{\abs{e_m} - 1}}(u)} \geq \abs{e_m} - 1$.
If $u \in e_m$, $N_{V'}(u) \geq N_{e_m}(u) = \abs{e_m} - 1$.

Since our premise $\abs{e_m} - 1 > c(v)$ contradicts the fact
that $H[V_{\abs{e_m} - 1}]$ is the ($\abs{e_m} -1$)-core, $\abs{e_m} - 1 \leq c(v)$.
\end{proof}

\spara{Algorithm.} Our efficient peeling approach is given in~\Cref{alg:Improvednaivecore}: \textbf{E-Peel}.
In Line~14, we do not recompute neighborhoods and update cells for those neighboring nodes $u$
for which $setLB$ is True, thereby improving the efficiency.
{\em $setLB$ is True for nodes for which $\mathbf{LB}()$ is known, but $N_V()$ at the current iteration is unknown}.
\begin{exam}
\Cref{fig:nbr1}(a) illustrates the improvements made by~\Cref{alg:Improvednaivecore} in
terms of neighborhood recomputations and cell updates.
Since $\mathbf{LB}(x) = 1$ and every neighbor $u \in \{a,b,c,d\}$ has $\mathbf{LB}(u) = 5$,
~\Cref{alg:Improvednaivecore} computes $c(x)$ before $c(u)$.
Due to the local lower-bound-based initialization in Lines~1-4 and
ascending iteration order of $k$, $x$ is popped before $u$.
The first time $x$ is popped from $B$,
$x$ goes to $B[4]$ due to Line~9
and $setLB(x)$ is set to False in Line~10.
The next time $x$ is popped (also at iteration $k=4$),
the algorithm computes $c(x)$ in Line~12. $setLB(u)$
is still True for $u$ (Lines~13-15), as the default initialization
of $setLB(u)$ has been True (Line~4). Hence,
none of the computations in Line~15 is executed for $u$. Intuitively, since the $5$-core does not contain $x$, deletion of $x$ and the yellow hyperedges should be inconsequential to computing $c(u)$ correctly.
$c(u)$ is computed in the next iteration ($k=5$) after
it is popped and is reassigned to $B[5]$, and $setLB(u)$ becomes False.
\end{exam}
\spara{Proof of correctness.} The proof of correctness follows that of \Cref{alg:naivecore}.
When a node $v$ is extracted from $B[k]$ at iteration $k$, we check $setLB(v)$.
{\bf (1)} Lemma~\ref{lem:localLB} ensures that, if we extract a node $v$ from $B[k]$ and $setLB(v)$ is True,
then $c(v)\geq k$. In that case, we compute the current value of $N_V()$, where $V$ denotes the set of remaining nodes,
and insert $v$ into the cell:  $B\left[\max\left(\abs{N_V(v)},k\right)\right]$.
We also set $setLB(v)$ = False, implying that $N_V(v)$ at the current iteration is known.
{\bf (2)} In contrast, if we extract a node $v$ from $B[k]$ and $setLB(v)$ is False, this indicates that
$c(v)=k$, following the same arguments as in \Cref{alg:naivecore}.
In this case, we correctly assign $v$'s core-number to $k$,
and $v$ is removed from $V$. Moreover, for those neighbors $u$ of $v$ for which $setLB(u)$ is True, implying
that $c(u)\geq k$, we appropriately delay recomputing their neighborhood sizes.
\begin{figure}
    \centering
    \subfloat{\includegraphics[width = 0.2\textwidth]{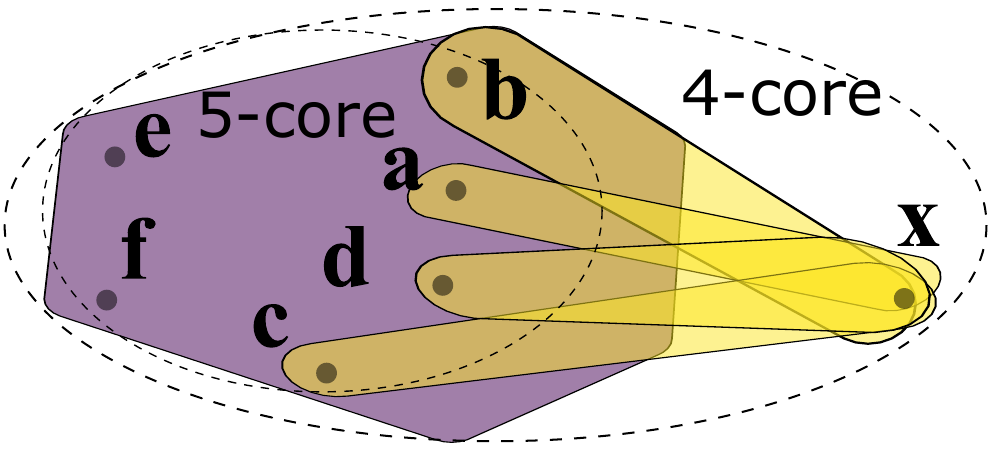}}
    \hspace{2mm}
    \subfloat{\includegraphics[width = 0.2\textwidth,trim = {1.7cm 1.7cm 0.3cm 1.5cm},clip]{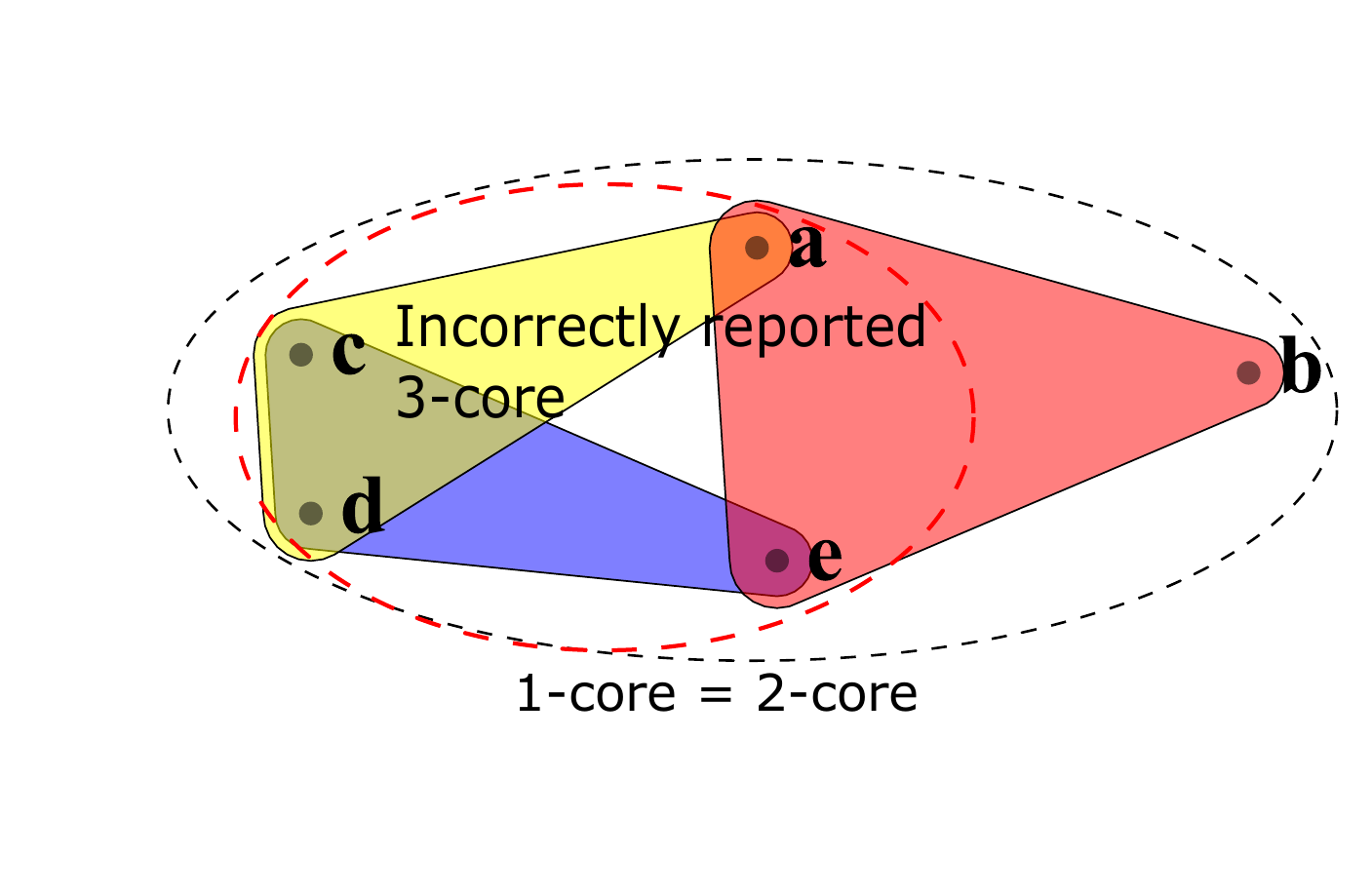}}
    \vspace{-2mm}
    \caption{\footnotesize (a) During $x$'s core-number computation,~\Cref{alg:Improvednaivecore} does not perform neighborhood recomputations and cell updates for $x$'s neighbors $\{a,b,c,d\}$; thus saving four redundant neighborhood recomputations and cell updates. (b) For any $n>1$, the $h$-index (\Cref{def:wrongHind}) of node $a$ never reduces from $h_{a}^{(1)} = \mathcal{H}(2,3,3,4) = 3$ to its core-number $2$: $\lim_{n \to \infty} h_{a}^{(n)} = 3 \neq$ core-number of $a$. Because $a$ will always have at least 3 neighbors ($c$, $d$, and $e$) whose $h$-indices are at least $3$. As a result, the na\"ive approach reports an incorrect $3$-core.}
    \label{fig:nbr1}
    \vspace{-4mm}
\end{figure}

\spara{Time complexity.}
Following similar analysis as in ~\Cref{alg:naivecore}, \textbf{E-Peel} has time complexity
$\bigO\big(\alpha\cdot|V|\cdot d_{nbr}\cdot(d_{nbr}+d_{hpe})\big)$, where $\alpha\leq 1$ is the
ratio of the number of neighborhood recomputations in ~\Cref{alg:Improvednaivecore} over
that in ~\Cref{alg:naivecore}. Based on our experimental results in \S\ref{sec:experiments},
\textbf{E-Peel} can be up to 17x faster than \textbf{Peel}.
\subsection{Local Algorithm}
\label{sec:local}
Although \textbf{Peel} and its efficient variant \textbf{E-Peel} correctly computes core-numbers, they must modify the remaining hypergraph at every iteration by peeling nodes and hyperedges. Peeling impacts the hypergraph data structure globally and must be performed in sequence. Thus, there is little scope for making \textbf{Peel} and \textbf{E-Peel} more efficient via parallelization. Furthermore, they are not suitable in a time-constrained setting where a high-quality partial solution is sufficient. We propose a novel local algorithm that is able to provide partial solutions, amenable to optimizations, and parallelizable.

\noindent\textbf{Na\"ive adoption of local algorithm in hypergraphs: a negative result.}
Eugene et al.~\cite{eugene15} adopt
Hirsch's index~\cite{hirsch2005index}, popularly known as the \emph{$h$-index}~(\Cref{defn:H}),
to propose a local algorithm for core computation in graphs. This algorithm relies on a
recurrence relation
that defines higher-order $h$-index (\Cref{def:wrongHind}).
The local algorithm for graph core computation starts by computing $h$-indices of order $0$ for every node in the graph.
At each iteration $n>0$, it computes order-$n$ $h$-indices using order-$(n-1)$ $h$-indices computed in the previous iteration.
It is well-known that higher-order $h$-indices monotonically converge to the core-number of that node in the graph.
\begin{defn}[$\mathcal{H}$-operator~\cite{eugene15,hirsch2005index}]
\label{defn:H}
Given a finite set of positive integers $\{x_1,x_2,\ldots,x_t\}$,
$\mathcal{H}\left(x_1,x_2,\ldots,x_t\right) = y>0$,
where $y$ is the maximum integer such that there exist at least $y$ elements in $\{x_1,x_2,\ldots,x_t\}$, each of which is at least $y$.
\end{defn}
\begin{exam}[$H$-operator]
$\mathcal{H}(1,1,1,1) = 1$,
$\mathcal{H}(1,1,1,2) = 1$,
$\mathcal{H}(1,1,2,2) = 2$,
$\mathcal{H}(1,2,2,2) = 2$,
$\mathcal{H}(1,2,3,3) = 2$,
$\mathcal{H}(1,3,3,3) = 3$
\end{exam}
\begin{defn}[$h$-index of order $n$~\cite{eugene15}]
\label{def:wrongHind}
Let $\{u_1,u_2,\ldots,u_t\}$ be the set of neighbors of node $v \in V$ in graph $G = (V,E)$.
The $n$-order $h$-index of node $v \in V$, denoted as $h_v^{(n)}$, is defined for any $n \in \mathbb{N}$ by the recurrence relation
\begin{small}
\begin{equation}
\label{eqn:graphhn1}
    h_v^{(n)}=
    \begin{cases}
      \abs{N(v)} & n=0\\
      \mathcal{H}\left(h^{(n-1)}_{u_1},h^{(n-1)}_{u_2},\ldots,h^{(n-1)}_{u_t}\right) & n \in \mathbb{N}\setminus \{0\}
  \end{cases}
\end{equation}
\end{small}
\end{defn}
For neighborhood-based hypergraph core decomposition via local algorithm,
we define $h_v^{(0)}$ as the number of neighbors of node $v$ in hypergraph $H = (V,E)$
(instead of graph $G$). The definition of $h_v^{(n)}$ for $n>0$ remains the same.
However, this {\em direct adoption of local algorithm to compute hypergraph cores
does not work}. Although one can prove that the sequence $(h_v^{(n)})$ adopted for hypergraph
has a limit, that value in-the-limit is not necessarily the core-number $c(v)$ for every
$v \in V$. For some node, the value in-the-limit of its $h$-indices is strictly greater than the core-number of that node.
The reason is as follows. $\mathcal{H}$-operator acts as both necessary and sufficient condition
for computing graph cores. It has been shown that the subgraph induced by $G[S]$, where $S$ contains all neighbors $u$ of a node $v$ such that $h_u^{(\infty)} \geq c(v)$, satisfies $h_v^{(\infty)} = c(v)$~\cite[p.5 Theorem 1]{eugene15}. However, ~\Cref{def:wrongHind} is not sufficient to show $h_v^{(\infty)} = c(v)$ for a hypergraph. Because it is not guaranteed that $v$ will have at least $h_v^{\infty}$ neighbors in the subhypergraph $H[\{u: h_u^{\infty} \geq h_v^{\infty}\}]$ that is reported as the $c(v)$-core. So, the reported $c(v)$-core can be incorrect.
\begin{exam}
\label{ex:wronglocal}
For the hypergraph in~\Cref{fig:nbr1}(b), the values in-the-limit of $h$-indices are $h_{a}^{(\infty)} = h_{c}^{(\infty)} = h_{d}^{(\infty)}= h_{e}^{(\infty)} = 3$ and $h_{b}^{(\infty)} = 2$. No matter how large $n$ is chosen,~\Cref{eqn:graphhn1} does not help $h_{a}^{(n)}$ to reach the correct core-number (=2) for $a$.
Three neighbors of node $a$, namely $c$, $d$, and $e$ have their $h^{\infty}$-values at least $3 = h_v^{\infty}$ in ~\Cref{fig:nbr1}(b). But, $a$ does not have at least $3$ neighbors in the subhypergraph
$H[\{a, c, d , e \}] = H[\{u: h_u^{\infty} \geq h_v^{\infty}\}]$. Thus, the $3$-core $H[\{a, c, d , e \}]$ reported by the na\"ive $h$-index based local approach is incorrect. The reason is that $a$ and $e$ are no longer neighbors to each other in $H[\{a, c, d , e \}]$ due to the absence of $b$.
\end{exam}
\begin{algorithm}[!tb]
\begin{algorithmic}[1]
\caption{\label{alg:localcore}\small Local algorithm with local coreness constraint: \textbf{Local-core}}
\scriptsize
\Require Hypergraph $H = (V,E)$
\Ensure Core-number $c(v)$ for each node $v \in V$
\ForAll {$v \in V$}
    \State $\hat{h}_{v}^{(0)} = h_{v}^{(0)} \gets \abs{N(v)}$.
\EndFor

\ForAll {$n = 1,2,\ldots, \infty$}

    \ForAll {$v \in V$}
         \State $h_v^{(n)} \gets \min\left(\mathcal{H}(\{\hat{h}_u^{(n-1)} : u \in N(v)\}), \hat{h}_v^{(n-1)} \right)$
    \EndFor
    \ForAll {$v \in V$}
        \State $c(v) \gets \hat{h}_v^{(n)} \gets $ \textbf{Core-correction ($v$, $h_v^{(n)}$, $H$)}
    \EndFor
    \If {$\forall v, \hat{h}_v^{(n)} == h_v^{(n)}$}
       \State \textbf{Terminate Loop}
    \EndIf

\EndFor
\State Return $c$
\end{algorithmic}
\end{algorithm}
\noindent\textbf{Local algorithm with local coreness constraint.}
Motivated by the observation mentioned above,
we define a constraint as a sufficient condition,
upon satisfying which we can guarantee that for every node
$v$,  1) the sequence of its $h$-indices converges and 2) the value in-the-limit $h_v^{(\infty)}$ is such that
$v$ has at least $h_v^{(\infty)}$ neighbors in the subhypergraph induced by $H[u: h_u^{(\infty)} \geq h_v^{(\infty)}]$. The first condition is critical for algorithm termination. The second condition is critical for correct computation of core-numbers as discussed in~\Cref{ex:wronglocal}.
\begin{defn}[Local coreness constraint (LCC)]
\label{defn:lccsat}
Given a positive integer $k$, for any node $v \in V$, let $H^+(v) = (N^+(v), E^+(v))$ be the subhypergraph of $H$ such that for any $n>0$
\begin{small}
\begin{align}
& E^+(v) = \{e \in Incident(v): h_u^{(n)} \geq k, \forall u \in e \} \nonumber \\
& N^+(v) = \{u: u \in e, \forall e\in E^+(v)\} \setminus \{ v\}
\end{align}
\end{small}
Local coreness constraint (for node $v$) is satisfied at $k$, denoted as $LCCSAT(k)$, iff $\exists H^+(v)$ with at least $k$ nodes, i.e., $\abs{N^+(v)} \geq k$.
Here, $Incident(v)$ is the set of hyperedges incident on $v$.
\end{defn}
We define \emph{Hypergraph $h$-index} based on the notion of $LCCSAT$ and a re-defined recurrence relation for $h_v^{(n)}$.
\begin{defn}[Hypergraph $h$-index of order $n$]
\label{defn:cor_core}
The Hypergraph $h$-index of order $n$ for node $v$, denoted as $\hat{h}_v^{(n)}$,
is defined for any natural number $n \in \mathbb{N}$ by the following recurrence relation:
\begin{small}
\begin{equation}
\label{eqn:hhatn}
    \hat{h}_v^{(n)}=
    \begin{cases}
      \abs{N(v)} & n=0\\
      h_v^{(n)}& n>0 \land LCCSAT(h_v^{(n)})\\
      \max\{ k \mid k<h_v^{(n)} \land LCCSAT(k) \} & n>0 \land \neg LCCSAT(h_v^{(n)})
  \end{cases}
\end{equation}
\end{small}
\end{defn}
$h_v^{(n)}$ is a newly defined recurrence relation
on hypergraphs:
\begin{small}
\begin{equation}
\label{eqn:hn}
    h_v^{(n)}=
    \begin{cases}
      \abs{N(v)} & n=0\\
      \min\left(\mathcal{H}\left(\hat{h}^{(n-1)}_{u_1},\hat{h}^{(n-1)}_{u_2},\ldots,\hat{h}^{(n-1)}_{u_t}\right), \hat{h}_v^{(n-1)} \right) & n \in \mathbb{N}\setminus \{0\}
  \end{cases}
\end{equation}
\end{small}

The recurrence relations in Equations~(\ref{eqn:hhatn}) and~(\ref{eqn:hn}) are coupled:
$\hat{h}^{(n)}_v$ depends on the evaluation of $h^{(n)}_v$, which in turn depends on the evaluation of $\hat{h}^{(n-1)}_v$.
Such inter-dependency causes both sequences to converge, as proven in our correctness analysis.

\noindent\textbf{Local-core} (\Cref{alg:localcore}) initializes $h_v^{(0)}$ and $\hat{h}_v^{(0)}$ to $\abs{N(v)}$ for every node $v \in V$ (Lines~1-2) following~\Cref{eqn:hn} and~\Cref{eqn:hhatn}, respectively. At every iteration $n>0$, ~\Cref{alg:localcore} first computes $h_v^{(n)}$ for every node $v \in V$ (Lines~4-5) following~\Cref{eqn:hn}. In order to decide whether the algorithm should terminate at iteration $n$ (Lines~8-9), the algorithm computes $\hat{h}_v^{(n)}$ using ~\Cref{alg:core_correct}. ~\Cref{alg:core_correct} checks for every node $v \in V$, whether $LCCSAT(h_v^{(n)})$ is True or False. Following~\Cref{eqn:hhatn}, if $LCCSAT(h_v^{(n)})$ is True it returns $h_v^{(n)}$; if $LCCSAT(h_v^{(n)})$ is False, a suitable value lower than $h_v^{(n)}$ is returned. The returned value $\hat{h}_v^{(n)}$ is considered as the estimate of core-number $c(v)$ at that iteration 
(Line 7).
\begin{algorithm}[!tb]
\begin{algorithmic}[1]
\caption{\label{alg:core_correct}\small \textbf{Core-correction} procedure }
\scriptsize
\Require Node $v$, $v$'s hypergraph $h$ index $h_v^{(n)}$, hypergraph $H$
\While{$h_v^{(n)} > 0$}
\State Compute $E^+(v) \gets \{e \in Incident(v): h_u^{(n)} \geq h_v^{(n)}, \forall u \in e\}$
\State Compute $N^+(v) = \{u: u \in e, \forall e\in E^+(v)\} \setminus \{ v\}$
\If {$\abs{N^+(v)} \geq h_v^{(n)}$}
    \State return $h_v^{(n)}$
\Else
    \State $h_v^{(n)} \gets h_v^{(n)} - 1$
\EndIf
\EndWhile
\end{algorithmic}
\end{algorithm}
To compute $LCCSAT(h_v^{(n)})$,~\Cref{alg:core_correct} checks in Line 4 if the subhypergraph $H^+(v) = (N^+(v),E^+(v))$ constructed in Lines~2-3 contains at least $h_v^{(n)}$ neighbors of $v$. If $v$ has at least $h_v^{(n)}$ neighbors in the subhypergraph $H^+(v)$, due to~\Cref{eqn:hhatn} no correction to $h_v^{(n)}$ is required. In this case, ~\Cref{alg:core_correct} returns $h_v^{(n)}$ in Line 5.
If $v$ does not have at least $h_v^{(n)}$ neighbors in the subhypergraph $H^+(v)$ (Line 4),
$LCCSAT(h_v^{(n)})$ is False by~\Cref{defn:lccsat}. Following~\Cref{eqn:hhatn}, a correction to $h_v^{(n)}$ is required. In search for a suitable corrected value lower than $h_v^{(n)}$ and a suitable subhypergraph $H^+(v)$, Line 7 keeps reducing $h_v^{(n)}$ by 1. Reduction to $h_v^{(n)}$ causes $\abs{N^+(v)}$ to increase, 
while $h_v^{(n)}$ 
decreases, until the condition in Line 4 is satisfied. At some point a suitable subhypergraph must be found.

~\Cref{thm:coreconv} proves that the numbers returned by~\Cref{alg:localcore} at that point indeed coincide with the true core-numbers. The termination condition $\hat{h}_v^{(n)} = h_v^{(n)}$ must be satisfied at some point because~\Cref{thm:interleave} proves that $\lim_{n \to \infty} \hat{h}_v^{(n)} = \lim_{n \to \infty} h_v^{(n)}$ 
$\forall v \in V$.
\begin{exam}
Consider iteration $n=1$ of ~\Cref{alg:localcore}, when the input to the algorithm is the hypergraph in~\Cref{fig:nbr1}(b). The algorithm corrects the core-estimate $h_a^{(1)} = 3$ to $\hat{h}_a^{(1)} = 2$ in Line 7. Because in Line 4 of \textbf{Core-correction}, the algorithm finds that for $h_a^{(1)} = 3$, $a$ only has $\abs{N^+(v)} = 2$ neighbors in $H^+(v) = H[\{a, c, d , e \}]$ thus violating the condition that $\abs{N^+(v)}> h_a^{(1)}$. Hence $h_a^{(1)}$ needs to be corrected to satisfy $LCCSAT(h_a^{(1)})$. In Line 7 of \textbf{Core-correction}, it reduces $h_a^{(1)}$ by one. Subsequently for  $h_a^{(1)}=2$ the subhypergraph $H^+(a) = H$ indeed satisfies $LCCSAT(h_a^{(1)})$. This is how
~\Cref{alg:localcore} corrects the case of incorrect core-numbers discussed in~\Cref{ex:wronglocal}. 
\end{exam}
\subsection{Theoretical Analysis of Local-core}
\label{subsec:proofs_localcore}
\spara{Proof of Correctness.}
~\Cref{alg:localcore} terminates after a finite number of iterations because by~\Cref{thm:interleave}, for
any $v \in V$, sequences $(h_v^{(n)})$ and $(\hat{h}_v^{(n)})$ are finite and have the same limit.
At the limit, $\forall v \in V$, $\lim_{n \to \infty} h_v^{(n)} = \lim_{n \to \infty} \hat{h}_v^{(n)}$ holds and it
follows from~\Cref{thm:coreconv} that $\forall v \in V,~\lim_{n \to \infty} h_v^{(n)} = \lim_{n \to \infty} \hat{h}_v^{(n)} = c(v)$.
Thus, when the algorithm terminates, it returns the correct core-number for every node $v \in V$.

%
\begin{theor}
\label{thm:interleave}
For any node $v \in V$ of a hypergraph $H = (V,E)$, the two sequences
$(h_v^{(n)})$ defined by~\Cref{eqn:hn} and $(\hat{h}_v^{(n)})$ defined by~\Cref{eqn:hhatn} have the same limit:
$\lim_{n \to \infty} h_v^{(n)} = \lim_{n \to \infty} \hat{h}_v^{(n)}$.
\end{theor}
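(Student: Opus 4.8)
The plan is to split the statement into (i) both sequences $(h_v^{(n)})_{n}$ and $(\hat h_v^{(n)})_{n}$ are non-increasing in $n$ for every $v \in V$ --- hence, being sequences of non-negative integers, they are eventually constant and in particular converge --- and (ii) their limits agree. Granting (i), part (ii) is short: \Cref{eqn:hhatn} always yields $\hat h_v^{(n)} \le h_v^{(n)}$ (equality when $n=0$ or when $LCCSAT(h_v^{(n)})$ holds, a strict drop otherwise), so $\lim_n \hat h_v^{(n)} \le \lim_n h_v^{(n)}$; conversely \Cref{eqn:hn} gives $h_v^{(n)} \le \hat h_v^{(n-1)}$ for every $n \ge 1$, and letting $n \to \infty$ gives $\lim_n h_v^{(n)} \le \lim_n \hat h_v^{(n)}$. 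So the work is entirely in the monotonicity claim (i).

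Before the induction I would record two monotonicity properties of the local coreness constraint of \Cref{defn:lccsat}. Read $N^+(v)$ there as the set of neighbours of $v$ inside $H^+(v)$, i.e.\ the union of the hyperedges of $E^+(v)$ minus $v$; write $E^+_n(v,k)$ and $N^+_n(v,k)$ for the sets built from threshold $k$ and the order-$n$ values $(h_u^{(n)})_u$, and $LCCSAT_n(k)$ for $\abs{N^+_n(v,k)} \ge k$. Then: (a) \emph{threshold-monotonicity} --- if $k' \le k$ then $E^+_n(v,k) \subseteq E^+_n(v,k')$, hence $N^+_n(v,k) \subseteq N^+_n(v,k')$, hence $LCCSAT_n(k) \Rightarrow LCCSAT_n(k')$; in particular $\{k : LCCSAT_n(k)\}$ is a down-set and, writing $m_n(v) = \max\{k : LCCSAT_n(k)\}$, \Cref{eqn:hhatn} becomes simply $\hat h_v^{(n)} = \min(h_v^{(n)}, m_n(v))$. (b) \emph{order-monotonicity} --- if $h_u^{(n)} \le h_u^{(n-1)}$ for all $u$, then a hyperedge clearing the bar at order $n$ already cleared it at order $n-1$, so $E^+_n(v,k) \subseteq E^+_{n-1}(v,k)$, $N^+_n(v,k) \subseteq N^+_{n-1}(v,k)$, and $LCCSAT_n(k) \Rightarrow LCCSAT_{n-1}(k)$, whence $m_n(v) \le m_{n-1}(v)$. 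Both claims are immediate once $N^+(v)$ is read as a union, since union is monotone under inclusion.

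Now (i) follows by simultaneous induction on $n$ of the statement ``$h_u^{(n)} \le h_u^{(n-1)}$ and $\hat h_u^{(n)} \le \hat h_u^{(n-1)}$ for every $u \in V$''. The base case ($n=1$) is direct: $h_v^{(1)} = \min(\mathcal H(\abs{N(u_1)},\dots,\abs{N(u_t)}), \abs{N(v)}) \le \abs{N(v)} = h_v^{(0)}$, and then $\hat h_v^{(1)} \le h_v^{(1)} \le h_v^{(0)} = \hat h_v^{(0)}$. For the step, assume the statement at $n$ and prove it at $n+1$ in two passes. First, for every $v$, $h_v^{(n+1)} \le h_v^{(n)}$: by the induction hypothesis $\hat h_u^{(n)} \le \hat h_u^{(n-1)}$ for each neighbour $u$ of $v$, and $\mathcal H$ is non-decreasing in each argument, so $\mathcal H(\hat h_{u_1}^{(n)},\dots) \le \mathcal H(\hat h_{u_1}^{(n-1)},\dots)$; combined with $\hat h_v^{(n)} \le \hat h_v^{(n-1)}$, \Cref{eqn:hn} gives the bound. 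Second, for every $v$, $\hat h_v^{(n+1)} \le \hat h_v^{(n)}$: using property (b) with the $h$-monotonicity just established, $m_{n+1}(v) \le m_n(v)$, so by the rewritten form of \Cref{eqn:hhatn}, $\hat h_v^{(n+1)} = \min(h_v^{(n+1)}, m_{n+1}(v)) \le \min(h_v^{(n)}, m_n(v)) = \hat h_v^{(n)}$. This closes the induction, proving (i), and together with (ii) the theorem follows.

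The step I expect to be the crux is the second pass of the induction, namely showing $(\hat h_v^{(n)})_n$ is non-increasing through the Core-correction step. The obvious chain $\hat h_v^{(n+1)} \le h_v^{(n+1)} \le h_v^{(n)}$ is insufficient, because the order-$n$ corrected value $\hat h_v^{(n)}$ may itself be strictly below $h_v^{(n)}$; controlling it requires exactly the order-monotonicity of $LCCSAT$ (property (b)) and its down-set structure (property (a)). A minor point to nail down carefully is the intended meaning of $N^+(v)$ in \Cref{defn:lccsat} --- it must be the union over $E^+(v)$, i.e.\ the neighbours of $v$ within $H^+(v)$, since the whole argument rests on set-union monotonicity --- together with the observation that the corrected values never fall below $0$, so that a non-increasing integer sequence is genuinely forced to converge.
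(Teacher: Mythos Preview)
Your proof is correct, but it works harder than necessary and diverges from the paper's more direct argument. The paper avoids your simultaneous induction entirely by observing that the two inequalities you already record in part (ii) --- namely $\hat h_v^{(n)} \le h_v^{(n)}$ (immediate from \Cref{eqn:hhatn}) and $h_v^{(n)} \le \hat h_v^{(n-1)}$ (immediate from the $\min$ in \Cref{eqn:hn}) --- chain together into a single non-increasing \emph{interleaved} sequence
\[
h_v^{(0)} \ge \hat h_v^{(0)} \ge h_v^{(1)} \ge \hat h_v^{(1)} \ge h_v^{(2)} \ge \cdots,
\]
lower-bounded by $0$. Monotone convergence then gives a limit for the interleave, and since $(h_v^{(n)})$ and $(\hat h_v^{(n)})$ are its even and odd subsequences, they inherit that common limit. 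In other words, your part (i) is a free consequence of the two one-line facts you state for part (ii); the LCCSAT threshold- and order-monotonicity lemmas and the simultaneous induction across all nodes are never needed. What your route buys is an explicit handle on how the Core-correction step behaves monotonically in both $k$ and $n$ --- potentially useful elsewhere --- but for this theorem the paper's interleave trick is strictly simpler.
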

\eat{
\spara{Proof sketch.} Construct a new sequence $(h\hat{h}_v)$ by interleaving components from $(h_v^{(n)})$ and $(\hat{h}_v^{(n)})$ as the following:
\begin{small}
\begin{equation}
\label{eq:interleave}
(h\hat{h}_v) = (h_v^{(0)}, \hat{h}_v^{(0)},h_v^{(1)}, \hat{h}_v^{(1)},h_v^{(2)}, \hat{h}_v^{(2)}\ldots )
\end{equation}
\end{small}
It can be verified that this \emph{interleave sequence}~\cite[Defn~680]{thierry2017handbook}, denoted as $(h\hat{h}_v)$,
is monontonically non-increasing and is lower-bounded by $0$.
By Monotone convergence theorem~\cite{bartle}, $(h\hat{h}_v)$ has a limit. An interleave sequence has a limit if and only if its constituent sequence pairs are convergent
and have the same limit~\cite{thierry2017handbook}. Since $(h\hat{h}_v)$ has a limit,
the sequences $(h_v^{(n)})$ and $(\hat{h}_v^{(n)})$ converges to the same limit:
$\lim_{n \to \infty} h_v^{(n)} = \lim_{n \to \infty} \hat{h}_v^{(n)}$.
}
\begin{proof}
Construct a new sequence $(h\hat{h}_v)$ by interleaving components from $(h_v^{(n)})$ and $(\hat{h}_v^{(n)})$ as the following:
\begin{small}
\[ (h\hat{h}_v) = (h_v^{(0)}, \hat{h}_v^{(0)},h_v^{(1)}, \hat{h}_v^{(1)},h_v^{(2)}, \hat{h}_v^{(2)}\ldots )
\]
\end{small}
We first show that this \emph{interleave sequence}~\cite[Defn~680]{thierry2017handbook}, denoted as $(h\hat{h}_v)$,
is monontonically non-increasing and is lower-bounded by $0$.
Arbitrarily select any pair of successive components from $(h\hat{h}_v)$.

\noindent\textbf{Case 1 $(\hat{h}_v^{(n-1)}, h_v^{(n)})$:} For $n \geq 1$, if $\mathcal{H}\left(\hat{h}_{u_1}^{(n-1)}, \hat{h}_{u_2}^{(n-1)},\ldots,\hat{h}_{u_t}^{(n-1)} \right) > \hat{h}_v^{(n-1)}$, by the recurrence relation (\Cref{eqn:hn}) $h_v^{(n)} = \hat{h}_v^{(n-1)}$. Otherwise, $\hat{h}_v^{(n-1)} \geq \mathcal{H}\left(\hat{h}_{u_1}^{(n-1)}, \hat{h}_{u_2}^{(n-1)},\ldots,\hat{h}_{u_t}^{(n-1)} \right) = h_v^{(n)}$.
Here, the equality in the final expression is again due to the recurrence relation in \Cref{eqn:hn}. Thus, for any $n \geq 1,~\hat{h}_v^{(n-1)} \geq h_v^{(n)}$.

\noindent\textbf{Case 2 $(h_v^{(n)}, \hat{h}_v^{(n)})$: } For $n = 0$, $h_v^{(0)} = \hat{h}_v^{(0)}$ by ~\Cref{eqn:hhatn}.
For $n>0 \, \land \, LCCSAT(h_v^{(n)})$, again by~\Cref{eqn:hhatn}, $h_v^{(n)} =  \hat{h}_v^{(n)}$.
For $n>0 \, \land \, \neg LCCSAT(h_v^{(n)})$, $\hat{h}_v^{(n)}= \max \{k : k < h_v^{(n)} \land LCCSAT(k)\} < h_v^{(n)}$.
So, for any $n \geq 0$, irrespective of $LCCSAT(h_v^{(n)})$, $h_v^{(n)} \geq \hat{h}_v^{(n)}$.

Since for any arbitrarily chosen pair of successive components, the left component is not smaller than the right component,
the sequence $(h\hat{h})$ is monotonically non-increasing: $h_v^{(0)} \geq \hat{h}_v^{(0)} \geq h_v^{(1)} \geq \hat{h}_v^{(1)} \geq h_v^{(2)} \geq \hat{h}_v^{(2)} \geq \ldots$

Next, we prove that $h_v^{(n)} \geq 0$ and $\hat{h}_v^{(n)} \geq 0$ for all $n \geq 0$ by induction. For $n=0$ due to~\Cref{eqn:hn} and~\Cref{eqn:hhatn}, $h_v^{(0)} = \hat{h}_v^{(0)} = N(v) \geq 0$. Assume for $n=m$, $h_v^{(m)} \geq 0$ and $\hat{h}_v^{(m)} \geq 0$ (induction hypothesis). For $n=m+1$, the recurrence relation for $h_v^{(m+1)}$ is either $\abs{N(v)} \geq 0$, or output of an $\mathcal{H}$-operator (strictly greater than $0$ by~\Cref{defn:H}), or $\hat{h}_v^{(m)} \geq 0$ (by induction hypothesis). Considering all three cases: $h_v^{(m+1)} \geq 0$. For any $n=m+1$ as defined in~\Cref{eqn:hhatn}, $\hat{h}_v^{(m+1)}$ is either $\abs{N(v)} \geq 0$,
or $h_v^{(m+1)} \geq 0$ (just proven), or a value $k^* = \max \{k : k < h_v^{(m+1)} \land LCCSAT(k) \}$. The set $\{ k : k < h_v^{(m+1)} \land LCCSAT(k) \}$ must contain $0$. Because for any $v$, $h_v^{(m+1)}\geq 0$ (proven already) and $LCCSAT(0)$ is trivially True.
Since the set $\{ k : k < h_v^{(m+1)} \land LCCSAT(k) \}$ contains $0$, the maximum of this set $k^*$ must be at least $0$.
Considering all three cases: $\hat{h}_v^{(m+1)} \geq 0$ which completes our proof that the interleave sequence is lower-bounded by $0$.

It follows that $(h\hat{h}_v)$ is monotonically non-increasing and lower-bounded by $0$. By Monotone convergence theorem~\cite{bartle}, $(h\hat{h}_v)$ has a limit. An interleave sequence has a limit if and only if its constituent sequence pairs are convergent
and have the same limit~\cite{thierry2017handbook}. Since $(h\hat{h}_v)$ has a limit,
the sequences $(h_v^{(n)})$ and $(\hat{h}_v^{(n)})$ converges to the same limit:
$\lim_{n \to \infty} h_v^{(n)} = \lim_{n \to \infty} \hat{h}_v^{(n)}$.
\end{proof}
\begin{theor}
\label{thm:coreconv}
 If the local coreness-constraint is satisfied for all nodes $v \in V$ at the terminal iteration, the corrected $h$-index at the terminal iteration $\hat{h}_v^{(\infty)}$ satisfies: $\hat{h}_v^{(\infty)} = c(v)$.
\end{theor}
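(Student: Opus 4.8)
The plan is to squeeze $\hat{h}_v^{(\infty)}$ between $c(v)$ from both sides, simultaneously for every $v\in V$. Write $k^\ast=\hat{h}_v^{(\infty)}$; by \Cref{thm:interleave} this also equals $h_v^{(\infty)}=\lim_n h_v^{(n)}$. I would first record that the hypothesis ``$LCCSAT$ holds for every node at the terminal iteration'' is exactly the loop-termination condition $\hat{h}_v^{(n)}=h_v^{(n)}$: by \Cref{eqn:hhatn}, $\hat{h}_v^{(n)}=h_v^{(n)}$ iff $LCCSAT(h_v^{(n)})$, so at the terminal iteration $n^\ast$ we have $h_w^{(n^\ast)}=\hat{h}_w^{(n^\ast)}=\hat{h}_w^{(\infty)}$ for all $w$; this identity is what lets the level sets used by \textbf{Core-correction} (defined via the uncorrected $h^{(n^\ast)}$) be rewritten in terms of $\hat{h}^{(\infty)}$.

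For the lower bound $\hat{h}_v^{(\infty)}\ge c(v)$, I would prove by induction on $n$ the joint statement: for every $v$, $h_v^{(n)}\ge c(v)$ \emph{and} $\hat{h}_v^{(n)}\ge c(v)$, establishing the $h_v^{(n)}$ half for all nodes before the $\hat{h}_v^{(n)}$ half. The base case is immediate since $h_v^{(0)}=\hat{h}_v^{(0)}=\abs{N(v)}\ge c(v)$, as $v$ already has $\ge c(v)$ neighbors inside $H_{c(v)}\subseteq H$. For the step, $H_{c(v)}$ contains $v$ with at least $c(v)$ of its neighbors, each lying in $V_{c(v)}$ and hence having core-number $\ge c(v)$; by the induction hypothesis $\hat{h}^{(n-1)}\ge c(v)$ for each of them, so $\mathcal{H}$ applied to $v$'s neighbors' values is $\ge c(v)$, and the $\min$ with $\hat{h}_v^{(n-1)}\ge c(v)$ gives $h_v^{(n)}\ge c(v)$. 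For $\hat{h}_v^{(n)}$: if $LCCSAT(h_v^{(n)})$ it equals $h_v^{(n)}\ge c(v)$; otherwise it equals $\max\{k<h_v^{(n)}:LCCSAT(k)\}$, and I claim $LCCSAT(c(v))$ holds at iteration $n$ — the hyperedges of the strongly induced $H_{c(v)}$ incident on $v$ have all their nodes in $V_{c(v)}$, hence all with $h^{(n)}\ge c(v)$ (just proven), so they lie in $E^+(v)$ at level $c(v)$ and give $\abs{N^+(v)}\ge c(v)$; thus $c(v)$ is an admissible index and $\hat{h}_v^{(n)}\ge c(v)$. Passing to the limit yields $\hat{h}_v^{(\infty)}\ge c(v)$.

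For the upper bound $\hat{h}_v^{(\infty)}\le c(v)$ — the step that genuinely uses the hypothesis — set $S=\{u\in V:\hat{h}_u^{(\infty)}\ge k^\ast\}$, so $v\in S$ (the case $k^\ast=0$ is trivial). I would show $H[S]$ is a strongly induced subhypergraph in which every node has at least $k^\ast$ neighbors; then $S\subseteq V_{k^\ast}$ by maximality of the nbr-$k^\ast$-core, whence $c(v)\ge k^\ast=\hat{h}_v^{(\infty)}$. Fix $u\in S$. Since $LCCSAT(h_u^{(\infty)})$ holds at the terminal iteration, there is $H^+(u)=(N^+(u),E^+(u))$ with $\abs{N^+(u)}\ge h_u^{(\infty)}=\hat{h}_u^{(\infty)}\ge k^\ast$, where every $e\in E^+(u)$ is incident on $u$ and satisfies $h_w^{(\infty)}\ge h_u^{(\infty)}$ for all $w\in e$. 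For such $e$ and $w$, using the terminal-iteration identity, $\hat{h}_w^{(\infty)}=h_w^{(\infty)}\ge h_u^{(\infty)}=\hat{h}_u^{(\infty)}\ge k^\ast$, so $w\in S$; hence $e\subseteq S$, i.e., $e\in E[S]$. Therefore $N^+(u)\subseteq S$ and each node of $N^+(u)$ actually neighbors $u$ within $H[S]$, so $u$ has $\ge\abs{N^+(u)}\ge k^\ast$ neighbors in $H[S]$. Since this holds for all $u\in S$, the claim follows.

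Combining the two bounds gives $\hat{h}_v^{(\infty)}=c(v)$ for every $v$, and since \Cref{alg:localcore} assigns $c(v)\gets\hat{h}_v^{(n^\ast)}$ at the terminal iteration, it returns the correct core-numbers. The main obstacle I expect is the upper-bound direction: choosing the witnessing set $S$, and above all turning ``$u$ has $\ge k^\ast$ neighbors in $H^+(u)$'' into ``$u$ has $\ge k^\ast$ neighbors in the \emph{strongly induced} $H[S]$'' — this works only because membership of a node $w$ in a hyperedge of $E^+(u)$ forces $\hat{h}_w^{(\infty)}\ge\hat{h}_u^{(\infty)}\ge k^\ast$, so no hyperedge of $E^+(u)$ is lost when passing to $H[S]$; and it is precisely here that \Cref{thm:interleave} (equality of the two limits, hence of the uncorrected and corrected level sets at termination) is indispensable. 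The lower bound is routine once the joint induction is ordered correctly.
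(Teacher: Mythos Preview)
Your proposal is correct and follows essentially the same two-sided squeeze as the paper: the level-set argument with $S=\{u:\hat{h}_u^{(\infty)}\ge k^\ast\}$ for $c(v)\ge\hat{h}_v^{(\infty)}$ is exactly the paper's construction of $N'$ and $H^+[N']$, and your joint induction for $\hat{h}_v^{(n)}\ge c(v)$ is the content of the paper's Lemma~\ref{lem:LBH} specialized to $H'=H_{c(v)}$. Your explicit verification that $LCCSAT(c(v))$ holds at every iteration---so that even in the third case of \Cref{eqn:hhatn} the corrected value cannot drop below $c(v)$---is in fact slightly more careful than the paper's corresponding step in the proof of Lemma~\ref{lem:LBH}.
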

\eat{
\spara{Proof sketch.} Given any $v \in V$, we show that $c(v) \geq \hat{h}_v^{\infty}$, followed by showing $\hat{h}_v^{\infty} \geq c(v)$.
The former holds since it can be proved that $v \in \hat{h}_v^{\infty}$-core. However, by definition of core-number, $c(v)$ is the largest integer
for which $v \in c(v)$-core. Thus, $c(v) \geq \hat{h}_v^{\infty}$. Next, $\hat{h}_v^{\infty} \geq c(v)$ can be proved by induction on the number
of iteration.
}
\begin{proof}
Given any $v \in V$, we first show that $c(v) \geq \hat{h}_v^{\infty}$, followed by showing $\hat{h}_v^{\infty} \geq c(v)$.

Let us construct $N' \subseteq V$ where every node $w \in N'$ has their converged value $\hat{h}_{w}^{\infty} \geq \hat{h}_v^{\infty}$. Since at the terminal iteration, all such nodes $w$ satisfies LCC at their respective $\hat{h}_w^{\infty}$, clearly $w$ also satisfies LCC at any integer $k < \hat{h}_{w}^{\infty}$, including $k = \hat{h}_v^{\infty}$. By definition of $LCCSAT(\hat{h}_v^{\infty})$ (for $w$), there exists a corresponding subhypergraph $H^+(w) = (N^+(w),E^+(w))$ such that any $u \in N^+(w)$ satisfies $\hat{h}_u^{\infty} \geq \hat{h}_v^{\infty}$ and $\abs{N^+(w)} \geq \hat{h}_v^{\infty}$. Construct a new hypergraph $H^+[N'] = (\cup_{w} N^+(w), \cup_{w} E^+(w))$. Any node $u \in H^+[N']$ satisfies $\hat{h}_u^{\infty} \geq \hat{h}_v^{\infty}$ by construction of $H^+[N']$. As a result, for such $u$, the following holds:
$u \in N'$ (by construction of $N'$), $u$'s corresponding $N^+(u) \subseteq \cup_{w} N^+(w)$ (by construction of $H^+[N']$), and $\abs{N^+(u)} \geq \hat{h}_v^{\infty}$ (by definition of LCC). Hence every node $u$ in $H^+[N']$ has at least $\hat{h}_v^{\infty}$ neighbors in $H^+[N']$.

Since $\hat{h}_v^{\infty}$-core is the maximal subhypergraph where every node has at least $\hat{h}_v^{\infty}$ neighbors in that subhypergraph, $H^+[N']$ is a subhypergraph of $\hat{h}_v^{\infty}$-core: $H^+[N'] \subseteq \hat{h}_v^{\infty}$-core. Notice that $v \in H^+[N']$. By definition of core-number, $c(v)$ is the largest integer for which $v \in c(v)$-core. Thus, $c(v) \geq \hat{h}_v^{\infty}$.

Next, we prove that $\hat{h}_v^{\infty} \geq c(v)$. Let us denote the $c(v)$-core to be $H'$. Let us define $\mathbf{LB_{H'}} = \min_{u\in H'} \abs{N_{H'}(u)}$. Since every node $u$ in $c(v)$-core has at least $c(v)$ neighbors in that core: $\abs{N_{H'}(u)} \geq c(v)$ for any $u\in H'$. Therefore, $\mathbf{LB_{H'}} \geq c(v)$. 
Since $v \in H'$ 
and LCCSAT($\hat{h}_v^{\infty}$) is true, applying~\Cref{lem:LBH}: $\hat{h}_v^{\infty} \geq \mathbf{LB_{H'}} \geq c(v)$.
\end{proof}
\begin{lem}
\label{lem:LBH}
Let $\mathbf{LB_{H'}}$ be the minimum number of neighbors of any node in a subhypergraph $H'=(V',E')$ of $H=(V,E)$. For any $v \in V'$ and integer $n\geq 0$,
$LCCSAT(\hat{h}_{v}^{(n)}) \text{ is True} \implies  \hat{h}_{v}^{(n)} \geq \mathbf{LB_{H'}}$.
\end{lem}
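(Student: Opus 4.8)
The plan is to prove, by induction on $n$, the stronger claim that for \emph{every} node $u \in V'$ and every integer $n \geq 0$ both $h_u^{(n)} \geq \mathbf{LB_{H'}}$ and $\hat h_u^{(n)} \geq \mathbf{LB_{H'}}$ hold; taking $u = v$ then yields \Cref{lem:LBH} (the stated $LCCSAT$ hypothesis, available wherever the lemma is applied, turns out not to be needed for this argument). Strengthening the statement to range over \emph{all} of $V'$ at once is essential, since both~\Cref{eqn:hn} and the predicate $LCCSAT$ reach into the neighborhood of $u$ inside $H'$. The base case $n=0$ is routine: as $H'$ is a subhypergraph of $H$ we have $N_{H'}(u) \subseteq N(u)$, so $h_u^{(0)} = \hat h_u^{(0)} = \abs{N(u)} \geq \abs{N_{H'}(u)} \geq \mathbf{LB_{H'}}$ by the definition of $\mathbf{LB_{H'}}$ as the minimum of $\abs{N_{H'}(u)}$ over $u \in V'$.

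For the inductive step, assume the claim at $n-1$. I would first bound $h_u^{(n)}$ for $u \in V'$ via~\Cref{eqn:hn}: the term $\hat h_u^{(n-1)}$ is $\geq \mathbf{LB_{H'}}$ by the hypothesis, and for the $\mathcal{H}$-term note that $N_{H'}(u) \subseteq N(u)$ has size at least $\mathbf{LB_{H'}}$ and that every $u' \in N_{H'}(u)$ lies in $V'$ and hence satisfies $\hat h_{u'}^{(n-1)} \geq \mathbf{LB_{H'}}$ by the hypothesis; thus at least $\mathbf{LB_{H'}}$ of the arguments of $\mathcal{H}$ are $\geq \mathbf{LB_{H'}}$, which by~\Cref{defn:H} forces $\mathcal{H}(\ldots) \geq \mathbf{LB_{H'}}$. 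Hence $h_u^{(n)} \geq \mathbf{LB_{H'}}$ for every $u \in V'$.

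The crux is to carry this past the core-correction, i.e.\ to establish $\hat h_u^{(n)} \geq \mathbf{LB_{H'}}$. The key observation is that $LCCSAT(\mathbf{LB_{H'}})$ holds for each $u \in V'$ at iteration $n$: every hyperedge $e$ of $H'$ incident on $u$ is a hyperedge of $H$ (since $E' \subseteq E$) contained in $V'$, so by the bound just proved every node $w \in e$ satisfies $h_w^{(n)} \geq \mathbf{LB_{H'}}$, i.e.\ $e \in E^+(u)$ for the threshold $k = \mathbf{LB_{H'}}$; therefore $N^+(u) \supseteq N_{H'}(u)$ and $\abs{N^+(u)} \geq \abs{N_{H'}(u)} \geq \mathbf{LB_{H'}}$, which is exactly $LCCSAT(\mathbf{LB_{H'}})$. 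Now inspect~\Cref{eqn:hhatn}: if $LCCSAT(h_u^{(n)})$ holds then $\hat h_u^{(n)} = h_u^{(n)} \geq \mathbf{LB_{H'}}$; otherwise $h_u^{(n)} \neq \mathbf{LB_{H'}}$ (as $LCCSAT(\mathbf{LB_{H'}})$ does hold), so $h_u^{(n)} > \mathbf{LB_{H'}}$, which places $\mathbf{LB_{H'}}$ in $\{k : k < h_u^{(n)} \land LCCSAT(k)\}$ and forces $\hat h_u^{(n)} = \max\{k : k < h_u^{(n)} \land LCCSAT(k)\} \geq \mathbf{LB_{H'}}$. In both cases $\hat h_u^{(n)} \geq \mathbf{LB_{H'}}$, which closes the induction.

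The step I expect to be the main obstacle is the $LCCSAT(\mathbf{LB_{H'}})$ argument together with choosing the induction hypothesis correctly: one has to realize that bounding $h^{(n)}$ at $v$ alone does not suffice — the whole vertex set $V'$ must clear the threshold, so that the hyperedges of $H'$ incident on $v$ genuinely survive into $E^+(v)$ — and one must respect that $E^+$ admits a hyperedge only when \emph{all} of its nodes exceed the threshold, as well as check that the ``$\max$'' branch of~\Cref{eqn:hhatn} is well-defined (recall $LCCSAT(0)$ is trivially true, a fact already used in the proof of~\Cref{thm:interleave}) and is $\geq \mathbf{LB_{H'}}$. The remaining bookkeeping with $\mathcal{H}$ and with subhypergraph neighborhoods is routine.
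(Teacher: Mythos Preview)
Your proof is correct and shares the same inductive backbone as the paper's: induction on $n$ over all of $V'$ simultaneously, with the $\mathcal{H}$-operator bounded by noting that the $\geq \mathbf{LB_{H'}}$ neighbors of $u$ inside $H'$ all lie in $V'$ and hence inherit the inductive bound. Where you diverge is in handling the corrected value $\hat h_u^{(n)}$. The paper invokes the stated hypothesis $LCCSAT(\hat h_v^{(m+1)})$ to argue $\hat h_v^{(m+1)} = h_v^{(m+1)}$ and then bounds only $h_v^{(m+1)}$; you instead dispense with the hypothesis entirely, prove the stronger unconditional claim $\hat h_u^{(n)} \geq \mathbf{LB_{H'}}$, and close the correction branch by establishing $LCCSAT(\mathbf{LB_{H'}})$ directly (all hyperedges of $H'$ through $u$ survive into $E^+(u)$). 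What this buys you is a cleaner statement and an argument that is robust to the fact that $LCCSAT(\hat h_v^{(n)})$ is always true by construction of~\Cref{eqn:hhatn}, so the lemma's hypothesis is in fact vacuous; the paper's route leans on that hypothesis in a way that obscures this.
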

\begin{proof}
We prove by induction. For $n=0$ and $v\in V'$, $\hat{h}_v^{(0)} = h_v^{(0)} = \abs{N(v)} \geq \min_{u\in V'}\,\abs{N_{H'}(u)} = \mathbf{LB_{H'}}$.
Assume the statement to be true for $n=m$: $LCCSAT(\hat{h}_{v}^{(m)}) \text{ is True} \implies \hat{h}_{v}^{(m)} \geq \mathbf{LB_{H'}}$ for all $v \in V'$. We show that the statement is also true for $n=m+1$.

By definition of $\hat{h}_v^{(n)}$, since $LCCSAT(\hat{h}_v^{(m+1)})$ is true and $m+1 >0$, it follows that
$\hat{h}_v^{(m+1)} = h_v^{(m+1)}$.
But $ h_v^{(m+1)}$ can only assume two values for $m+1>0$: either $h_v^{(m+1)} = \hat{h}_v^{(m)}$ or $h_v^{(m+1)} =  \mathcal{H}\left(\hat{h}^{(m)}_{u_1},\hat{h}^{(m)}_{u_2},\ldots,\hat{h}^{(m)}_{u_t}\right)$. Here $u_1,\ldots,u_t$ are the neighbors of $v$ in $H$. In the former case, $h_v^{(m+1)} = \hat{h}_v^{(m)} \geq \mathbf{LB_{H'}}$ (by induction hypothesis). In later case,
$\mathcal{H}\left(\hat{h}^{(m)}_{u_1},\hat{h}^{(m)}_{u_2},\ldots,\hat{h}^{(m)}_{u_t}\right) \geq \mathbf{LB_{H'}}$ since there exist
at least $\mathbf{LB_{H'}}$ operands where each such operand have value at least $\mathbf{LB_{H'}}$. This is because
(1) $v \in V' \implies \mathbf{LB_{H'}} \leq \abs{N_{V'}(v)}$ (by definition of $\mathbf{LB_{H'}}$). 
Thus, the number of $u_i$'s such that $u_i \in V'$ is at least $\mathbf{LB_{H'}}$.
(2) Applying inductive hypothesis on operands $\{u_i: u_i \in V'\}$, $u_i$ satisfies LCC at  $\hat{h}^{(m)}_{u_i}$;
therefore $\hat{h}^{(m)}_{u_i} \geq \mathbf{LB_{H'}}$. Considering all cases, $\hat{h}_v^{(m+1)} \geq \mathbf{LB_{H'}}$.
\end{proof}

\spara{Time complexity.}
Assume that~\Cref{alg:localcore} terminates at iteration $\tau$ of the for-loop at Line~3. Each iteration has time-complexity $\bigO(\sum_v \abs{N(v)}(d(v)+\abs{N(v)}) + \sum_v \abs{N(v)})$, the first term is due to Lines~6-7 and the second term is due to Lines~4-5. Computing $\mathcal{H}$-operator requires hypergraph $h$-indices of $v$'s neighbors and costs linear-time: $\bigO(\abs{N(v)})$. Core-correcting $v$ requires at most $\abs{N(v)}$ iterations of while-loop (Line~1,~\Cref{alg:core_correct}), at each iteration constructing each of $E^+(v)$ and $N^+(v)$ costs $\bigO(d(v) + \abs{N(v)})$. Thus, \textbf{Local-core} has time complexity $\bigO\left(\tau *\sum_v\right.$  $\left.(d(v)\abs{N(v)} + \abs{N(v)}^2)\right)$.

\spara{Upper-bound on the number of iterations.}
To derive an upper-bound on the number of iterations $\tau$ required by \textbf{Local-core}, we define the notion of \emph{neighborhood hierarchy} for nodes in a hypergraph and prove that every node converges by the time the for-loop iterator in Line~3 (\Cref{alg:localcore}) reaches its neighborhood hierarchy.

\begin{defn}[Neighborhood hierarchy] Given a hypergraph $H$, the $i$-th neighborhood hierarchy, denoted by $N_i$, is the set of nodes that have the minimum number of neighbors in $H[V']$, where $V' = V \setminus \cup_{0\leq j<i}N_j$. Formally,
\begin{small}
\begin{equation}
    N_i = \{v: \argmin_{v \in V'} \abs{N_{H[V']} (v)} \}
\end{equation}
\end{small}
\end{defn}
\naheed{Consider the hypergraph $H$ in~\Cref{fig:intro_nbr}(a). Since \textit{Hall} has 4 neighbors, $N_0 = \{Hall\}$. After deleting \textit{Hall}, nodes \textit{Wallace} and \textit{Popiden} have the lowest number of neighbors $(4)$ in the remaining sub-hypergraph. Hence, $N_1 = \{Wallace, Popiden\}$. After deleting them, nodes in $R$ has the lowest number of neighbors $(6)$ in the remaining sub-hypergraph. Hence, $N_2 =R$. Finally, $N_3 = T$.}

\begin{lem}
\label{lem:nbrhier}
Given two vertices $v_i \in N_i$, $v_j \in N_j$ such that $i \leq j$, it holds that $c(v_i) \leq c(v_j)$.
\end{lem}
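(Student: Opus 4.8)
The plan is to obtain an exact description of the core-number of a node lying in the $i$-th neighborhood hierarchy and then read off monotonicity. Write $V^{(0)}=V$, $V^{(i+1)}=V^{(i)}\setminus N_i$, and $m_i=\min_{v\in V^{(i)}}\abs{N_{H[V^{(i)}]}(v)}$, so that $N_i$ is exactly the set of nodes of $V^{(i)}$ attaining $m_i$, $N_i=V^{(i)}\setminus V^{(i+1)}$, and $V^{(0)}\supseteq V^{(1)}\supseteq\cdots$. I will prove the stronger statement that $c(v)=\max_{l\le i}m_l$ for every $v\in N_i$; the Lemma follows at once, since $i\le j$ implies $\max_{l\le i}m_l\le\max_{l\le j}m_l$.

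For the lower bound $c(v)\ge\max_{l\le i}m_l$ (the case where this maximum equals $0$ being trivial), choose $l^{\ast}\le i$ with $m_{l^{\ast}}=\max_{l\le i}m_l$. By definition of $m_{l^{\ast}}$, every node of the strongly induced subhypergraph $H[V^{(l^{\ast})}]$ has at least $m_{l^{\ast}}$ neighbours inside it; since the nbr-$m_{l^{\ast}}$-core is the (unique, by~\Cref{th:unique}) maximal strongly induced subhypergraph with that property, every such subhypergraph is contained in it, so $V^{(l^{\ast})}\subseteq V_{m_{l^{\ast}}}$. As $l^{\ast}\le i$ we have $v\in N_i\subseteq V^{(i)}\subseteq V^{(l^{\ast})}$, hence $v\in V_{m_{l^{\ast}}}$ and $c(v)\ge m_{l^{\ast}}$.

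For the upper bound I will establish a ``sealing'' statement by induction on $j$: for every $k\ge 1$, if $\max_{l\le j}m_l\le k-1$ then $V_k\subseteq V^{(j+1)}$, and in particular $V_k\cap N_l=\emptyset$ for all $l\le j$. The base case $V_k\subseteq V^{(0)}=V$ is immediate. For the inductive step the hypothesis first yields $V_k\subseteq V^{(j)}$; then, were there a node $v\in V_k\cap N_j$, it would have only $m_j\le k-1$ neighbours in $H[V^{(j)}]$, but $V_k\subseteq V^{(j)}$ together with strong inducedness forces $E[V_k]\subseteq E[V^{(j)}]$, so $v$ would have strictly fewer than $k$ neighbours in the $k$-core $H[V_k]$, contradicting $v\in V_k$. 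Hence $V_k\cap N_j=\emptyset$ and $V_k\subseteq V^{(j+1)}$. Applying this with $j=i$ and $k=\max_{l\le i}m_l+1$ gives $N_i\cap V_k=\emptyset$, i.e.\ $c(v)\le\max_{l\le i}m_l$ for every $v\in N_i$.

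Combining the two bounds proves the characterization and hence the Lemma: for $v_i\in N_i$ and $v_j\in N_j$ with $i\le j$, $c(v_i)=\max_{l\le i}m_l\le\max_{l\le j}m_l=c(v_j)$. The two inductions and the bookkeeping with the nested sets $V^{(i)}$ are routine; the step that needs care is the inductive step of the sealing statement, where it is crucial that the $k$-core is \emph{strongly} (rather than weakly) induced — this is exactly what guarantees $E[V_k]\subseteq E[V^{(j)}]$, and hence that a node's neighbour count cannot increase when passing from $H[V^{(j)}]$ to $H[V_k]$; a weakly induced notion would break this step. I expect that to be the main obstacle; everything else reduces to the definitions and \Cref{th:unique}.
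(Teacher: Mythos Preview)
Your proof is correct. The paper dispatches this lemma in a single sentence, observing that it ``follows from the correctness of the peeling algorithm'' (\textbf{Peel}, Algorithm~\ref{alg:naivecore}): the sets $N_0, N_1, \ldots$ are exactly the batches in which \textbf{Peel} removes nodes, and the core-number it assigns to any $v \in N_i$ is the running threshold $k$ at the moment of removal, namely $\max_{l\le i} m_l$, which is non-decreasing in $i$. Your argument arrives at the same closed form $c(v)=\max_{l\le i}m_l$ but without invoking Algorithm~\ref{alg:naivecore}: the lower bound is the standard maximality argument, and your ``sealing'' induction is a direct, self-contained replacement for the peeling-correctness proof in \S\ref{sec:peeling}. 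The paper's route is shorter only because that work was already done earlier; your route is independent of the algorithmic machinery and yields the explicit formula as a byproduct, which is a cleaner way to state a purely structural lemma.
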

Lemma \ref{lem:nbrhier} follows from the correctness of the peeling algorithm.

\begin{theor}[individual node convergence]
\label{thm:convhier}
Given a node $v \in N_i$ in a hypergraph $H$, it holds that $\forall n \geq i,~\hat{h}^{(n)}_v = c(v)$ .
\end{theor}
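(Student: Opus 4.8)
The plan is to prove the statement by strong induction on $i$. Two facts do the ``stays constant'' part of the work for free: the sequence $(\hat{h}_v^{(n)})_n$ is monotonically non-increasing (established inside the proof of \Cref{thm:interleave}) and it converges to $c(v)$ (this is how \Cref{thm:coreconv} enters the correctness argument, via \Cref{thm:interleave}); hence $\hat{h}_v^{(n)} \ge c(v)$ for every $n$. So it suffices to establish the single inequality $\hat{h}_v^{(i)} \le c(v)$, since monotonicity then forces $\hat{h}_v^{(n)} = c(v)$ for all $n \ge i$.

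\emph{Base case $i=0$.} A node $v \in N_0$ has globally minimum neighborhood size, so every node of $H$ has at least $\abs{N(v)}$ neighbors, $H$ itself is the $\abs{N(v)}$-core, and $v$ cannot belong to the $(\abs{N(v)}+1)$-core; thus $c(v) = \abs{N(v)} = \hat{h}_v^{(0)}$ and the reduction above closes the case. \emph{Inductive step.} Assume the claim for all $j<i$; then $\hat{h}_u^{(i-1)} = c(u)$ for every $u \in N_j$ with $j<i$. I will show that \textbf{Core-correction} drives $\hat{h}_v^{(i)}$ to at most $c(v)$ by proving that $LCCSAT(c(v)+1)$ is false for $v$ at iteration $i$. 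Since raising the threshold $k$ only shrinks $E^+(v)$ and hence $N^+(v)$, the predicate $LCCSAT(\cdot)$ is monotone (false at $k$ implies false at every $k'\ge k$); and since $\hat{h}_v^{(i)}$ is by construction the largest $k \le h_v^{(i)}$ with $LCCSAT(k)$ true (Eq.~\eqref{eqn:hhatn} and \Cref{alg:core_correct}), falsity of $LCCSAT(c(v)+1)$ yields $\hat{h}_v^{(i)} \le c(v)$.

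\emph{Why $LCCSAT(c(v)+1)$ fails (the crux).} Write $E^+(v) = \{e \in Incident(v): h_u^{(i)} \ge c(v)+1 \text{ for all } u \in e\}$ and $N^+(v) = \bigcup_{e \in E^+(v)} e \setminus \{v\}$. First, no node of $\bigcup_{j<i} N_j$ can appear in any $e \in E^+(v)$: such a node $u \in N_j$ satisfies $h_u^{(i)} \le \hat{h}_u^{(i-1)} = c(u) \le c(v)$ by the induction hypothesis together with \Cref{lem:nbrhier}, contradicting $h_u^{(i)} \ge c(v)+1$. Hence every $e \in E^+(v)$ lies entirely inside $W := V \setminus \bigcup_{j<i} N_j$, so all these hyperedges survive (incident on $v$) in the strongly induced subhypergraph $H[W]$, giving $N^+(v) \subseteq N_{H[W]}(v)$. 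It remains to bound $\abs{N_{H[W]}(v)} \le c(v)$: by definition of the neighborhood hierarchy, $v \in N_i$ attains the minimum neighborhood size in $H[W]$, so if $\abs{N_{H[W]}(v)} \ge c(v)+1$ then every node of $W$ has at least $c(v)+1$ neighbors in $H[W]$; then $H[W]$ is a nonempty strongly induced subhypergraph with minimum neighborhood size $\ge c(v)+1$, so by maximality of the $(c(v)+1)$-core we get $W \subseteq V_{c(v)+1}$ and in particular $v \in V_{c(v)+1}$, contradicting that $c(v)$ is the largest $k$ with $v \in V_k$. Therefore $\abs{N^+(v)} \le \abs{N_{H[W]}(v)} \le c(v)$, $LCCSAT(c(v)+1)$ is false, and the induction goes through. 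The main obstacle is precisely this strongly induced subhypergraph bookkeeping: one must guarantee that the hyperedges realizing $N^+(v)$ are not destroyed by deleting the lower hierarchies (which is exactly where the induction hypothesis is used, to exclude low-$c$ nodes from those hyperedges), so that the minimality of $v$'s neighborhood in $H[W]$ can legitimately be invoked; beyond that no delicate estimate is needed.
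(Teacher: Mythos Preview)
Your proof is correct and shares the induction skeleton with the paper's argument (induction on $i$; the lower bound $\hat{h}_v^{(n)}\ge c(v)$ from monotonicity and convergence), but the upper-bound step is handled differently. The paper bounds $\hat{h}_v^{(i)} \le h_v^{(i)} \le \mathcal{H}(\hat{h}_{u_1}^{(i-1)},\ldots,\hat{h}_{u_t}^{(i-1)})$ and then argues $\mathcal{H}(\ldots)\le c(v)$ by splitting the operands into low-hierarchy neighbors $S'$ (each with value $\le c(v)$ by the hypothesis and \Cref{lem:nbrhier}) and high-hierarchy neighbors $S''$ (with $|S''|\le c(v)$). You instead bypass the $\mathcal{H}$-operator entirely and attack \textbf{Core-correction}: showing $LCCSAT(c(v)+1)$ is false at iteration $i$, whence $\hat{h}_v^{(i)}\le c(v)$ via~\Cref{eqn:hhatn}. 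The underlying combinatorics is the same --- low-hierarchy nodes have small $h$-values and the high-hierarchy part cannot furnish more than $c(v)$ neighbors --- but your route tracks \emph{hyperedges}: every $e\in E^+(v)$ at threshold $c(v)+1$ must lie wholly in $W=\bigcup_{j\ge i}N_j$, so $N^+(v)\subseteq N_{H[W]}(v)$ and the minimality of $v$ in $H[W]$ closes the bound. This hyperedge-level bookkeeping is arguably tidier with respect to the strongly-induced subhypergraph semantics, since it directly controls which hyperedges survive rather than counting neighbors in $H$ that happen to land in $W$; the paper's count $|S''|$ is the latter quantity, which can in principle exceed $|N_{H[W]}(v)|$ when some $v$--$S''$ connections run through low-hierarchy nodes. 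Either way the conclusion stands, and your monotonicity observation for $LCCSAT(\cdot)$ is the right glue.
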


\begin{proof}
We prove by induction on $i$.
For $i = 0$, $N_0$ is the set of nodes with the minimum number of neighbors in the hypergraph. Thus, $\forall v \in N_0, c(v) = \abs{N(v)}$ and by~\Cref{eqn:hhatn} (for $n=0$), we have: $h_v^{(0)} = \abs{N(v)} = c(v)$. Suppose the theorem holds for up to $i=m$, i.e., $\forall u \in N_j \mid j\leq m$ it holds that $\forall n\geq j, \hat{h}^{(n)}_u = c(u)$. Let $v \in N_{m+1}$. We need to show that $\hat{h}_v^{(m+1)} = c(v)$.

By~\Cref{thm:coreconv} and~\Cref{thm:interleave}, $\hat{h}_v^{(m+1)} \geq c(v)$ because $\hat{h}$'s are monotonically non-increasing with $c(v)$ as the limit.

Now we show that $\hat{h}_v^{(m+1)} \leq c(v)$.
By~\Cref{eqn:hhatn}, we have two cases. Case 1: ($\hat{h}_v^{(m+1)} = h_v^{(m+1)}$): In this case, $\hat{h}_v^{(m+1)} = h_v^{(m+1)} \leq \mathcal{H}(\hat{h}_{u_1}^{m},\hat{h}_{u_2}^{m},\ldots,\hat{h}_{u_t}^{m})$ by~\Cref{eqn:hn} as the minimum of two integers is not larger than either of them.
Case 2: ($\hat{h}_v^{(m+1)} < h_v^{(m+1)}$): In this case, $\hat{h}_v^{(m+1)} < h_v^{(m+1)} \leq \mathcal{H}(\hat{h}_{u_1}^{m},\hat{h}_{u_2}^{m},\ldots,\hat{h}_{u_t}^{m})$ for the same reason as in case 1. So it suffices to show that
$\mathcal{H}(\hat{h}_{u_1}^{m},\hat{h}_{u_2}^{m},\ldots,\hat{h}_{u_t}^{m}) \leq c(v)$.

To that end, let us partition the $t$ neighbors of $v$ into two sets $S'$ and $S''$ such that $S' = \{u': u' \in N(v) \land  u' \in N_j \text{ for some } j<m+1\}$ and $S'' = \{u'': u'' \in N(v) \land  u'' \in N_j \text{ for some } j\geq m+1\}$. Let us order the terms in $\mathcal{H}$: $\mathcal{H}(\hat{h}^{m}_{u'_1},\ldots,\hat{h}^{m}_{u'_x},\hat{h}^{m}_{u''_1},\ldots,\hat{h}^{m}_{u''_y})$ where $u'_x \in S'$ and $u''_y \in S''$. (1): $\forall u'_x \in S'$, by induction hypothesis and~\Cref{lem:nbrhier}, $h_{u'_x}^{m} = c(u'_x) \leq c(v)$ because $v \in N_{m+1}$ and $u'_x \in N_j$ such that $j<m+1$ by definition of $S'$. (2): $y = \abs{S''}$, and $S''$ contains all neighbors of v in $H' = H[\cup_{j \geq m+1} N_j]$.
$H'$ is a subhypergraph where every node, including $v \in N_{m+1}$, has number of neighbors at least $\abs{S''}$. Hence, every node in $H'$ has their core-number at least $\abs{S''}$.
Consequently, $c(v) \geq \abs{S''} =  y$. Combining the two results (1) and (2), the first $x$ terms in $\mathcal{H}()$ are all $<c(v)$, and even if the next $y$ terms could be $\geq c(v)$, there are less than $c(v)$ of them. Hence, $\mathcal{H}(\hat{h}_{u_1}^{m},\hat{h}_{u_2}^{m},\ldots,\hat{h}_{u_t}^{m}) \leq c(v)$ which completes the proof.
\end{proof}
 
Clearly, 
$\tau$ is at most the largest neighborhood hierarchy of the nodes. 
\section{Optimization and Parallelization of the local-core Algorithm}
\label{sec:optimization}
We propose four optimizations to improve the efficiency of \textbf{Local-core} (\S \ref{sec:local}).
~\Cref{alg:localcoreopt} presents the pseudocode for the optimized algorithm, \textbf{Local-core(OPT)},
where all four optimizations are indicated. Optimization-I adopts sparse representations to efficiently
evaluate neighborhood queries in hypergraphs, while Optimization-II consists of three implementation-specific
improvements to efficiently perform \textbf{Core-correction}.
{\em Optimizations-I and II have not been
used in earlier core-decomposition works for both graphs and hypergraphs}.
Our Optimizations-III and IV are motivated by~\cite{LiuZHX21},
where similar optimizations are proposed for graph $(k,h)$-core decomposition to improve convergence
and eliminate redundant computations, respectively. {\em However such optimizations have not been adopted in earlier hypergraph-related works}.

\spara{Optimization-I (Compressed hypergraph representation).}
\textbf{Local-core} makes two primitive neighborhood queries on hypergraph structures:
neighbors enumeration (for $h$-index computation, Lines 2 and 5, Algorithm~\ref{alg:localcore}) and incident-hyperedges enumeration
(during \textbf{Core-correction}, Line 2, Algorithm \ref{alg:core_correct}). A na\"ive implementation keeps a $\abs{V} \times \abs{V}$
matrix for neighbors counting queries and a $\abs{V} \times \abs{E}$ matrix for incident-hyperedge queries.
However, storing such matrices in memory is expensive and unnecessary for large hypergraphs since these matrices are sparse in practice. Hence, it is imperative to adopt a compressed sparse representation,
e.g., {\em Compressed sparse row} (\textsf{CSR}) for these matrices.
For example, in \textsf{CSR} representation we use two arrays $\mathcal{F}$ and $\mathcal{N}$ for storing neighbors.
For all nodes $v \in V$, $\mathcal{N}[v]$ stores the starting index in $\mathcal{F}$
containing neighbors of node $v$. To facilitate neighborhood enumeration query, neighbors of node $v$ are stored at contiguous indices $\mathcal{N}[v]$, $\mathcal{N}[v]+1$, $\ldots$, $\mathcal{N}[v+1] - 1$ in $\mathcal{F}$. Any neighbor $u$ connected to $v$ by a hyperedge has their own set of neighbors $N(u)$ stored at contiguous indices $\mathcal{N}[u]$, $\mathcal{N}[u]+1$,$\ldots$, $\mathcal{N}[u+1]-1$ in $\mathcal{F}$. Hence for any node $v \in V$,
$\mathcal{N}[v+1]-\mathcal{N}[v]$ gives the number of neighbors $\abs{N(v)}$ in $\bigO(1)$ time and neighborhood enumeration takes $\bigO(\abs{N(v)})$ time; in contrast to the matrix representations which take $\bigO(\abs{V})$ time for both neighbors-enumeration and incident-hyperedges enumeration queries.
\textsf{CSR} representations, due to their contiguous memory locations,
can also exploit spatial memory access patterns in the \textbf{Local-core} algorithm.
\begin{algorithm}[!tb]
\begin{algorithmic}[1]
\caption{\label{alg:localcoreopt}\small Optimized local algorithm: \textbf{Local-core(OPT)}}
\scriptsize
\Require Hypergraph $H = (V,E)$
\Ensure Core-number $c(v)$ for each node $v \in V$
\State Construct CSR representations /* Opt-I */
\ForAll {$v \in V$}
    \State Compute $\mathbf{LB}(v)$ /* local lower-bounds for Opt-IV */
    \State $c(v) \gets \hat{h}_{v}^{(0)} \gets h_{v}^{(0)} \gets \abs{N(v)}$ /* core-estimate c(v) initialized for Opt-III*/
\EndFor

\ForAll {$n = 1,2,\ldots, \infty$}
    \ForAll {$v \in V$}
        \If{$h_v^{(n)} > LB(v)$} /* Opt-IV*/
            \State $c(v) \gets h_v^{(n)} \gets \min\left(\mathcal{H}(\{c(u) : u \in N(v)\}), c(v) \right)$ /* Opt-III*/
        \EndIf
    \EndFor
    \ForAll {$v \in V$}
        \If{$h_v^{(n)} > LB(v)$ } /* Opt-IV  */
            \State $c(v) \gets \hat{h}_v^{(n)} \gets $ \textbf{Core-correction ($v$, $h_v^{(n)}$, $\mathcal{H}$)} /* Opt-II \& Opt-III*/
        \EndIf
    \EndFor
    \If {$\forall v, \hat{h}_v^{(n)} == h_v^{(n)}$}
       \State \textbf{Terminate Loop}
    \EndIf
\EndFor
\State Return $c$
\end{algorithmic}
\end{algorithm}

\spara{Optimization-II (Efficient \textbf{Core-correction} and LCCSAT).}
We design three optimization methods for more efficient \textbf{Core-correction}.

\emph{Hyperedge-index for efficient $E^+$ computation:} In Line~2 of \textbf{Core-correction} (\Cref{alg:core_correct}),
we check if $h_u^{(n)} \geq h_v^{(n)}$ for every node $u \in e$ such that $e \in Incident(v)$.
This computation incurs $\bigO(d(v)+ \abs{N(v)})$ at every while-loop (Line~1) of \textbf{Core-correction}. Since the number of while-loop iterations is $\abs{N(v)}$ in the worst-case, an inefficient $E^+$ computation contributes $\bigO(\abs{N(v)}(d(v)+ \abs{N(v)}))$ to the total cost of \textbf{Core-correction}.
We reduce this cost to $\bigO(\abs{N(v)}d(v))$ by maintaining an
index $E_e$ for hyperedges. $E_e^{(n)}$ records for every hyperedge
$e \in E$ the minimum of $h$-indices of its constituent nodes ($\min_{u \in e} h_u^{(n)}$).
We compute $E^+(v)$ by traversing only the incident hyperedges whose $E_e^{(n)} \geq h_v^{(n)}$.
Storing $E_e$ for all hyperedges costs $\bigO(\abs{E})$ space and constructing the indices costs $\bigO(\sum_{e\in E} \abs{e})$ time.
However, hyperedge-indices are constructed only once before every iteration in \textbf{Local-core} (Line 3, \Cref{alg:localcore}).
Moreover, hyperedge-index helps efficiently compute $N^+$ as follows.

\emph{Dynamic programming for efficient $N^+$ computation:}
The while loop in the \textbf{Core-correction} procedure computes $k = h_v^{(n)},h_v^{(n)}-1,\ldots,\hat{h}_v^{(n)}$; and for
every $\hat{h}_v^{(n)} \leq k \leq h_v^{(n)}$, it recomputes $E^+(v)$ and $N^+(v)$ until returning $\hat{h}_v^{(n)}$ as output. Without any optimization, the cost of computing $N^+(v)$ at every while-loop iteration $k$ (Line~1) is $\bigO(d(v) + \abs{N(v)})$. Since the number of while-loop iterations is $(h_v^{(n)}-\hat{h}_v^{(n)}) \leq \abs{N(v)}$, an inefficient $N^+$ computation contributes $\bigO(\abs{N(v)}(d(v)+ \abs{N(v)}))$ to the total cost of \textbf{Core-correction}.

We reduce this cost to $\bigO(d(v)+ \abs{N(v)})$ by
constructing an
index $B$ such that $B[h_v^{(n)}]$ records the set of incident hyperedges whose hyperedge-index $E_{e}^{(n)} \geq h_v^{(n)}$,
and for every $k < h_v^{(n)}$, $B[k]$ records the incident hyperedges whose hyperedge-index $E_{e}^{(n)} = k$.

Let us denote $E^+(v)$ and $N^+(v)$ at $k$ as $E^+(v,k)$ and $N^+(v,k)$, respectively.
Exploiting the index structure $B$, we have the following dynamic programming paradigm for efficiently computing $N^+(v,k)$.
\begin{small}
\begin{equation}
\label{eqn:graphhn}
        N^+(v,k) =
        \begin{cases}
          \cup_e B[k] & k = h_v^{(n)}\\
          N^+(v,k+1) \cup (\cup_e B[k]) & \hat{h}_v^{(n)} \leq k < h_v^{(n)}
      \end{cases}
\end{equation}
\end{small}

Instead of traversing all incident hyperedges at every $\hat{h}_v^{(n)} \leq k \leq h_v^{(n)}$, we only traverse hyperedges at $B[k]$ to compute $N^+(v,k)$.
Since the indices $B[k]$ are mutually exclusive, each incident hyperedge is traversed at most once during the entire \textbf{Core-correction} procedure.
The storage cost of $B$ is $\bigO(h_v^{(n)} + d(v))$,
as there are at most $h_v^{(n)}$ keys in $B$ and exactly $d(v)$ hyperedges are stored in $B$.
Due to efficient $E^+$ and $N^+$ computation, core-correction costs $\bigO(\abs{N(v)}d(v) + d(v)+ \abs{N(v)})$ instead of $\bigO(\abs{N(v)}d(v) + \abs{N(v)}^2)$.

\emph{Efficient LCCSAT computation:} 
We return \emph{True} upon finding the first hyperedge $e \in Incident(v)$ adding which to $E^+(v)$ causes $\abs{N^+(v)} \geq h_v^{(n)}$. 
Adding subsequent incident hyperedges to $E^+(v)$ increases $\abs{N^+(v)}$ more without affecting $LCCSAT(k)$=\emph{True}.

\spara{Optimization III (Faster convergence).}
In Line~8 (\Cref{alg:localcoreopt}), instead of computing $h_v^{(n)}$ as per~\Cref{eqn:hn}, we compute a smaller value $g_v^{(n)}$ by replacing some operands $\hat{h}_{u_i}^{(n-1)}$ in the $\mathcal{H}$-operator with $h_{u_i}^{(n)} \leq \hat{h}_{u_i}^{(n-1)}$(due to \Cref{thm:interleave}). This leads to faster convergence since $g_v^{(n)} \leq h_v^{(n)}$, because lowering a few arguments may cause the output
of $\mathcal{H}()$ to decrease further, e.g., $\mathcal{H}(1,1,2,2) = 2$, whereas $\mathcal{H}(1,1,1,2) = 1$. Computing $g_v$ does not affect the correctness because the new interleave sequence $(h_v, g_v,\hat{g}_v)$ is also monotonically non-increasing and lower-bounded by 0. This is because $h_v^{(n)} \geq g_v^{(n)}$ and \textbf{Core-correction} never increases $g_v^{(n)}$, so $g_v^{(n)} \geq \hat{g}_v^{(n)}$. Thus, $(h_v^{(n)})$, $(g_v^{(n)})$, and $(\hat{g}_v^{(n)})$ have the same limit as that in~\Cref{thm:interleave}: $\hat{h}^{(\infty)}$. Finally, from~\Cref{thm:coreconv}, it follows that $\hat{h}^{(\infty)}=c(v)$ holds despite this optimization.

\spara{Optimization-IV (Reducing redundant $\mathcal{H}$ computations and LCCSAT checks).}
We use local lower-bound $\mathbf{LB}(v)$ on core-numbers (\Cref{lem:localLB}) to reduce the number of
$h$-index computations and core corrections. At some iteration $n$, if $\hat{h}_v^{(n)}$=$\mathbf{LB}(v)$,
it ensures that $c(v)$=$\mathbf{LB}(v)$.
Thus, $h$-index for $v$ would no longer reduce in future iterations;
otherwise, $c(v) = \hat{h}_v^{(\infty)} < \hat{h}_v^{(n)} = \textbf{LB}(v)$,
which is a contradiction. Hence, it must be that  $c(v) = \hat{h}^{(\infty)} = \hat{h}^{(n)}$.
We need not compute the $h$-index and core corrections for $v$ at future iterations.

Note that Liu et al.~\cite{LiuZHX21} determine the redundancy of $\mathcal{H}$ computation for node $v$ based on the convergence of both $h_v^{(n)}$ and $h_u^{(n)}$ of neighbors $u \in N(v)$. This does not work for our problem, because 
even if a node $v$ and its neighbors' $h$-indices have converged, node $v$ may still need core correction (e.g., Example~\ref{ex:wronglocal}).
\begin{algorithm}[!tb]
\begin{algorithmic}[1]
\caption{\label{alg:kdcore}\small \textbf{Local-core+Peel for (k,d)-core decomposition}}
\scriptsize
\Require Hypergraph $H = (V,E)$
\Ensure All non-empty $(k,d)$-cores of $H$; $1\leq k \leq k_{max}$, $1\leq d \leq d_{max}$
\State Initialize $\mathbb{C} = \{ C(k,d) \gets \phi\quad\forall k_{min} \leq k \leq k_{max}, \, d_{min} \leq d \leq d_{max}\}$
\State Compute neighborhood based core-number $c(v)$ $\forall v \in V$ $\qquad$ /* \textbf{Local-core(OPT)}*/
\ForAll {nbr-based core-number $k = 1,2,\ldots,k_{max}=\max_{v\in V}c(v)$}
    \State Construct nbr $k$-core $H[V_k]: V_k = \{v: c(v) \geq k\}$
    
    \ForAll{$u \in V_k$}
        \State Initialize $B[deg_{V_k}(u)] \gets B[deg_{V_k}(u)] \cup \{u\}$
    \EndFor
    \ForAll{$d = 1,2,\ldots,\max_{u\in V_k}(deg_{V_k}(u))$}
        \While{$B[d] \neq \phi$}
            \State Remove a node $v$ from $B[d]$.
            \State $C(k,d) = C(k,d) \cup \{v\}$
            \ForAll{$u \in N_{V_k}(v)$}
                \If{$\abs{N_{V_k\setminus \{v\}}(u)} \geq k$}
                \State Move $u$ to $B[\max\left(deg_{V_k\setminus \{v\}}(u), d\right)]$
                \Else
                \State  Move $u$ to $B[d]$
                \EndIf
            \EndFor
            \State  $V_k \gets V_k \setminus \{v\}$
        \EndWhile
    \EndFor    
\EndFor
\State Return $\mathbb{C}$
\end{algorithmic}
\end{algorithm}

\spara{Efficiency comparison of Local-core(Opt) and Peel.}
The terms $d_{nbr}$ and $d_{hpe}$ in the time complexity of \textbf{Peel} (\S \ref{sec:peeling})
are upper bounds of $|N(v)|$ and $d(v)$, respectively, $\forall v \in V$.
Hence, the complexity of \textbf{Peel} is comparable to $\bigO(\sum_{v \in V} |N(v)|d(v) + N(v)^2)$,
which is same as the complexity of one round of \textbf{Local-core} (\S \ref{sec:local}).
Thus, \textbf{Local-core} is slower than \textbf{Peel}.
However, our optimizations reduce its complexity significantly.
For instance, by efficiently computing $E^+$ and $N^+$, optimization-II reduces the complexity to $\bigO(\tau \sum_{v} (d(v)\abs{N(v)}+d(v)+\abs{N(v)}))
= \bigO(\tau \sum_{v} d(v)\abs{N(v)}) = \bigO(\tau \abs{V} d_{hpe} d_{nbr})$.
Hence, \textbf{Local-core(Opt)} is at least $\frac{d_{nbr}+d_{hpe}}{\tau d_{hpe}}=(\frac{d_{nbr}}{\tau d_{hpe}}+\frac{1}{\tau})$-times faster than
\textbf{Peel} due to Optimization-II.
A node usually has more neighbors than its degree in large-scale, real-world hypergraphs (e.g., {\em aminer}, {\em dblp} in~\Cref{table:dataset}),
resulting in $d_{nbr}>d_{hpe}$. Optimizations III and IV further reduce $\tau$, making the ratio $>1$, and thus \textbf{Local-core(Opt)} is
much faster than \textbf{Peel} in our experiments.

\spara{Parallelization of \textbf{Local-core}.}
%
We propose parallel 
\textbf{Local-core(P)} algorithm
following the shared-memory, data parallel programming paradigm, which further improves efficiency.
The algorithm partitions nodes into $T$ partitions, where $T$ is the number of threads.
Each thread is responsible for computing core-numbers of nodes in its own partition.
To improve load-balancing, we adopt the longest-processing-time-first scheduling approach~\cite{graham1969bounds}
such that the aggregated number of neighbors of nodes in different threads are roughly the same.
\textbf{Local-core(P)} has three primary differences compared to sequential  \textbf{Local-core(OPT)} (Algorithm~\ref{alg:localcoreopt}).

{\bf First}, at iteration $0$ (Line 3), every thread initializes hypergraph $h$-indices for its allocated nodes 
in parallel. Concurrent computation of $\abs{N(v)}$ and $\abs{N(u)}$ for nodes allocated to different
threads requires concurrent reads to the CSR representation. Since the CSR representation does not change across queries,
both queries will produce the same result as their sequential counterparts.
{\bf Second}, at subsequent iterations ($n>0$), every thread computes $h_v^{(n)}$
and the corrected value $\hat{h}_v^{(n)}$ for its allocated nodes 
in parallel (Lines 6-11).
Interestingly, our algorithm still terminates with the correct output because
none of our previous theoretical results rely on any particular computation order of nodes in the same iteration. The computation order only affects the number
of iterations required for convergence. 
{\bf Finally}, all threads keep their respective core-correction counter, which counts the number of allocated nodes that have been core-corrected in a given round. Once all the counters report $0$, \textbf{Local-core(P)} terminates.
\begin{table}
\scriptsize
\vspace{-3mm}
\caption{\label{table:dataset} \footnotesize Datasets: $|V|$ \#nodes, $|E|$ \#hyperedges, $d(v)$ (mean) degree of a node,
$|e|$ (mean) cardinality of a hyperedges, $|N(v)|$ (mean) \#neighbors per node}
\vspace{-3mm}
\begin{tabular}{c||c|c|c|c|c|c}
 & hypergraph & $\abs{V}$ & $\abs{E}$ & {$d(v)$} & {$\abs{e}$} & {$\abs{N(v)}$} \\ \hline \hline
\multirow{3}{*}{Syn.} & {\em bin4U} & 500 & 12424 & 99.4$\pm$8.5  & 4$\pm$0 & 225.3$\pm$15.5  \\ \cline{2-7}
 & {\em bin3U} & 500 & 16590 & 99.5$\pm$8 & 3$\pm$0  & 164.1$\pm$11.6  \\  \cline{2-7}
  & {\em pref3U} & 125329 & 250000 & 5.9$\pm$915.9  & 3$\pm$0 &  4.5$\pm$412.4  \\ \hline
\multirow{5}{*}{Real} & {\em enron} & 4423 & 5734 & 6.8$\pm$32 & 5.2$\pm$5 & 25.3$\pm$44 \\ \cline{2-7}
 & {\em contact} & 242 & 12704 & 127$\pm$55.2  & 2.4$\pm$0.5  & 68.7$\pm$26.6  \\ \cline{2-7}
 & {\em congress} & 1718 & 83105 & 426.2$\pm$475.8 & 8.8$\pm$6.8 & 494.7$\pm$248.6\\ \cline{2-7}
 & {\em dblp} & 1836596 & 2170260 & 4$\pm$11.6 & 3.4$\pm$1.8 & 9$\pm$21.4  \\ \cline{2-7}
 & {\em aminer} & 27850748 & 17120546 & 2.3$\pm$5 & 3.7$\pm$2.6 & 8.4$\pm$24.1\\ 
\end{tabular}
\vspace{-6mm}
\end{table}
%
%
\section{Extension to 
(neighborhood, degree)-core decomposition}
\label{sec:kdcore}

We propose a new hypergraph-core model, \textbf{(neighborhood, degree)-core}, or $(k,d)$-core in short, by considering both neighborhood and degree constraints. 
$(k,d)$-core eliminates problems such as the nbr-$k$-core of a node can be very large if it is involved in one large-size hyperedge. Notice that the $d$-value of the $(k,d)$-core for such a node would be small, differentiating it from other nodes having both higher degrees and a higher number of neighbors. We demonstrate the superiority of $(k,d)$-core in diffusion spread 
(\S \ref{sec:inf_applications}).

\spara{$(k,d)$-core.} Given integers $k, d>0$, the $(k,d)$-core, denoted as $H[V_{(k,d)}] = (V_{(k,d)}, E[V_{(k,d)}])$ is the maximal subhypergraph s.t. every node $u \in V_{(k,d)}$ has at least $k$-neighbors and degree $\geq d$ in the strongly induced subhypergraph $H[V_{(k,d)}]$.

$(k,d)$-core decomposition is difficult: {\bf (1)} Unlike neighborhood-based core decomposition that defines a total hierarchical order across different cores, $(k,d)$-core decomposition forms a core lattice defining only partial containments: 
$(k+i,d+j)$-core is 
contained in $(k,d)$-core $\forall k,d > 0; i,j \geq 0$. {\bf (2)} Let $k_{max}$ and $d_{max}$ denote the maximum core-numbers via
neighborhood- and degree-based core decompositions, respectively. We can have up to $\bigO(k_{max} \cdot d_{max})$ different $(k,d)$-cores. 
Due to such challenges, we keep the problem of designing a holistic local algorithm for $(k,d)$-core decomposition open. 
Following a similar notion from multi-layer core decomposition \cite{GalimbertiBGL20}, we propose a hybrid local and peeling approach. 

\spara{Local-core+Peel algorithm (Algorithm~\ref{alg:kdcore}) for $(k,d)$-core decomposition.}
First, we initialize the $(k,d)$-cores as empty (Line~1) and compute neighborhood-based core decomposition of all nodes using \textbf{Local-core(OPT)} (Line~2).
At each outer for-loop (Line~3) iteration $k$, we construct the nbr $k$-core ($H[V_k]$) (Line~4) and peel the nodes in $V_k$ to find various $(k,d)$-cores with the same $k$ but different $d$. 
To do so, we use a vector of lists $B$ to define the initial peeling order of $V_k$ based on their degrees in cell $H[V_k]$ (Lines~5-6) similar to algorithm \textbf{Peel}. 
We traverse $B$ (Line~7) in ascending order of values $d$ and peel nodes from the current cell $B[d]$ until $B[d]$ is empty (Line~8). Whenever a node is removed from $B$ (Line~9), it is assigned to $(k,d)$-core $C(k,d)$ (Line~10), and its neighbors in $H[V_k]$ are moved to appropriate cells in $B$ (Lines 11-15). 
After we assign all the nodes in $V_k$ to various $(k,d)$-cores for varying $d$, we proceed to do the same for $V_{k+1}$ at the next iteration of the outer for-loop (Line~2). 

\eat{
\spara{Local-core+Peel algorithm for $(k,d)$-core decomposition.}
We initialize all $(k,d)$-cores as empty and compute neighborhood-based core decomposition of all nodes using \textbf{Local-core(OPT)}. Our algorithm (given in \cite{full}) has two for-loops:
At each outer for-loop iteration $k$, we construct the nbr-$k$-core ($H[V_k]$) and peel the nodes in $V_k$ to find various $(k,d)$-cores with the same $k$ but different $d$. 
To do so, we use a vector of lists $B$ to define the initial peeling order of $V_k$ based on their degrees in cell $H[V_k]$ similar to algorithm \textbf{Peel}. 
We traverse $B$ in ascending order of values $d$ and peel nodes from the current cell $B[d]$ until $B[d]$ is empty. Whenever a node is removed from $B$, it is assigned to proper $(k,d)$-core, and its neighbors in $H[V_k]$ are moved to appropriate cells in $B$. 
After we assign all the nodes in $V_k$ to various $(k,d)$-cores for varying $d$, we proceed to do the same for $V_{k+1}$ at the next iteration of the outer for-loop.  
}
%
\section{Empirical Evaluation}
\label{sec:experiments}
We empirically evaluate the performance of our algorithms on four synthetic and five real-world datasets
(Table~\ref{table:dataset}). We implement our algorithms
in GNU C++11 and OpenMP API version 3.1. All experiments are conducted on a server with 128 AMD EPYC 32-core processors and 256GB RAM.
\textbf{Our code and datasets are at \cite{code}}. 

\spara{Datasets.}
Among synthetic hypergraphs, {\em bin4U} and {\em bin3U} are $4$-uniform and $3$-uniform hypergraphs, respectively,
generated using state-of-the-art hypergraph configuration model~\cite{arafat2020construction}.
{\em pref3U} is a $3$-uniform hypergraph generated using the hypergraph preferential-attachment model~\cite{avin2019random}
with parameter $p = 0.5$, 
where $p$ is the probability of a new node being preferentially attached to existing nodes in the
hypergraph. 
The node degrees in \emph{pref3U} approximately follows a power-law distribution with exponent $ = 2.2$.
Among real-world hypergraphs, {\em enron} is a hypergraph of emails, where each email correspondence
is a hyperedge and users are nodes \cite{benson2018simplicial}. We derive the {\em contact}
(in a school) dataset from a graph where each maximal clique is viewed as a hyperedge and
individuals are nodes~\cite{benson2018simplicial}. In the {\em congress} dataset,
nodes are congress-persons and each hyperedge comprises of sponsors and co-sponsors (supporters)
of a legislative bill 
\cite{benson2018simplicial}.
In {\em dblp}, nodes are authors and a hyperedge consists of authors in a publication recorded on DBLP~\cite{benson2018simplicial}.
Similarly, {\em aminer} consists of authors and publications recorded on Aminer \cite{aminer_data}.
%
%
%
\begin{figure*}[tb!]
    \centering
     \subfloat[\footnotesize Ineffectiveness of graph $h$-index]{\includegraphics[scale=0.2]{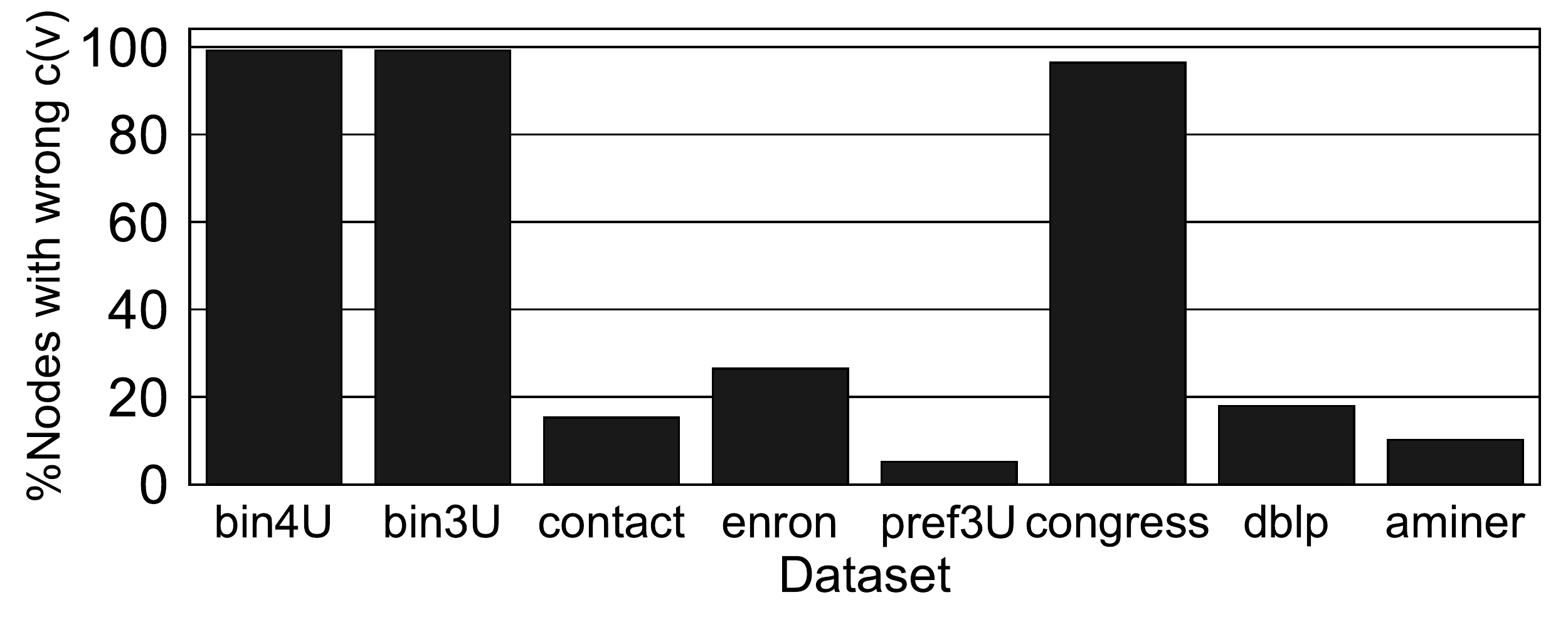}}
    \subfloat[\footnotesize Average error of hyp. and graph $h$-index on {\em Enron}]{\includegraphics[scale=0.2]{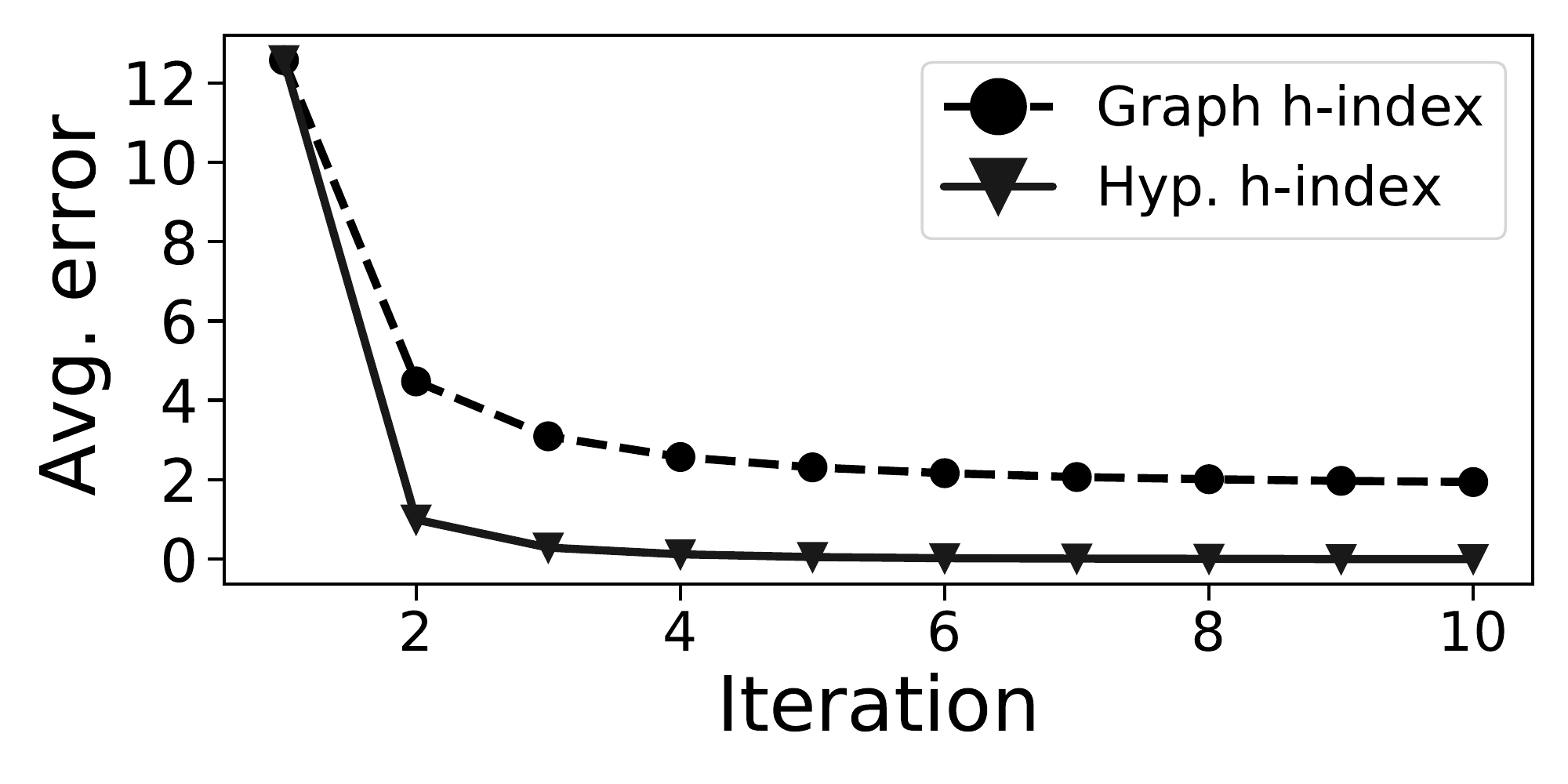}}
    \vspace{1mm}
    \subfloat[\footnotesize Average error of hyp. and graph $h$-index on {\em bin4U}]{\includegraphics[scale=0.2]{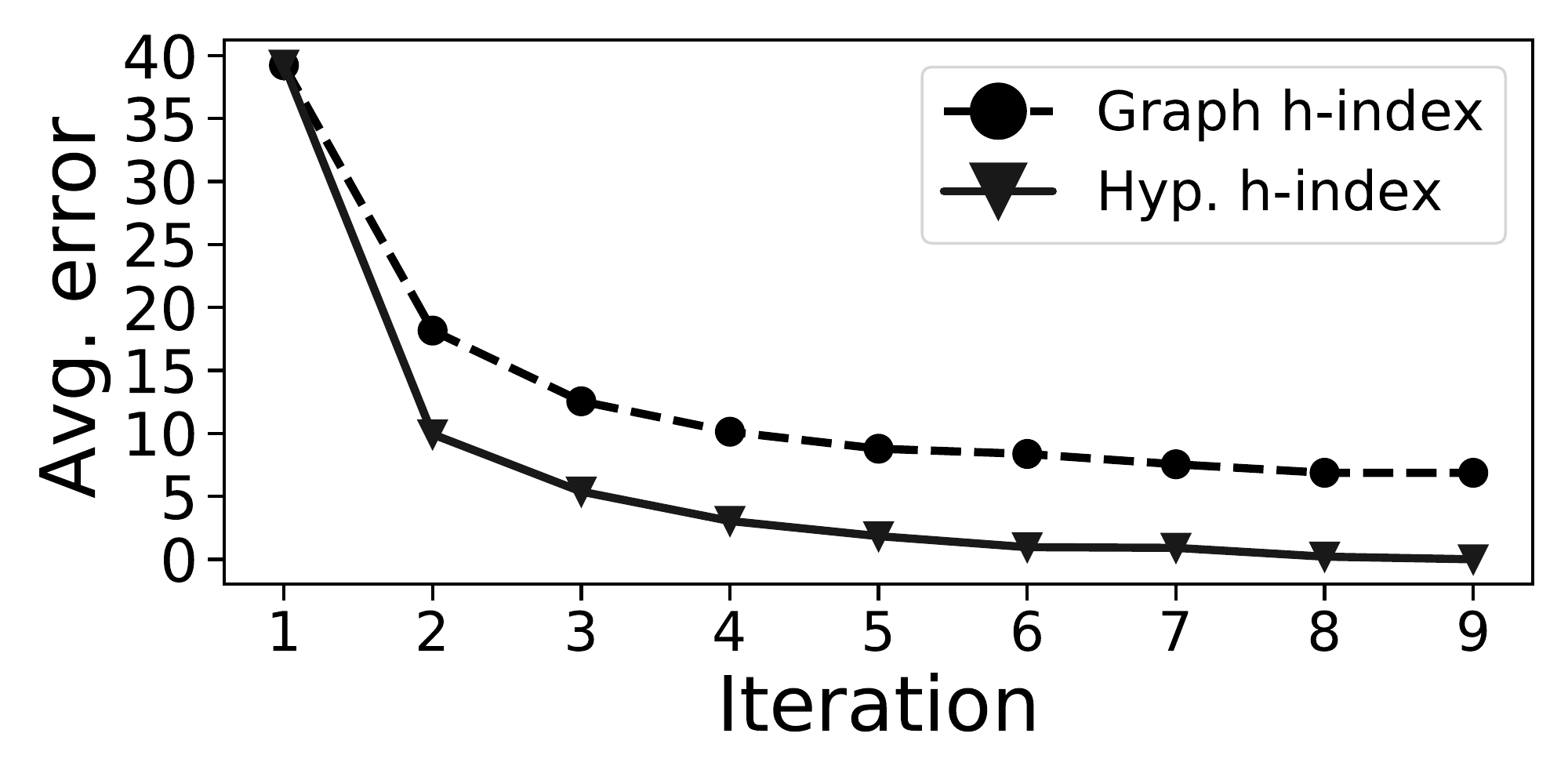}}
    \subfloat[\footnotesize Convergence of \textbf{Local-core}]{\includegraphics[scale=0.16]{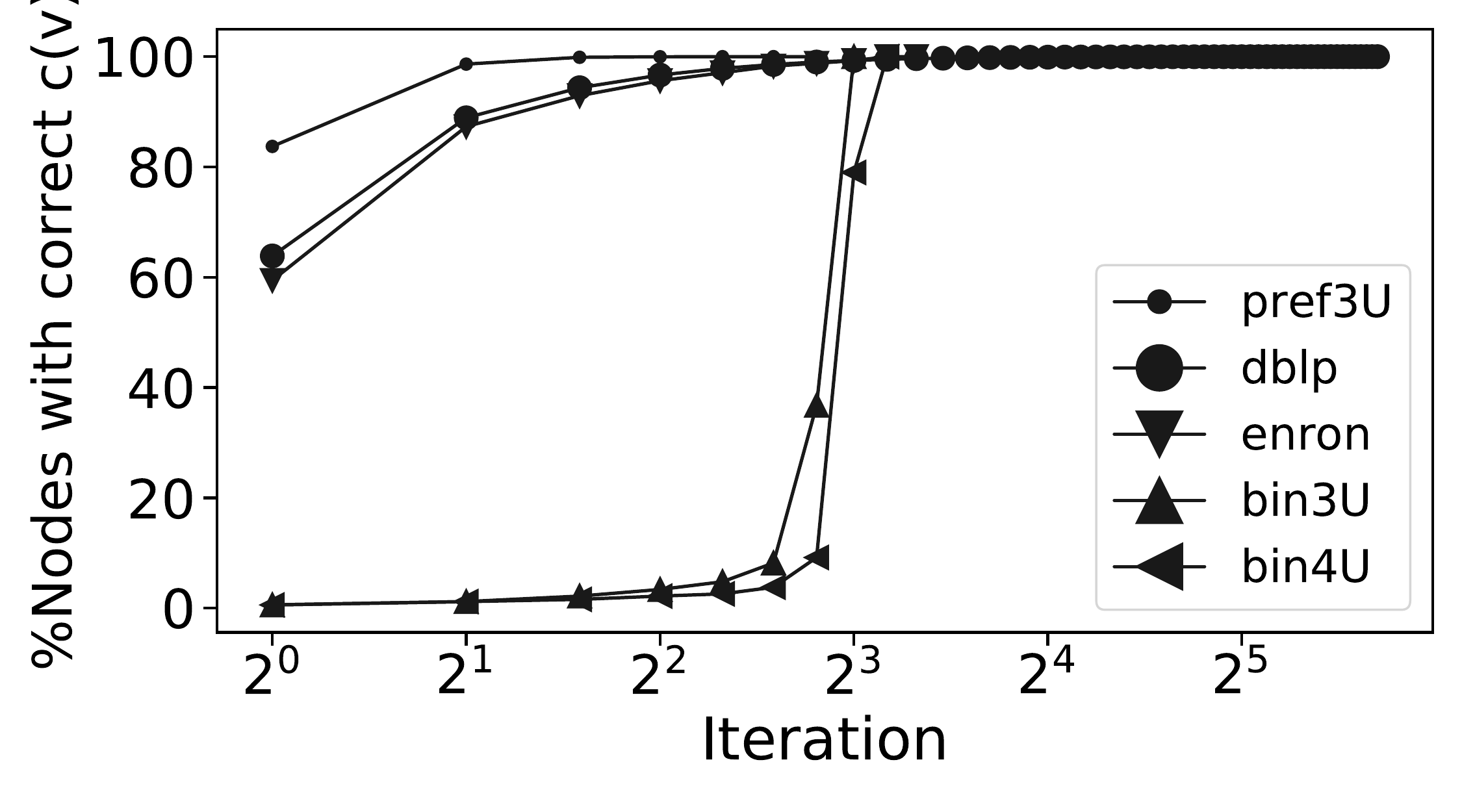}}
    \vspace{-4mm}
    \caption{\footnotesize Effectiveness evaluation of hypergraph $h$-index and \textbf{Local-core}}
    \label{fig:wr_localcore}
    \vspace{-5mm}
\end{figure*}
\begin{figure*}[tb!]
\vspace{1.5mm}
    \centering
    \fbox{
    \includegraphics[scale=0.52,trim={2.4cm 2cm 1.5cm 2cm},clip]{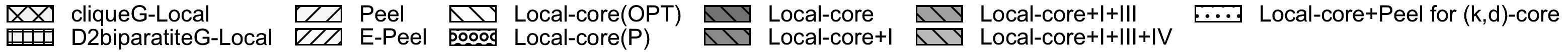}
    }
    \vspace{-4mm}
    \caption{\footnotesize Legends for Figures \ref{fig:exectm} and \ref{fig:fig:par_local_tm}}
    \vspace{-6mm}
\end{figure*}
\begin{figure*}[tb!]
\vspace{-1mm}
    \centering
    \subfloat[\footnotesize 
    ]{
    \includegraphics[scale=0.22,trim={0.72cm 0.66cm 0.55cm 0.55cm}]{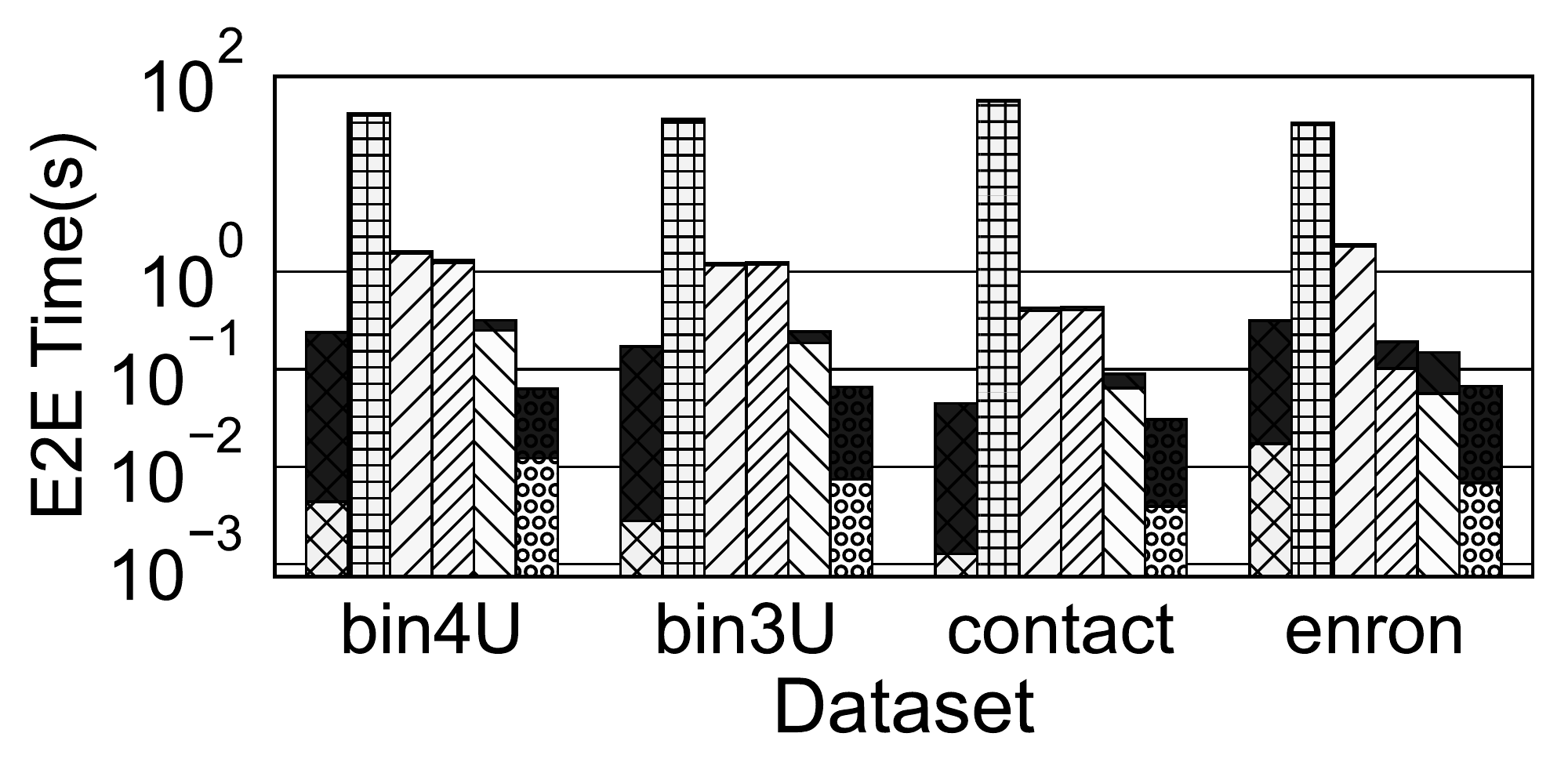}
    }
    \subfloat[\footnotesize 
    ]{
    \includegraphics[scale=0.22,trim={0.35cm 0.66cm 0.3cm 0.4cm},clip]
    {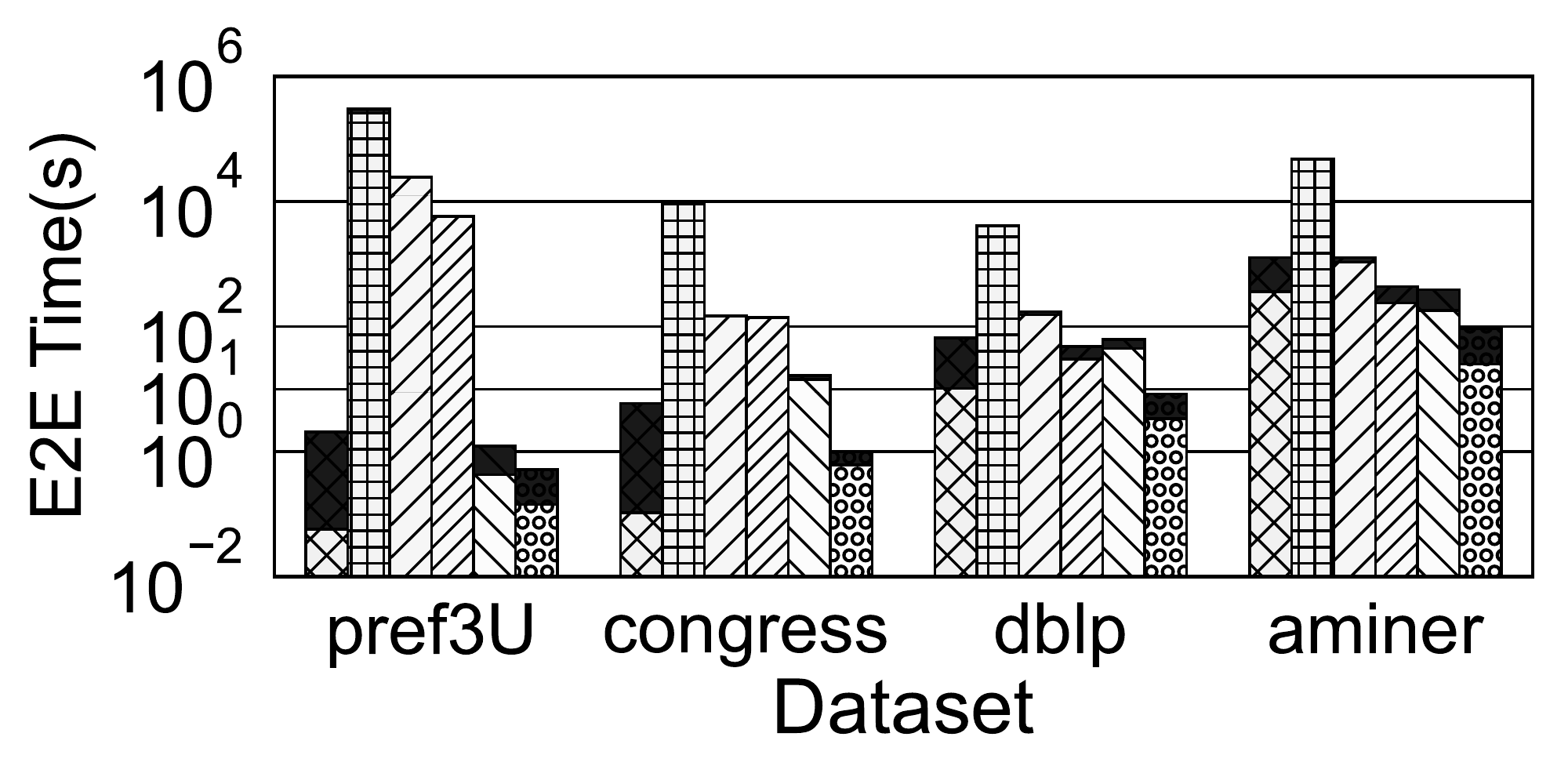}
    }
    \subfloat[\footnotesize 
    ]
    {
    \includegraphics[scale=0.22,trim={0.8cm 0.7cm 0.42cm 0.4cm}]{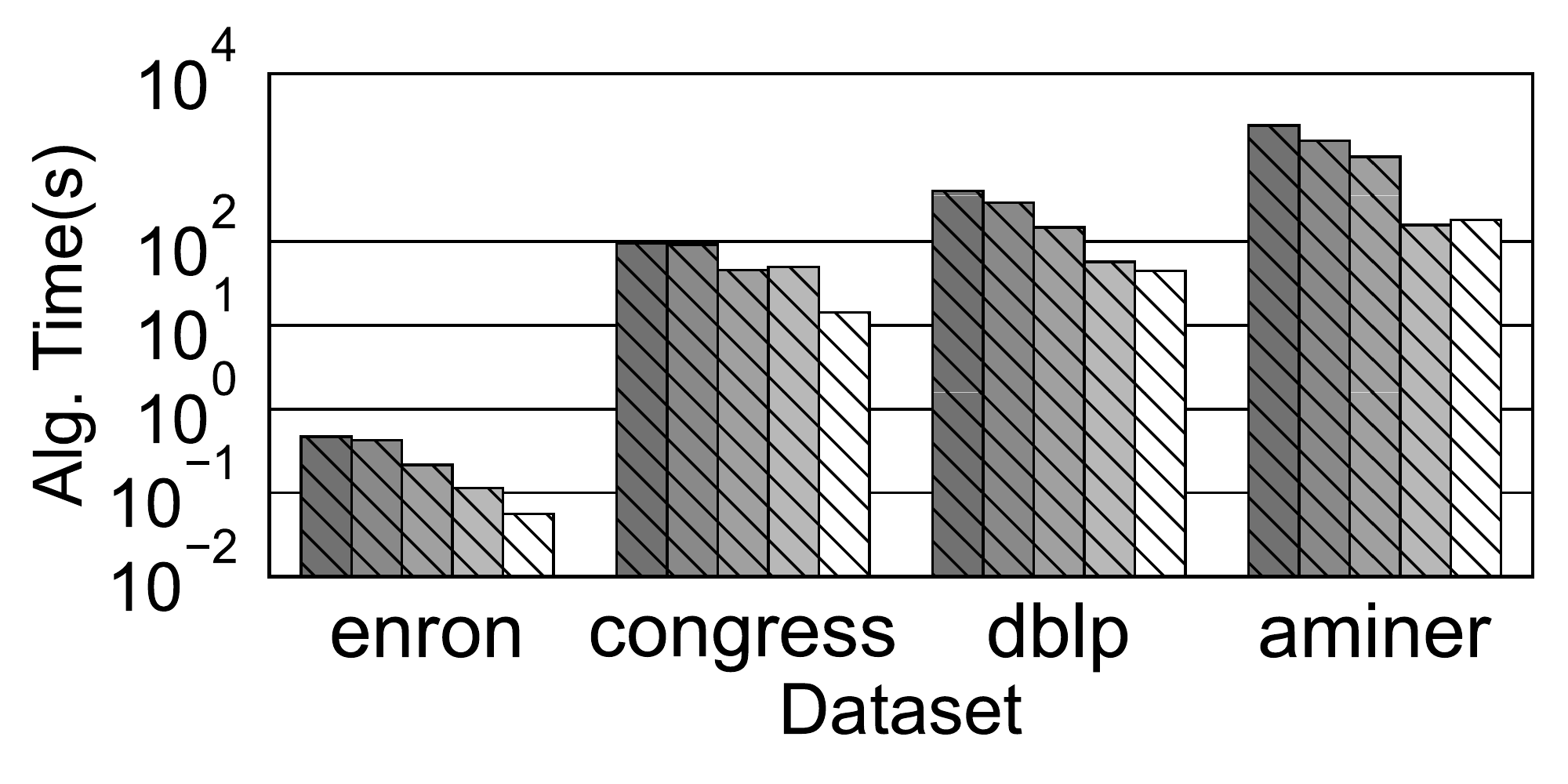}
    }
    \subfloat[\footnotesize 
    ]
    {
    \includegraphics[scale=0.21,trim={0.74cm 0.66cm 0.48cm 0.48cm}]{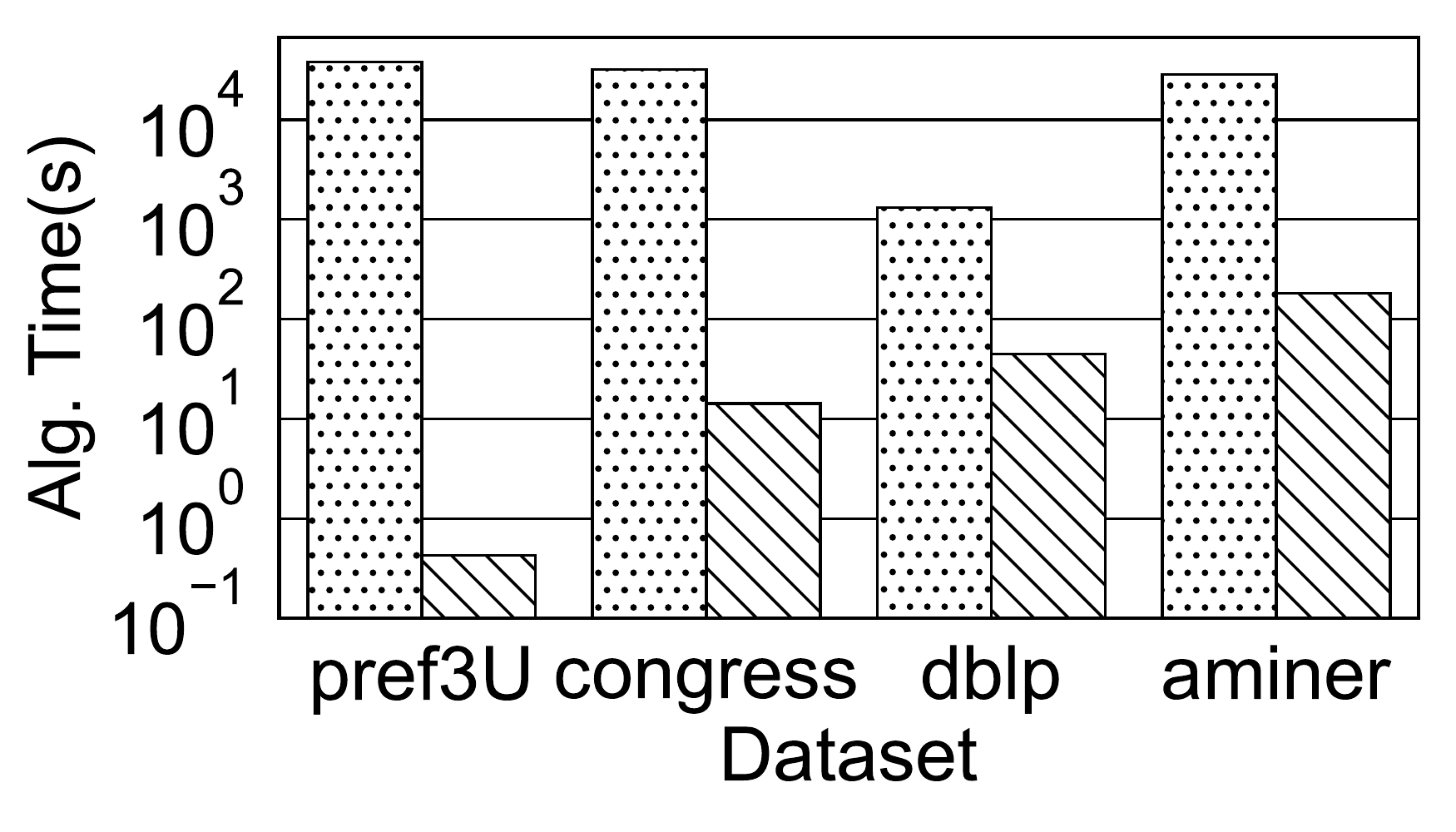}}
    \vspace{-4mm}
    \caption{\footnotesize (a)-(b) End-to-end (E2E) running time of our algorithms: \textbf{Peel}, \textbf{E-Peel}, \textbf{Local-core(OPT)}, \textbf{Local-core(P)} with 64 Threads
    vs. those of baselines: \textbf{Clique-Graph-Local} and \textbf{Distance-2 Bipartite-Graph-Local}. End-to-end (E2E) running time = data structure initialization time (shaded with dark-black on top of each bar) + algorithm's execution time. (c) Impact of the four optimizations to \textbf{Local-core} algorithm's execution time. Here, Local-core(OPT) = Local-core + Optimizations-I + II +III + IV. (d) \textbf{Local-core+Peel} algorithm's execution time for (neighborhood, degree)-core decomposition vs. \textbf{Local-core(OPT)} for neighborhood-core decomposition.}
    \label{fig:exectm}
    \vspace{-5mm}
\end{figure*}
\begin{figure}[tb!]
\vspace{-2mm}
    \subfloat[\footnotesize{\em dblp}]{\includegraphics[scale=0.2]{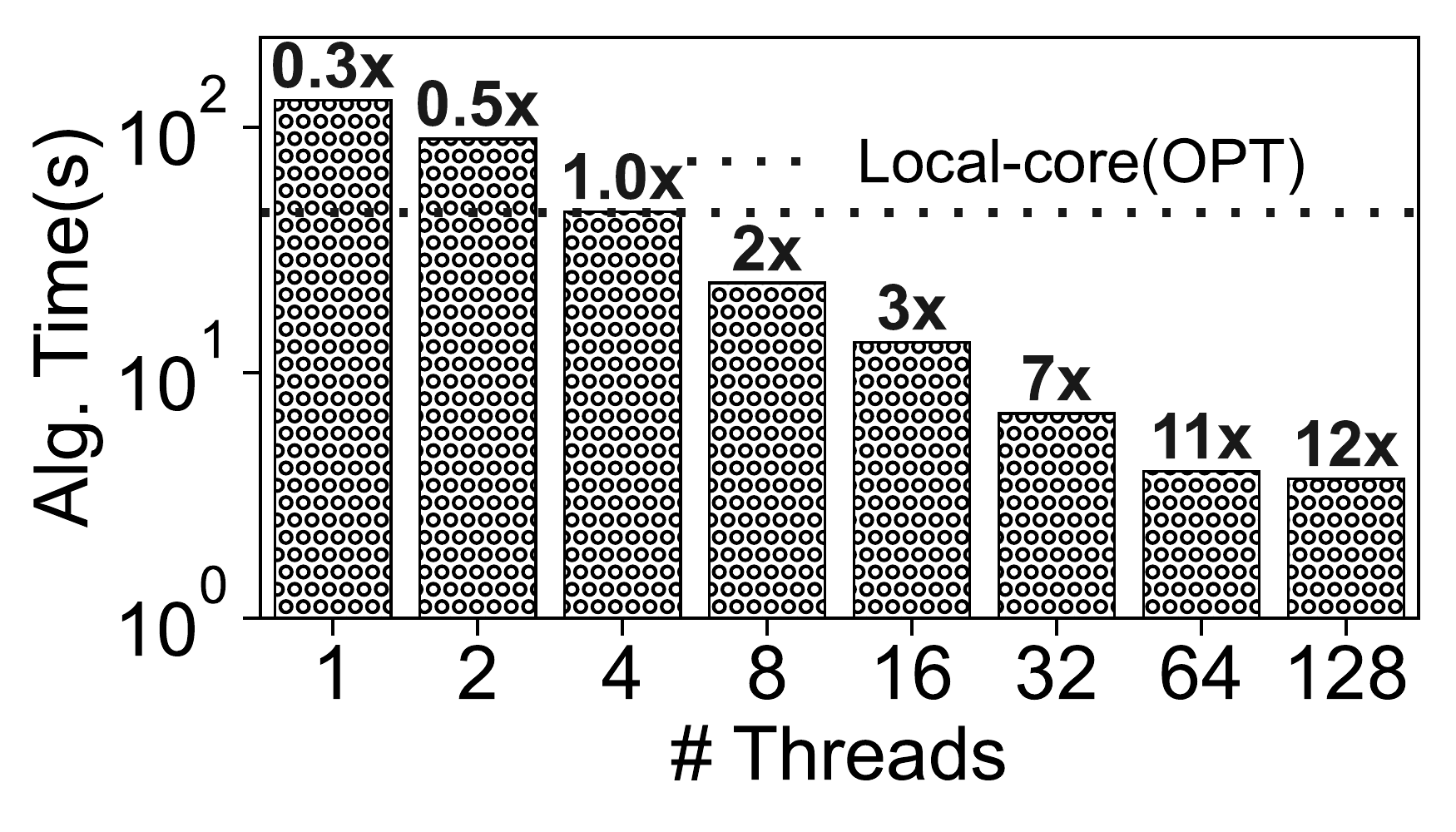}}
    \subfloat[\footnotesize{\em aminer}]{\includegraphics[scale=0.2]{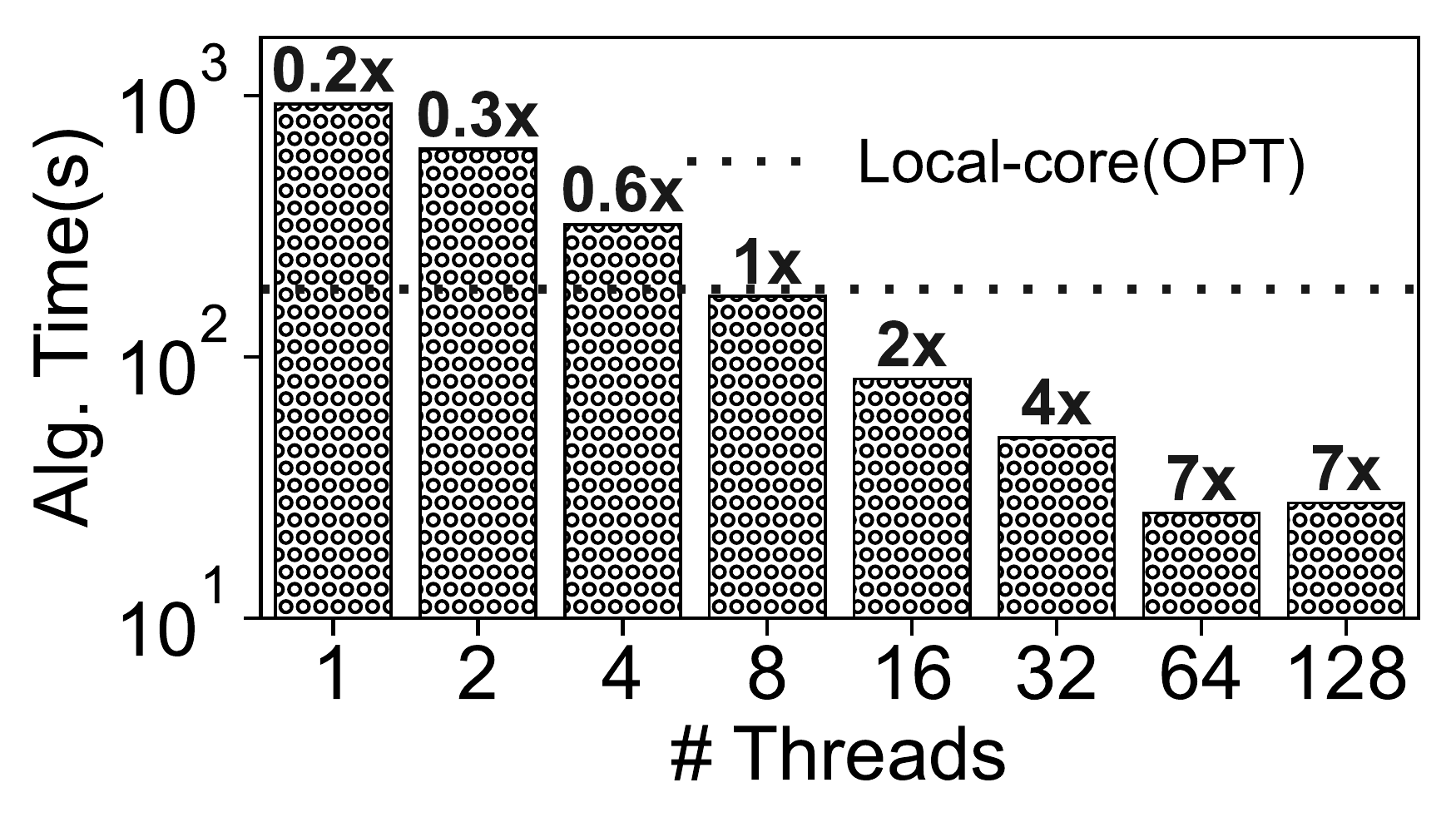}}
    \vspace{-5mm}
    \caption{\footnotesize \textbf{Local-core(P)} algorithm's execution time on larger datasets. \textbf{Local-core(P)} achieves up to 12x 
    speedup compared to sequential \textbf{Local-core(OPT)}.}
    \label{fig:fig:par_local_tm}
    \vspace{-5mm}
    \end{figure}
\begin{figure}[tb!]
    \centering
    {\includegraphics[scale = 0.2,trim={0.25cm 0.45cm 0.55cm 0.3cm},clip]{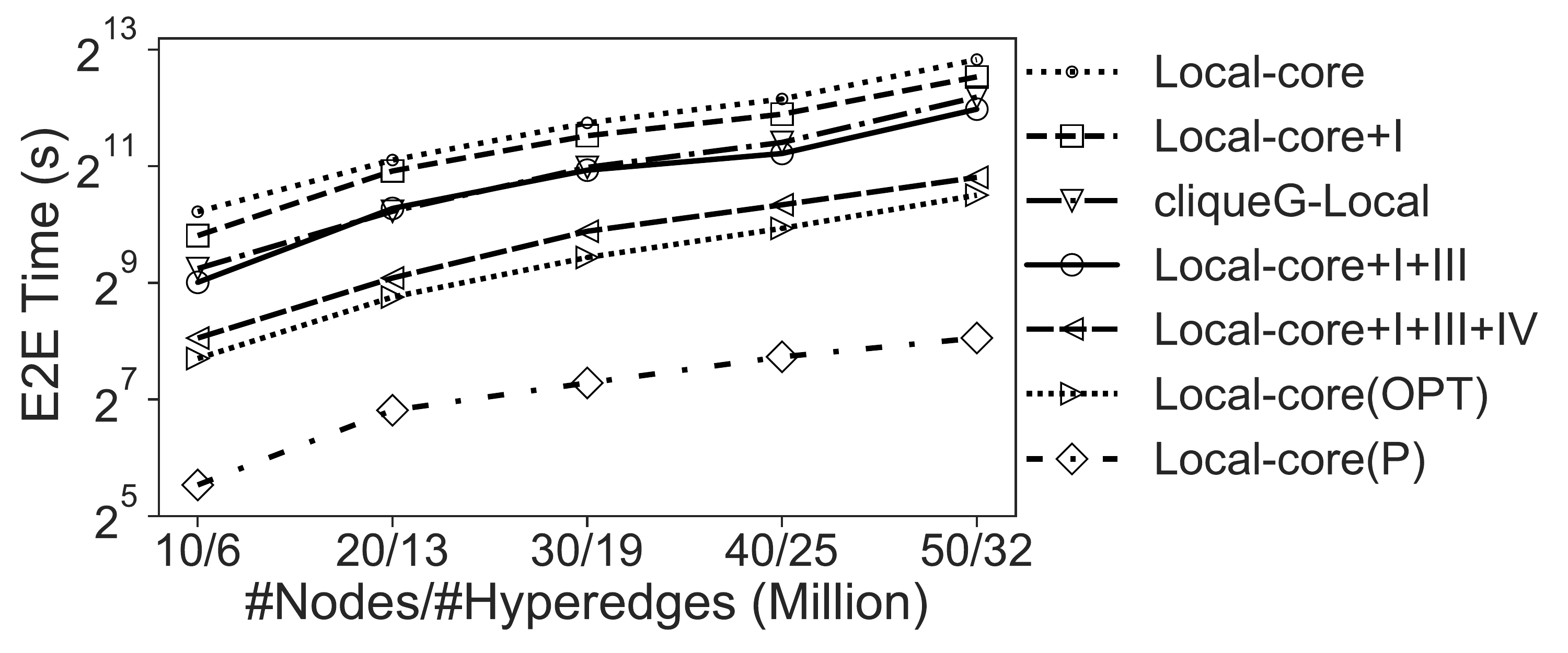}}
    \vspace{-3mm}
    \caption{\footnotesize {\naheed{Scalability comparison of the proposed algorithms with baseline.}} }
    \label{fig:scal}
    \vspace{-6mm}
\end{figure}
\subsection{Effectiveness of Local-core Algorithm}
\label{sec:effectiveness}
%
\vspace{-1mm}
\spara{Exp-1: Novelty \& importance of hypergraph $h$-index.}
We demonstrate the novelty of the proposed hypergraph $h$-index (\Cref{defn:cor_core}) by showing that
a direct adaptation of graph $h$-index (\Cref{def:wrongHind}) without any core correction, that is,
{\em running the local algorithm from \cite{distributedcore,eugene15} may produce incorrect hypergraph core-numbers}.
 ~\Cref{fig:wr_localcore}(a) depicts that a local algorithm that only considers graph $h$-index without adopting our
novel \textbf{Core-correction} (\S~\ref{sec:local}) generates incorrect core-numbers for at least $90\%$ nodes on
\textit{bin4U}, \textit{bin3U}, and \textit{congress}.
On \textit{contact}, \textit{enron}, \textit{pref3U}, \textit{dblp}, and \textit{aminer},
core-numbers for at least 15\%, 26\%, 5\%, 17\%, and 10\% nodes are incorrect, respectively.
As nodes in \textit{bin4U}, \textit{bin3U}, and \textit{congress} have relatively higher $\text{mean}(\abs{N(v)})$ (Table~\ref{table:dataset}),
there are more correlated neighbors in these datasets. Incorrect $h$-indices of correlated neighbors
have a domino-effect: A few nodes with wrong $h$-values, unless corrected, may cause all their neighbors to
have wrong $h$-values, which may in turn cause the neighbors' neighbors to have wrong $h$-values, and so on.
Unless all such correlated nodes are corrected,
almost all nodes eventually end up with wrong core-numbers.
%

We also compare the average error in the core-number estimates at each
iteration by graph $h$-index and our hypergraph $h$-index.
Here, $\text{avg. error at iteration n} = \sum_{u \in V} (h^{(n)}(u) - core(u))/\abs{V}$.
Figures \ref{fig:wr_localcore}(b)-(c) show avg. errors incurred at the end of each iteration on \textit{enron} and
\textit{bin4U}. At initialization, both indices have the same error on a specific dataset.
For both indices, avg. error at a given iteration is less than or equal to that in the previous iteration.
However at higher iterations, hypergraph $h$-index incurs less avg. error compared to graph $h$-index.
At termination, although hypergraph $h$-index produces correct core-numbers,
graph $h$-index has non-zero avg. error. These results suggest that hypergraph $h$-index estimates core-numbers more accurately
than graph $h$-index at intermediate iterations. We notice similar trends in other datasets, however for the same
iteration number, graph $h$-index produces higher avg. error on \textit{bin4U} than that on \textit{enron}. This is due to
more number of correlated neighbors in \textit{bin4U} than that in \textit{enron} as stated earlier.

\spara{Exp-2: Convergence of Local-core.}~\Cref{fig:wr_localcore}(d) shows that as  \# iterations increases in our \textbf{Local-core} algorithm (\S~\ref{sec:local}), the percentage of nodes with correctly converged core-numbers increases. The number of iterations for convergence depend on
hypergraph structure and computation ordering of nodes.
As \textit{pref3U}, \textit{dblp}, and \textit{enron} have more nodes with fewer correlated neighbors, more nodes achieve correct core-numbers after the first iteration.
On \textit{pref3U} 98\% nodes already converge by iteration 2. 
This observation also suggests that one can terminate \textbf{Local-core} algorithm early 
at the expense of a fraction of incorrect results 
on \textit{pref3U}.

%
%
%
\vspace{-2mm}
\subsection{Efficiency Evaluation}
\label{sec:exp_efficiency}
We evaluate the efficiency of our proposed algorithms: \textbf{Peel}, \textbf{E-Peel}, \textbf{Local-core(OPT)}, and \textbf{local-core(P)}
vs. baselines: \textbf{Clique-Graph-Local} and \textbf{Bipartite-Graph-Local}. The baselines \textbf{Clique-Graph-Local} and \textbf{Bipartite-Graph-Local}
consider clique graph and bipartite graph representations (\S \ref{sec:diff}), respectively, of the hypergraph and then apply local algorithms \cite{eugene15,LiuZHX21} 
for graph core decomposition. 
From Figures \ref{fig:exectm}(a)-(b), which report end-to-end running times, 
we find that 
\textbf{(1) E-Peel} is more efficient than \textbf{Peel} on all datasets. 
\textbf{(2) Local-core(OPT)} is more efficient than \textbf{Peel} on all datasets.
\textbf{(3) Local-core(P) is the most efficient algorithm among all the datasets.} \textbf{(4)} The baseline \textbf{Bipartite-Graph-Local} is the least efficient among all algorithms. This is because reducing the hypergraph to bipartite graph inflates the number of graph nodes and edges (\S \ref{sec:intro}), and distance-2 core decomposition \cite{LiuZHX21} on this inflated graph significantly increases the running time (\S \ref{sec:diff}). On the other hand, \textbf{Clique-Graph-Local}'s end-to-end running time is comparable to \textbf{Local-core(OPT)} (e.g., slightly more efficient on {\em contact} and {\em congress}, while less efficient on {\em enron} and {\em aminer}). This is because reduction to clique graph also inflates the problem size (\S \ref{sec:intro}), but graph-based $h$-index computation \cite{eugene15} on the clique graph does not require any core-correction. However, clique graph decomposition gives a different decomposition than our decomposition and in certain applications (as shown in \S \ref{sec:diff}, \S \ref{sec:application}) our decomposition is more desirable than clique graph decomposition.
%
%
\begin{table}[!tb]
\caption{\label{tab:init} \footnotesize {Data structure initialization times of all algorithms on {\em aminer}}}
\vspace{-4mm}
\scriptsize{
\resizebox{8.5cm}{!}{
\begin{tabular}{c|c|c|c|c|c}
\textbf{Local-core(P)} & \textbf{Local-core (OPT)} & \textbf{E-Peel} & \textbf{Peel}  & \textbf{Clique-G-Local} & \textbf{Bipartite-G-Local} \\ 
\hline
66 s & 510 s                     & 195 s        & 198 s         & 806  s                  & 793 s           
\end{tabular}}
}
\vspace{-6mm}
\end{table}

Various algorithms require different data structures,
which include alternative representations of a hypergraph, e.g., finding neighbors and incidence hyperedges 
for all nodes (all algorithms), creating bipartite graph (\textbf{Bipartite-Graph-Local}), clique graph (\textbf{Clique-Graph-Local}), and CSR (\textbf{Bipartite-Graph-Local}, \textbf{Clique-Graph-Local}, \textbf{Local-core(OPT)}, \textbf{Local-core(P)}).
In end-to-end running times (Figures \ref{fig:exectm}(a)-(b)),
we included initialization times of data structures (shaded with dark-black on top of each bar). We
also report initialization times on our largest {\em aminer} dataset separately in Table~\ref{tab:init}.  
We parallelize CSR construction, finding neighbors and incidence hyperedges for nodes using OpenMP with dynamic scheduling \cite{openMP}. Parallel initialization time of data structures and execution time of \textbf{Local-core(P)}
are 66 sec and 25 sec, respectively, on {\em aminer}, thus end-to-end time being 91 sec.

\spara{Exp-3: Efficiency of \textbf{E-Peel}.}
As stated in \S\ref{subsec:improved_peel}, the speedup of \textbf{E-Peel} over \textbf{Peel}
is related to $\alpha$, which is the ratio of the $\#\abs{N(u)}$ queries made by
\textbf{E-Peel} to that of \textbf{Peel}. 
In Figures \ref{fig:exectm}(a)-(b), \textbf{E-Peel} achieves the highest speedup (17x compared to \textbf{Peel}) on \textit{enron}  because $\alpha = 0.35$ is the smallest
on this dataset.
%

%
%

In remaining experiments (Figures \ref{fig:exectm}(c)-(d), 
 \ref{fig:fig:par_local_tm}), we compare similar algorithms that have same initialization time on a dataset, thus we compare these algorithms' execution times (`Alg. Time' on $y$-axis).

\spara{Exp-4: Efficiency and impact of optimizations to \textbf{Local-core}.}
We next analyze the efficiency of the proposed optimizations (\S\ref{sec:optimization})
with respect to \textbf{Local-core} without any optimization.
\textbf{Local-core+I} incorporates optimization-I to \textbf{Local-core},
\textbf{Local-core+I+III} incorporates optimization-III on top of \textbf{Local-core+I},
\textbf{Local-core+I+III+IV} incorporates optimization-III on top of \textbf{Local-core+I+III}.
Finally, \textbf{Local-core(OPT)} incorporates all four optimizations.
Figures \ref{fig:exectm}(c) shows the execution times of \textbf{Local-core}, \textbf{Local-core+I}, \textbf{Local-core+I+III}, \textbf{Local-core+I+III+IV},
and \textbf{Local-core(OPT)} on four representative datasets. On all hypergraphs,
we observe that adding each optimization generally reduces the execution time.


\spara{Exp-5: Impact of parallelization.} 
We test the parallelization performance of \textbf{Local-core(P)} by varying the number of threads (\Cref{fig:fig:par_local_tm}).
Adding 64-128 threads reduces 
the overall execution time up to 7-12x on larger {\em dblp} and {\em aminer} datasets. 
We measure execution times of \textbf{Local-core(P)} without load-balancing (\S\ref{sec:optimization}),
and find that load-balancing speeds up \textbf{Local-core(P)} up to 1.8x on \textit{aminer} and \textit{dblp}.

\spara{\naheed{Expt-6: Scalability.}} \naheed{We test scalability of the proposed algorithms and the baseline (\textbf{Clique-Graph-Local}) on  5-uniform hypergraphs with power law degree distribution (exponent 2.1) ~\cite{avin2019random}, consisting from 10 million up to 50 million nodes. Figure~\ref{fig:scal} shows that our most-efficient algorithm \textbf{Local-core(P)}, as well as the sequential algorithms \textbf{Local-core-I+III+IV} and \textbf{Local-core(OPT)}, outperform the baseline \textbf{Clique-Graph-Local}.  Our \textbf{Local-core(P)} is also scalable, requiring only 248 sec end-to-end running time on 5-uniform hypergraph with 50 million nodes and 32 million hyperedges.}

\spara{Expt-7: Efficiency of \textbf{Local-core+Peel} for (neighborhood, degree)-core decomposition.}
Figure \ref{fig:exectm}(d) shows that the \textbf{Local-core+Peel} algorithm for $(k,d)$-core decomposition takes more time than \textbf{Local-core(OPT)} on larger datasets. This is expected since the number of possible $(k,d)$-cores is $\bigO(k_{max}\cdot d_{max})$, whereas the number of neighborhood-based cores is only $k_{max}$ (\S \ref{sec:kdcore}). As the output size increases, the algorithm \textbf{Local-core+Peel}, on top of running \textbf{Local-core(OPT)}, explicitly constructs subhypergraphs $H[V_k]$ and peels $H[V_k]$ based on degree-dimension, for all nbr core-numbers $k$. 
%
%
\eat{
\begin{figure*}
    \vspace{-3mm}
    \centering
    \subfloat[]{\includegraphics[scale=0.2]{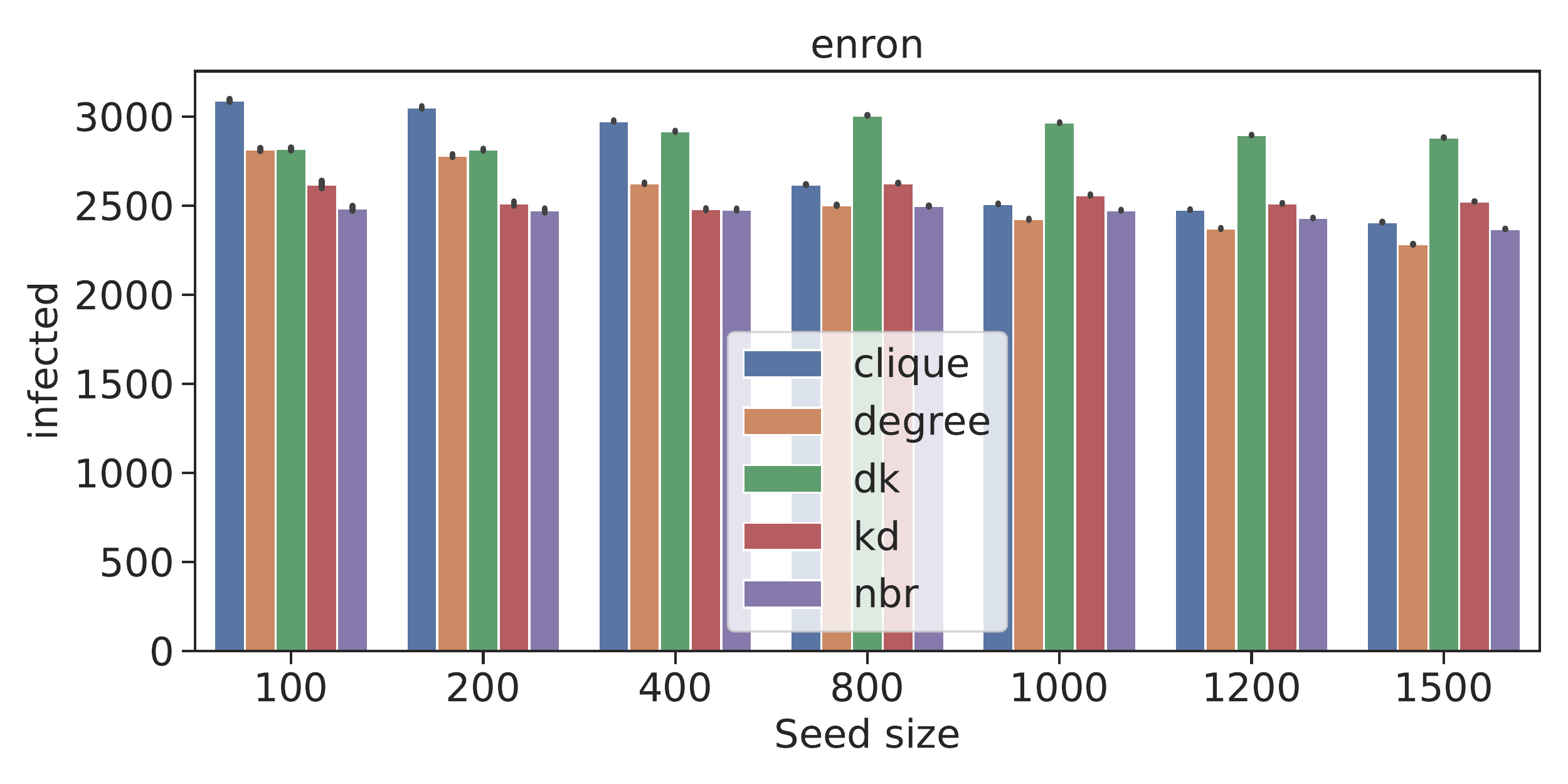}}
    \subfloat[]{\includegraphics[scale=0.2]{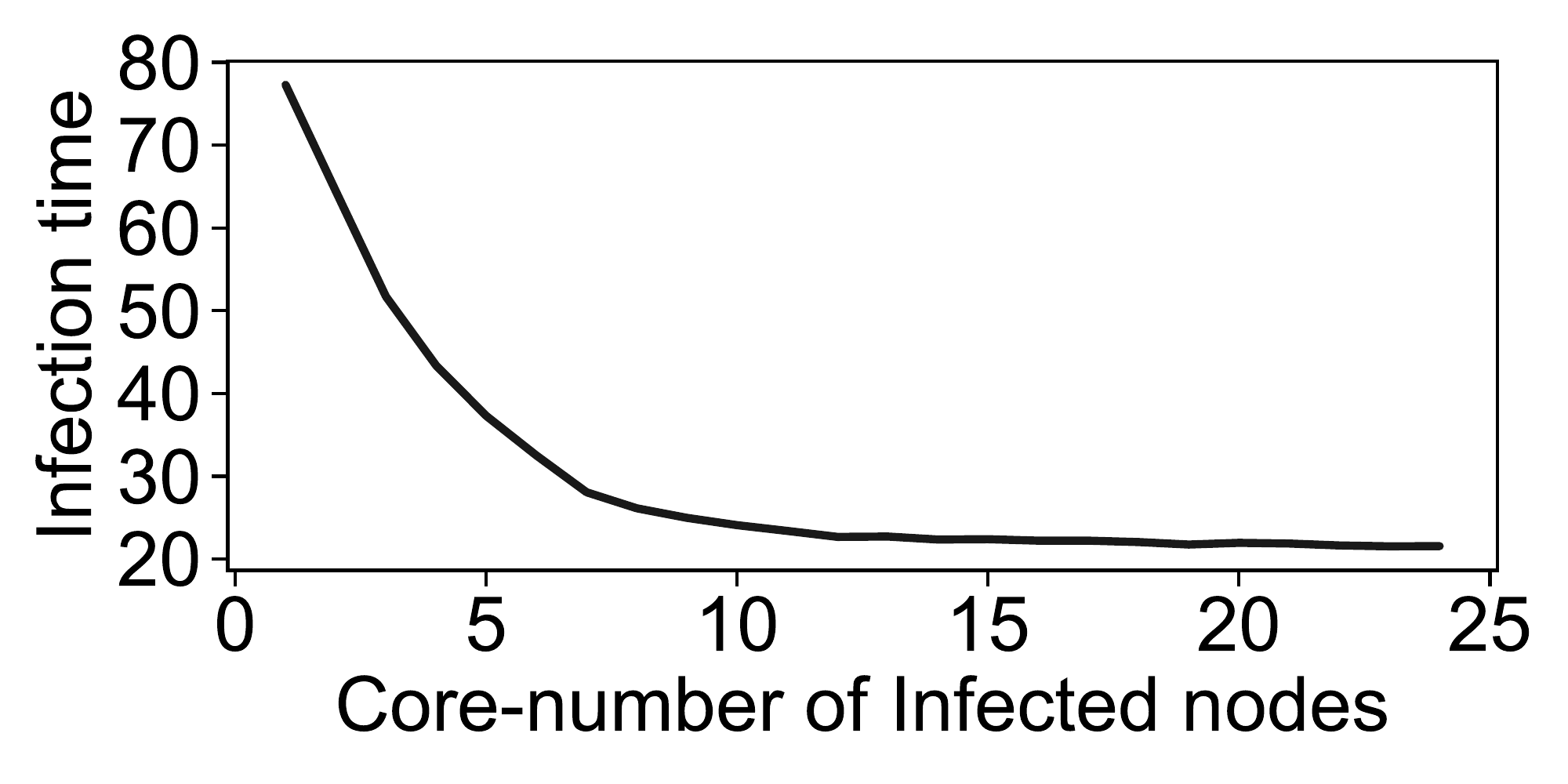}}
    \subfloat[]{\includegraphics[scale=0.22]{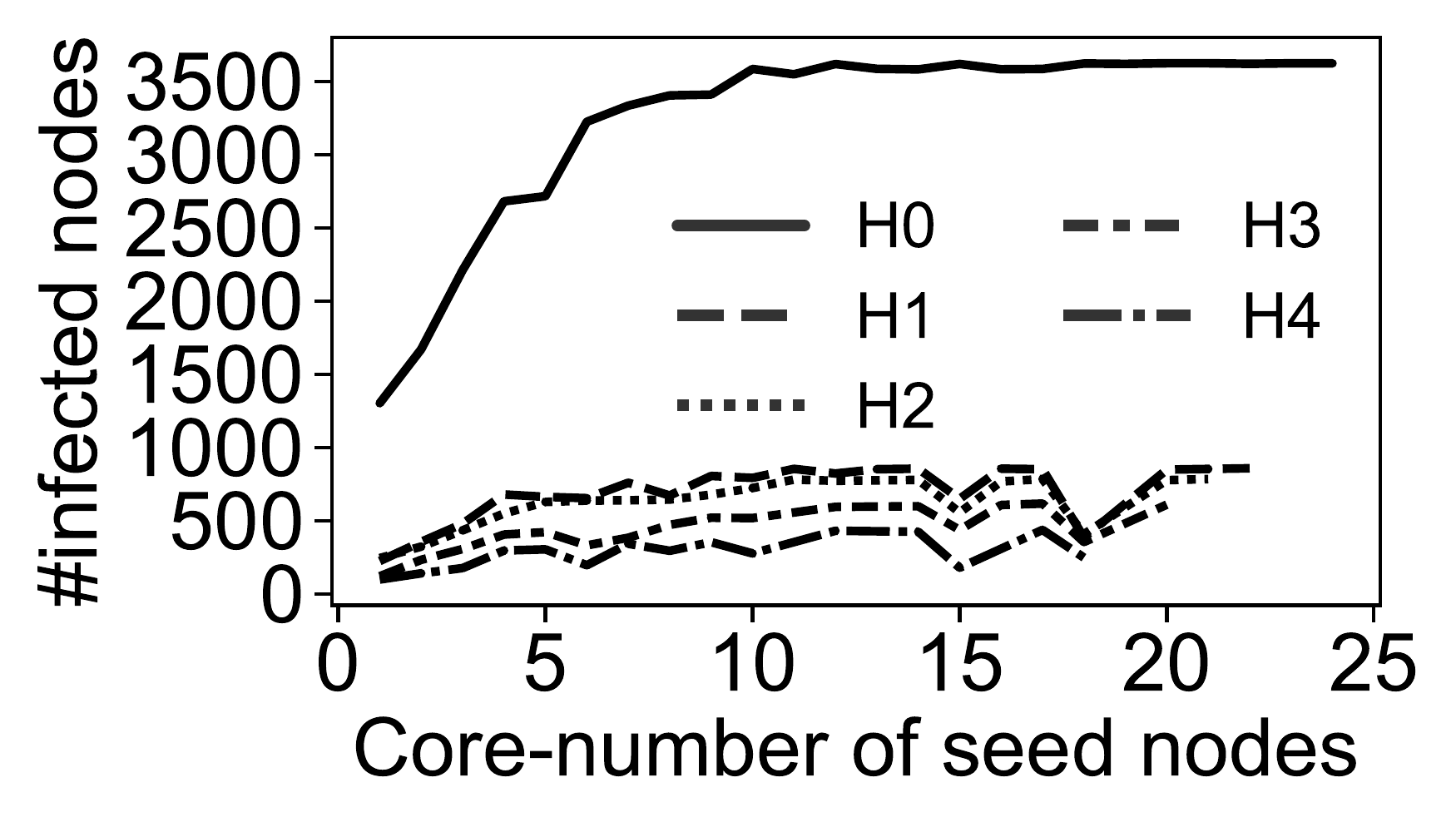}}
    \subfloat[]{\includegraphics[scale=0.22]{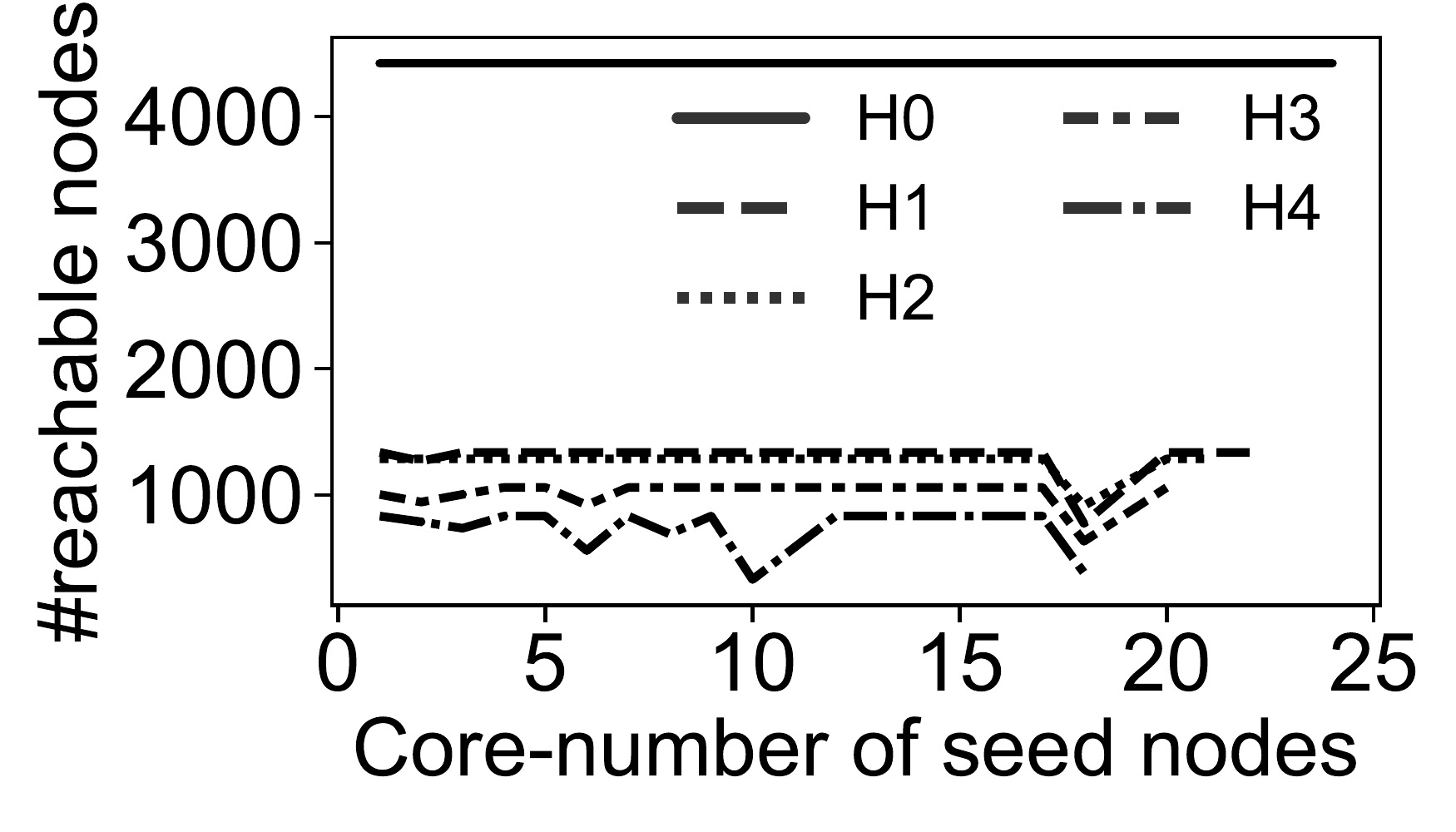}}
    \vspace{-4mm}
    \caption{\footnotesize 
     (a) The number of infected nodes increases as the core-number of the seed node increases. (b) The infection time of nodes decreases for nodes with higher core-numbers indicating that innermost cores are infected earlier. 
    (c) The impact of deleting the innermost-core for disrupting diffusion. 
    $H0$ is the input hypergraph $H$, $H1$ is constructed by deleting nodes in the innermost-core of $H0$,
    $H2$ is constructed by deleting nodes in the innermost-core of $H1$, and so on.
    The number of infected nodes generally decreases after each deletion of the innermost-core.
    (d) The reason behind such intervention to be effective is that the average number of nodes reachable from the seed node decreases due to deletion of the innermost-core 
    ({\em Enron} dataset.).
    } 
    \label{fig:application}
    \vspace{-6mm}
\end{figure*}
}
%
%
\eat{
\begin{figure*}[!htb]
    \centering
    \subfloat[]{\includegraphics[scale=0.26]{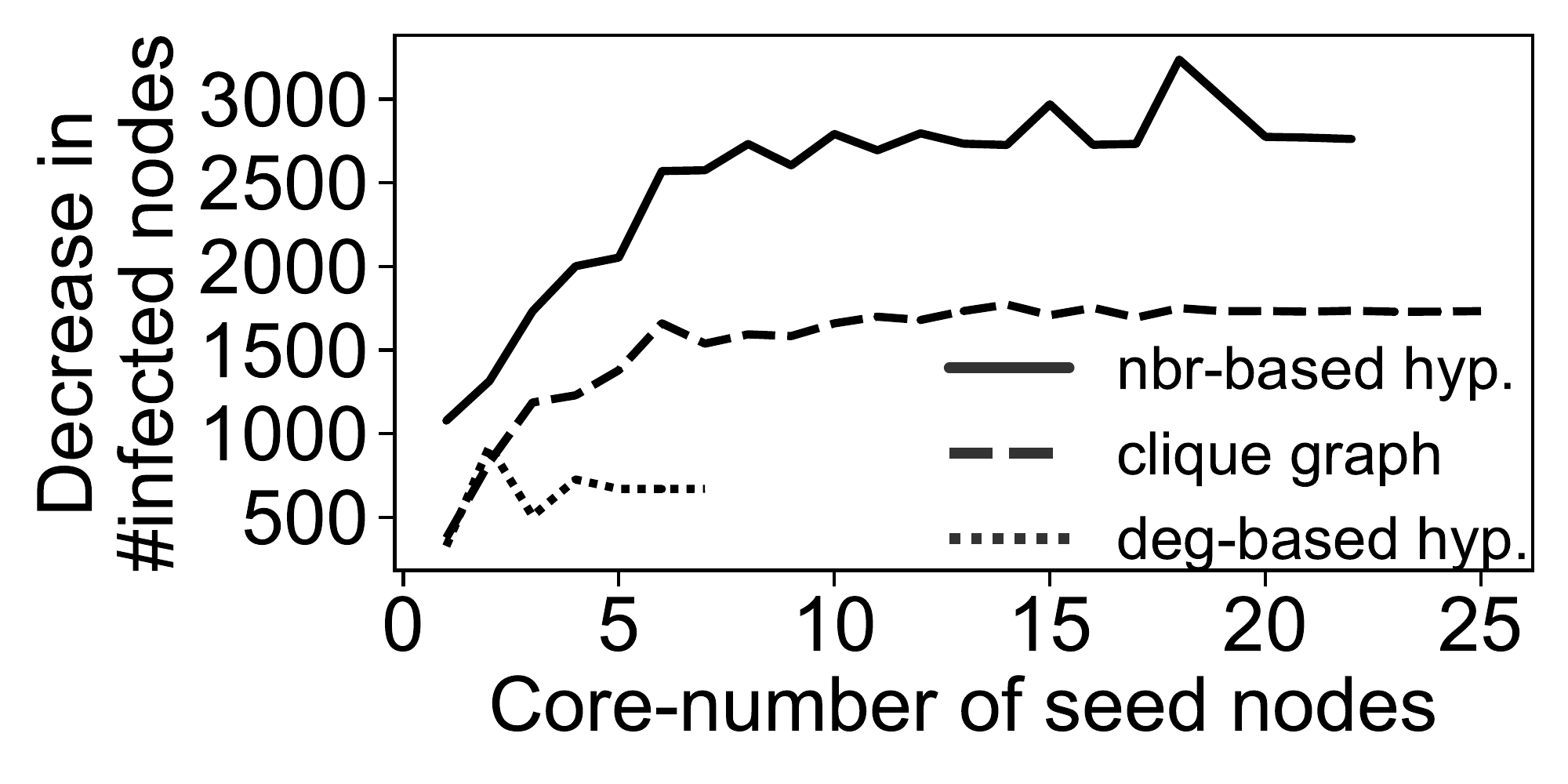}}
    \hspace{2mm}
    \subfloat[]{\includegraphics[scale=0.26]{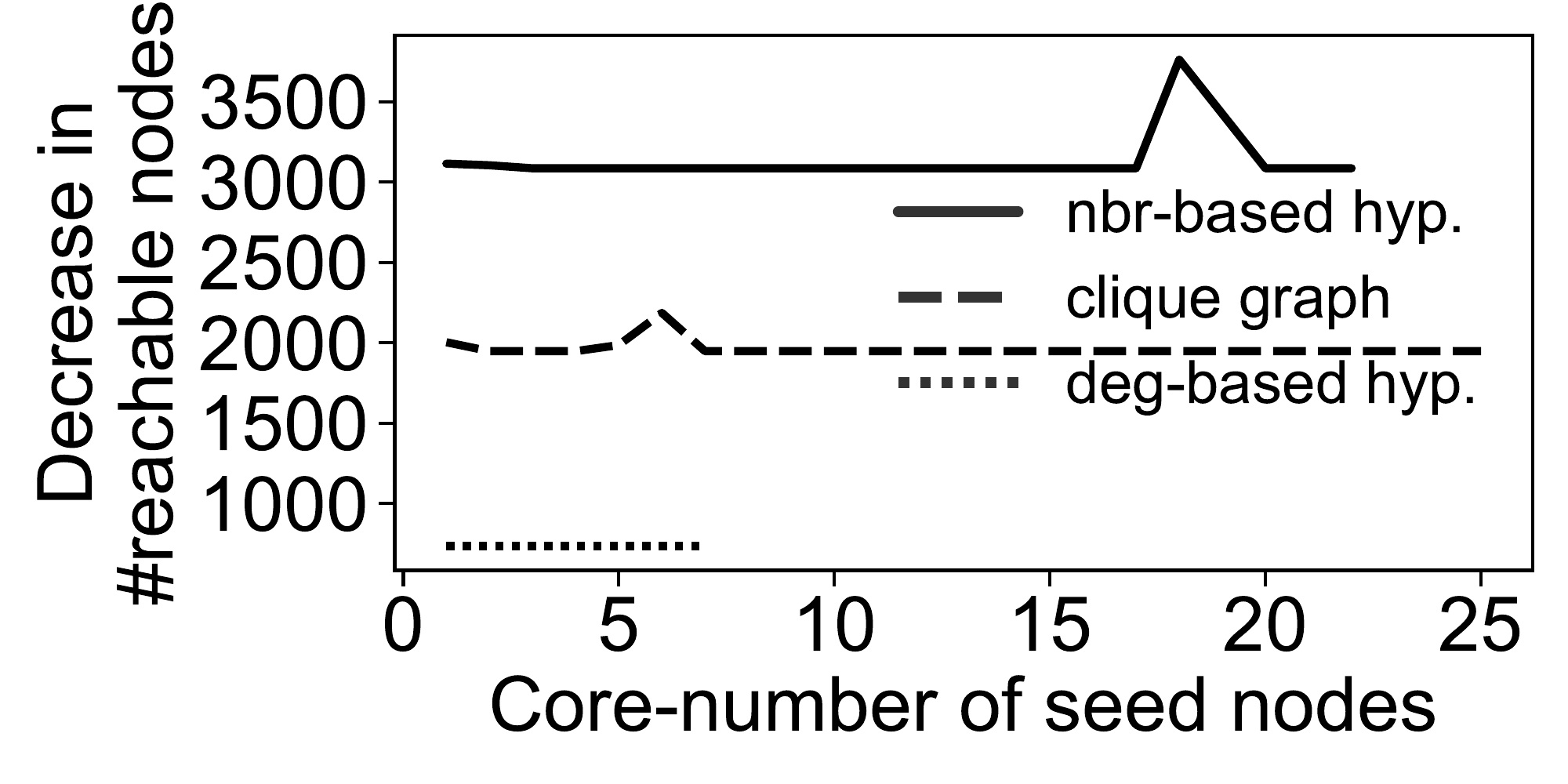}}
    \hspace{2mm}
    \subfloat[]{\includegraphics[scale=0.26]{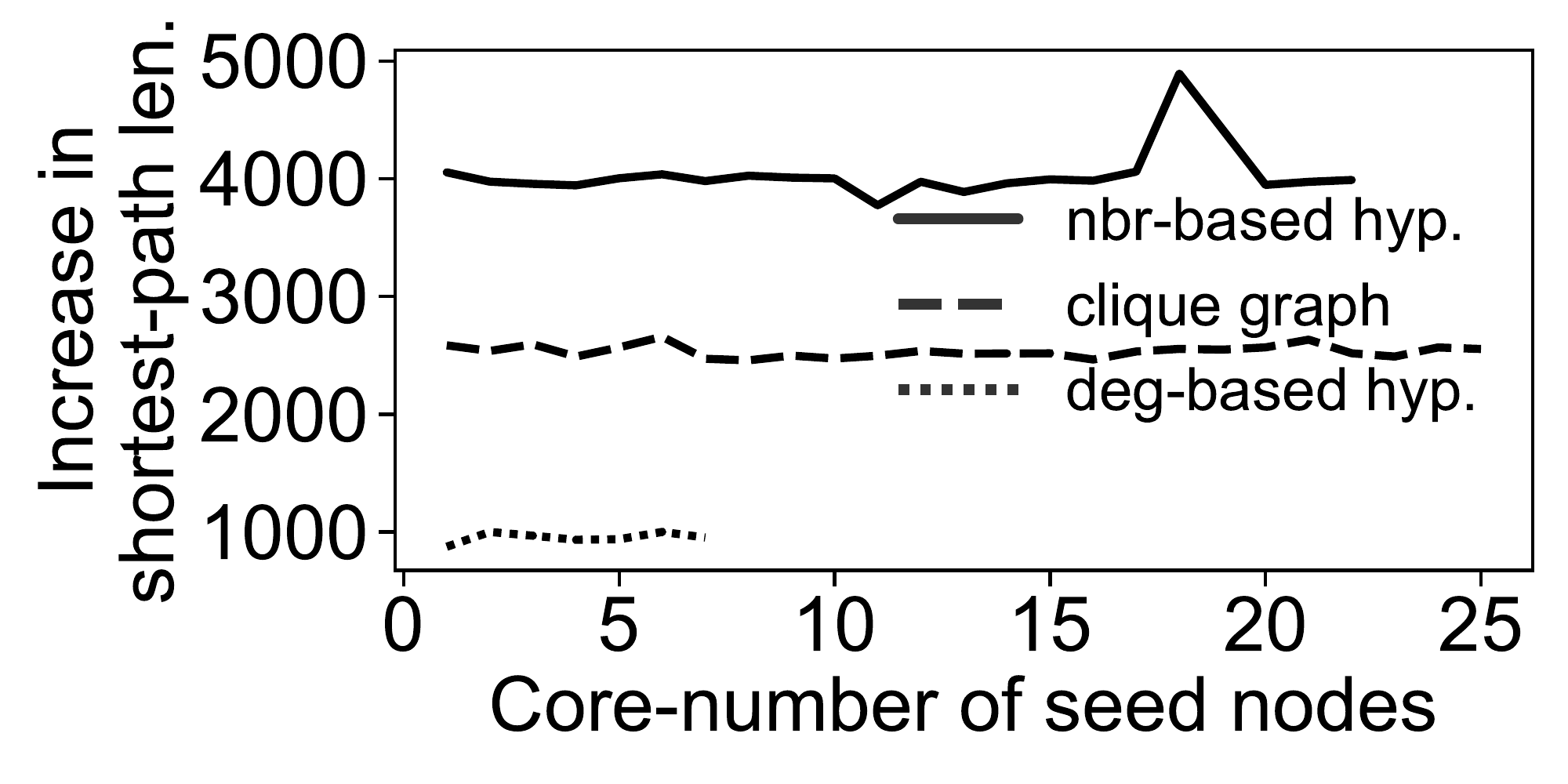}}
    \vspace{-4mm}
    \caption{\label{fig:interv} \footnotesize Comparison between different decomposition approaches on \emph{enron}: Different decomposition methods assign different core-numbers to the same node. Thus, not all methods have the same core-number range and the curves representing different methods have different spans along $X$-axis.
     (a) Decrease in the number of infected population due to deletion of the innermost-core from different decomposition approaches.
     (b) Deleting an innermost neighborhood-based core causes the most disruption in reachability among nodes.  (c) Nodes becomes either unreachable ($\infty$ is replaced by $|E|$=5\,734 to compute the avg. shortest path lengths from seed nodes) or the cost of reaching them from seed nodes after deleting the innermost-core becomes higher.}
    \vspace{-5mm}
\end{figure*}
}
\section{Applications and Case Studies}
\label{sec:application}
We study the superiority of important nodes from neighborhood-based decomposition over those from other decompositions.
\begin{figure}[tb!]
    \vspace{-4mm}
    \centering
    \subfloat[{\em \scriptsize dblp}]{\includegraphics[scale=0.22]{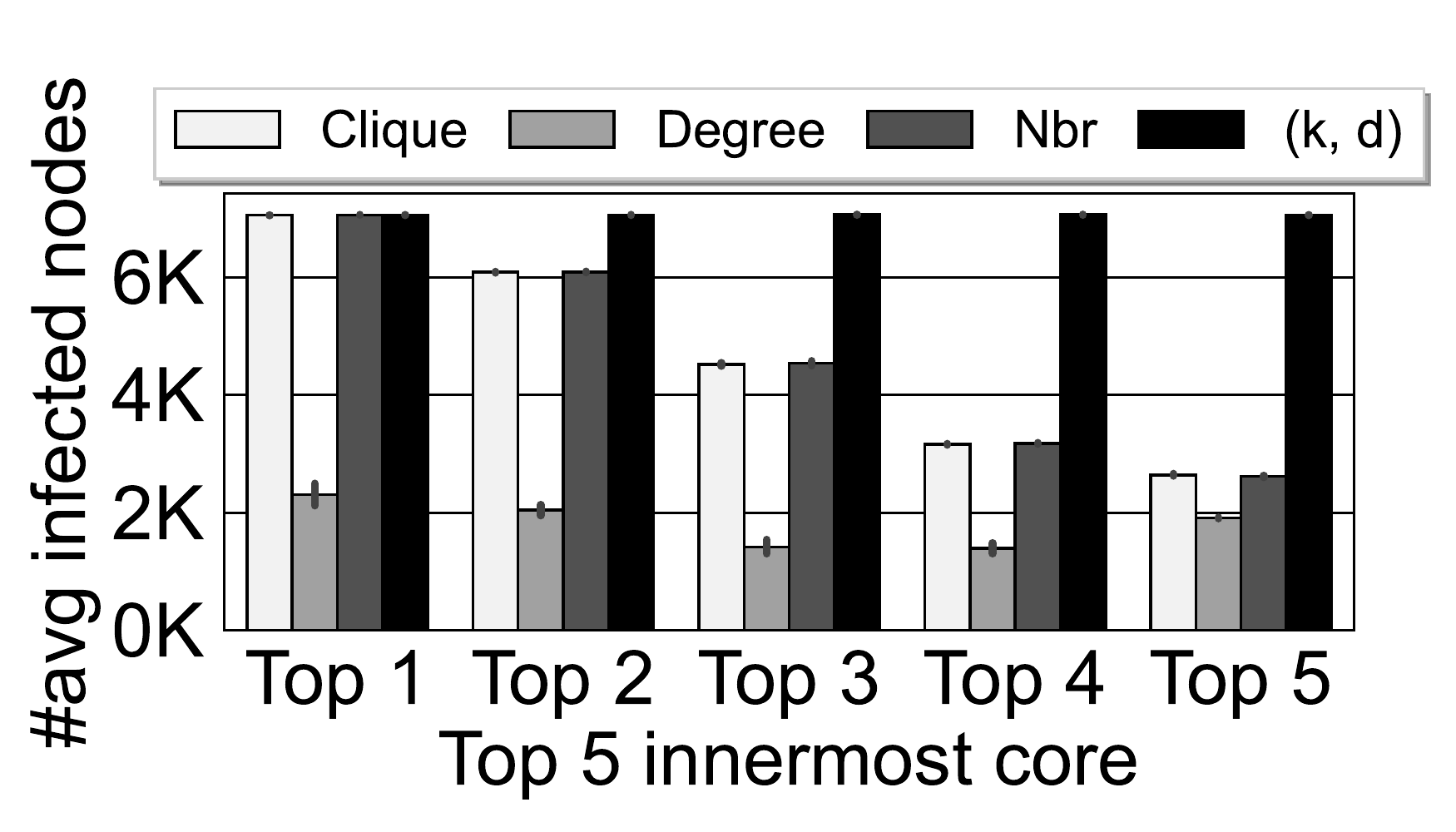}}\vspace{-2.2mm}
    \subfloat[{\em \scriptsize dblp}]{\includegraphics[scale=0.18,trim={0.4cm 0.6cm 0.5cm 0.3cm},clip]{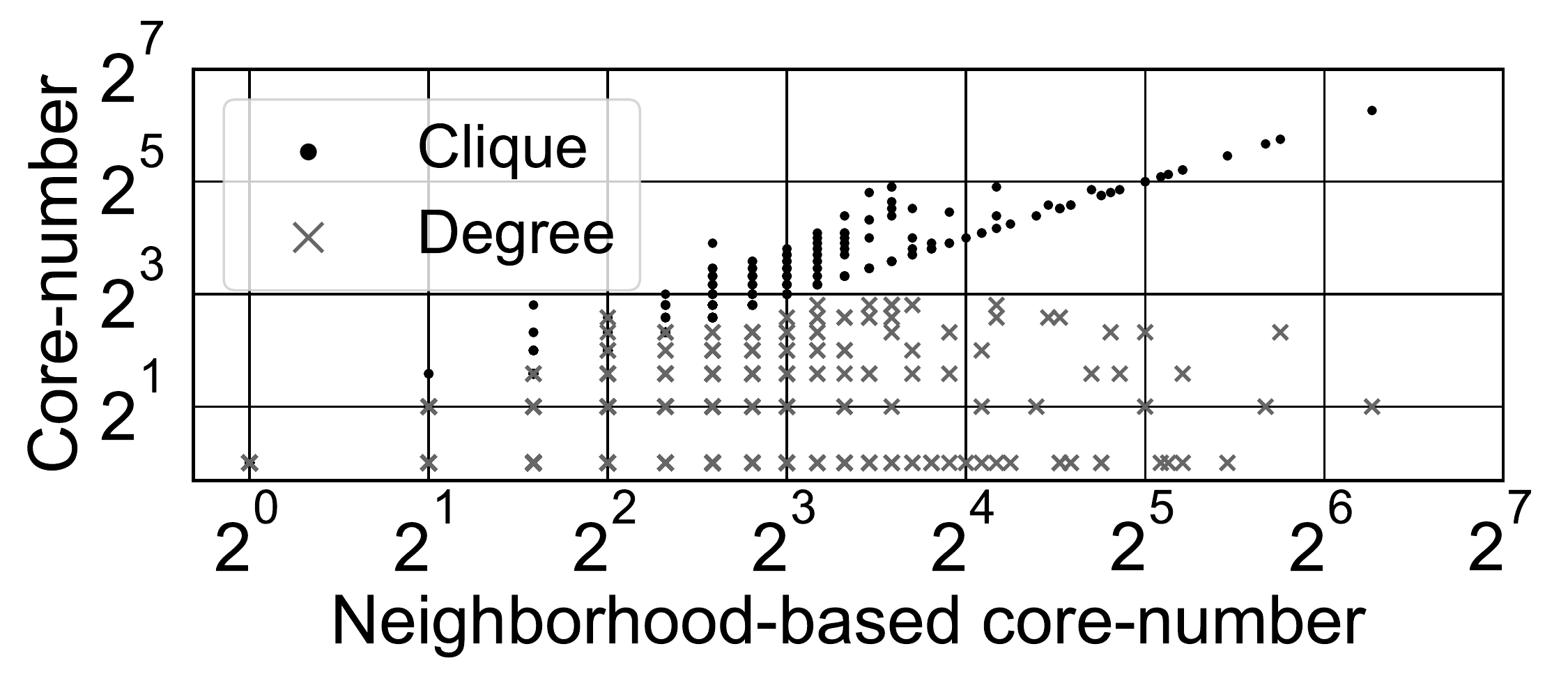}}\vspace{-2.2mm}
    \subfloat[{\em \scriptsize dblp}]{\includegraphics[scale=0.22]{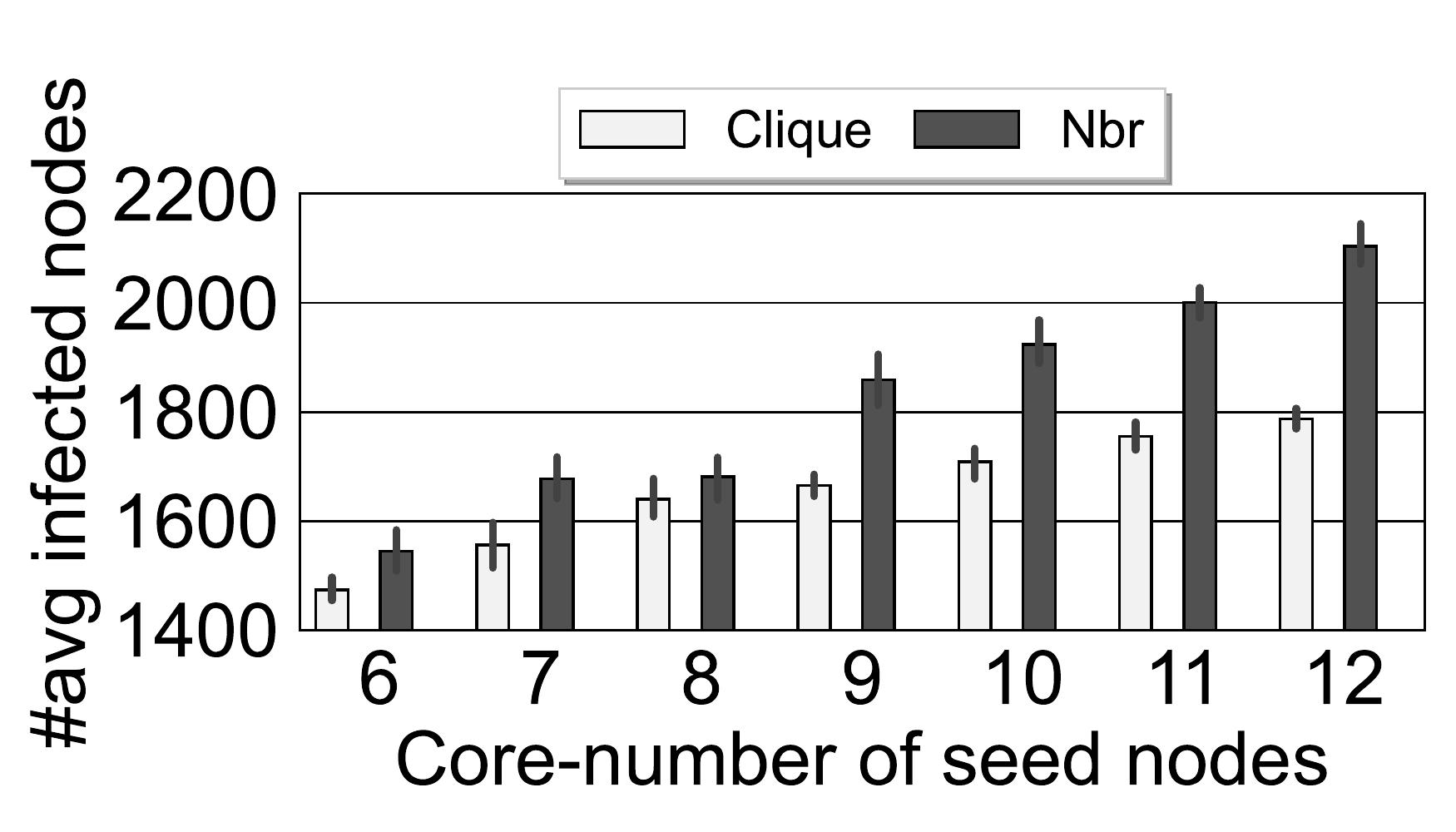}}\vspace{-2mm}
    \subfloat[{\em \scriptsize dblp}]{\includegraphics[scale=0.2]{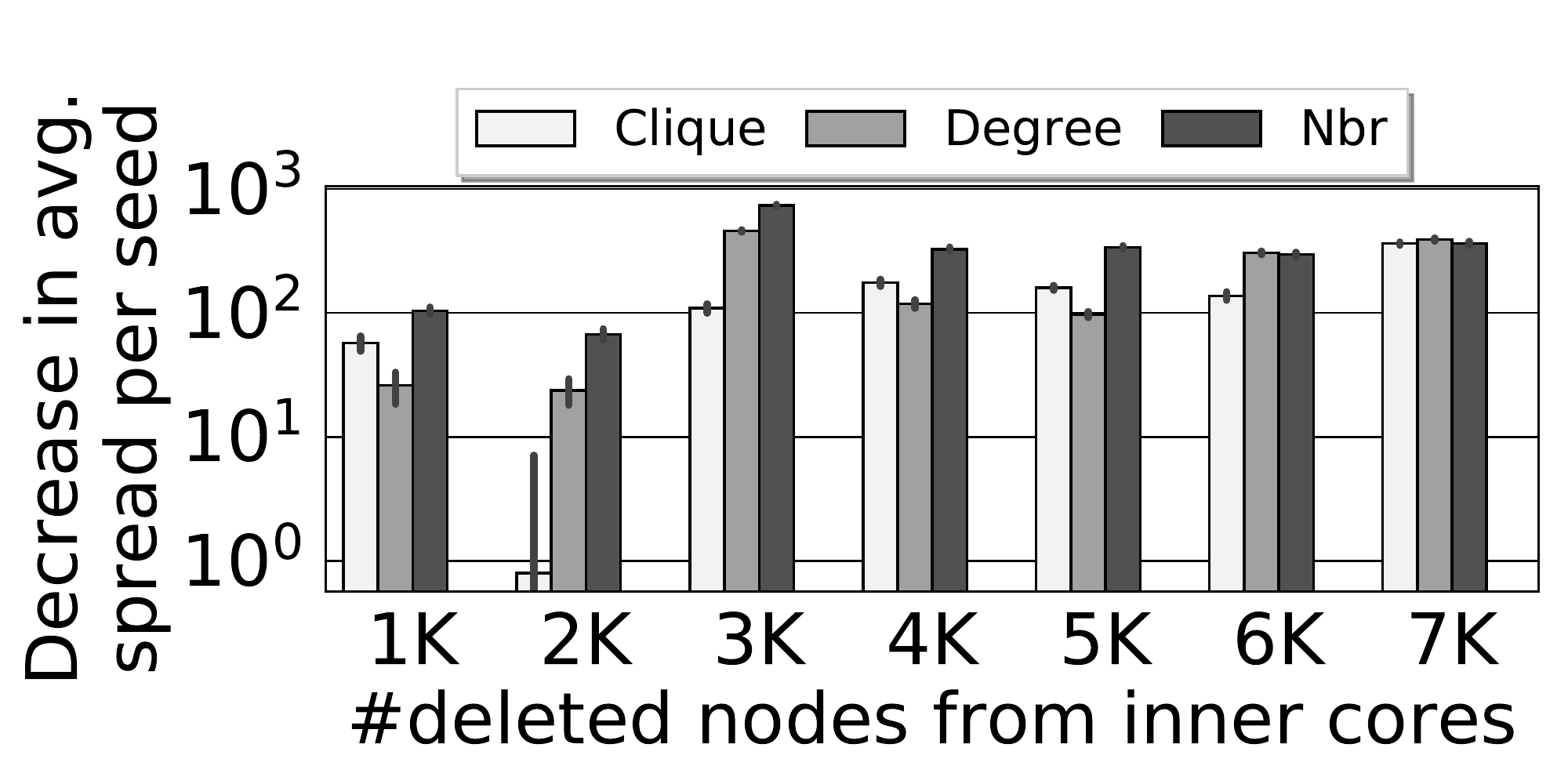}}
    \vspace{-2mm}
    \caption{\footnotesize 
     \naheed{(a) The avg. \#infected nodes decreases as seeds are selected from innermost to outer cores. }  
     (b) Neighborhood decomposition core-number vs. degree and clique graph decomposition core-numbers of nodes. \naheed{(c) Neighborhood-based decomposition outperforms clique-based method when seeds are selected from outer cores.} (d) The decrease in avg. expected \#infected nodes per seed is the highest when the top-$k$ nodes are deleted via neighborhood-based decomposition.
     }
    \label{fig:application}
    \vspace{-6mm}
\end{figure}
\subsection{Influence Spreading and Intervention}
\label{sec:inf_applications}
%
We consider the \emph{SIR} diffusion process~\cite{kitsak}:
Initially, all nodes except one\textemdash called a \textit{seed}\textemdash are at the \emph{susceptible} state.
The seed node is initially at the \emph{infectious} state. At each time step, each infected node infects its susceptible neighbors 
with probability $\beta$ and then becomes \emph{immunized}. Once a node is immunized, it is never re-infected.

\spara{Inner-cores contain influential spreaders (Figure~\ref{fig:application}(a-c)).}
\eat{For a certain decomposition, we rank nodes in inner cores higher than those in outer cores -- we select the top-$r$ seeds (on the $x$-axis) following this order. For $(k,d)$-cores, nodes in the same neighbor $k$-core are further sorted based on their degree-based core-numbers according to $(k,d)$-core decomposition.}
\naheed{For each decomposition method, we select the top-$5$ innermost cores (on the $x$-axis), and for each core, we run the SIR model ($\beta$=0.3) from each of 100 uniformly selected seed nodes in that core. For $ (k, d) $-cores, we choose inner cores with top-$ 5 $ $ k $ values, where for each $ k $, we select seed nodes with the maximum $ d $-value.  Figure~\ref{fig:application}(a) shows the \#infected nodes averaged per seed node from a core. We find that {\bf (1)} the avg. \#infected nodes generally decreases as we move from inner to outer cores, implying that inner cores contain better quality seeds. 
{\bf (2)} Seeds selected according to $(k,d)$-core decomposition outperform seeds selected via other decomposition, indicating that our {\em $(k,d)$-core decomposition produces the best-quality seeds for maximizing diffusion}. {\bf (3)} Seeds selected via degree-based decomposition have the lowest spread. {\bf (4)} The quality of seeds derived from the neighborhood and clique graph decompositions have similar spread, the reason being that their inner cores have higher similarity in terms of constituent nodes, while outer cores from these two decompositions are quite different in {\em dblp}.}
\eat{For each seed node in the top-$r$ set, we run the SIR model ($\beta$=0.3) and compute the expected number of infected nodes (spread) after 100 time-steps. Finally, on the $y$-axis, we report the spread averaged over the seeds.
We find that {\bf (1)} with larger $r$, avg. spread per seed from the top-$r$ set generally decreases, implying that inner cores contain better quality seeds. 
{\bf (2)} Seeds selected according to $(k,d)$-core decomposition outperform seeds selected via other decomposition, indicating that {\em $(k,d)$-core decomposition produces the best seeds for maximizing diffusion}. {\bf (3)} Seeds selected via degree-based decomposition have the lowest spread. The quality of seeds derived from the neighborhood and clique graph decompositions have similar spread, the reason being that their inner cores have higher similarity in terms of constituent nodes, while outer cores from these two decompositions are quite different in {\em dblp}} 
\naheed{More specifically, Figure~\ref{fig:application}(b) highlights this difference by comparing core-numbers of 1000 randomly selected nodes according to different methods. We notice a linear correlation between the neighborhood and clique graph decomposition-based core-numbers for nodes in inner cores, but such linear correlation diminishes as we consider nodes from outer cores.} We do not observe any correlation between the neighborhood and degree-based core-numbers. \naheed{ {\bf{ (5) }} Figure~\ref{fig:application}(c) shows that neighborhood-based method outperforms clique graph when seed nodes are selected from relatively outer cores.} 

%
%
%

\eat{
\spara{Innermost-core contracts diffusion early.}
We select a seed node from the hypergraph uniformly at random and run the {\em SIR} model
for $100$ time-steps. For each seed node, we record the time step at which other nodes are infected.
Repeating $1000$ times with different seeds, we report the average infection times.
Figure~\ref{fig:application}(b) indicates that nodes in inner-cores are infected earlier than those in outer-cores.
}

\spara{Deleting inner-cores for maximum intervention in spreading (Figure~\ref{fig:application}(d)).}
We 
devise an intervention strategy to disrupt diffusion,
which is critical in mitigating the spread of contagions in epidemiology,
limiting the spread of misinformation, or blocking competitive campaigns in marketing. The nodes in inner cores according to a specific decomposition method (neighborhood, degree, or clique) are considered more important than nodes in outer cores.
We select the top-$k$ most important nodes (on the $x$-axis) according to a specific decomposition, then we delete those nodes and incident hyperedges. For fairness across different decompositions, we consider the same set of seeds for all -- seeds are selected from outside the union of deleted node sets according to three decompositions. To determine the effectiveness of such deletion strategy, on the $y$-axis, we report the decrease in average spread (expected number of infected nodes) per seed. We observe that when deleting up to 6K nodes, deletion of important nodes via neighborhood-based decomposition always results in a maximum decrease of spread, compared to the same done via other decompositions (degree and clique). Beyond 6K nodes, each decomposition method deletes a significant number of important nodes according to that decomposition, and the quality of seeds (which are from outside deleted regions) also degrades; thus, all approaches cause a similar decrease in spread. This result shows that our {\em neighborhood-based decomposition produces the best order of important nodes for deleting a limited number of them, while causing the maximum intervention in spreading}.  

%
%
\eat{
\spara{Comparative analysis of intervention strategies.}
We compare the effectiveness of different core decomposition approaches on hypergraphs,
while applying the same intervention strategy.
For that, we take the difference in the number of infected population, average number of reachable
nodes from the seed, and average length of shortest paths from the seed
between $H0$ and $H1$, separately for all three decomposition approaches:
neighborhood-based (this work), degree-based~\cite{ramadan2004hypergraph,sun2020fully}, and clique graph based (
\S~\ref{sec:diff}
).
%

In Figure~\ref{fig:interv}(a), we observe that the
neighborhood-based intervention results in the largest decrease in infected population compared to other two core decomposition approaches,
thereby showing superior effectiveness of our decomposition-based intervention. To explain why our intervention is more effective than others,
Figures~\ref{fig:interv}(b)-(c) show the highest decrease in the average number of reachable nodes and the highest increase in the average length of shortest paths
in our decomposition-based intervention. 
%
}
\eat{
\vspace{-2mm}
\subsection{Mining Important Protein Complexes}
\label{sec:protein}
We computed nbr-based innermost core of human protein complex (2611 hyperedges/complexes and 3622 nodes/proteins). We compare the extracted complexes with the same extracted using existing graph methods,namely, dist2 bipartite graph and clique graph, as well as hypergraph degree-based method. We found that degree-based method failed to extract any important complexes, while all the rest recovered the same sub-hypergraph shown in~\cref{fig:human}. In the figure, the largest hyperedge (351) is the {\textsf Spliceosome complex} known assists in cell-evolution process. Two subsets of {\textsf Spliceosome} (6072 and 6073) are responsible for regulating
{\textsf mRNA splicing}, which are known to be affected
by a genetic disease called {\textsf TAU-mutation} causing {\textsf frontotemporal dementia}. It also contains another important subcomplex {\textsf hTREX84} (6160) which has been found to be highly correlated with ovarian and breast cancers \cite{hTREX}. These observations suggests that large hyperedges in a hypergraph are sometimes meaningful, as they contain functionally important smaller hyperedges.
\begin{figure}
    \centering
    \vspace{-3mm}
    \includegraphics[width= 0.5\linewidth, height = 0.26\linewidth]{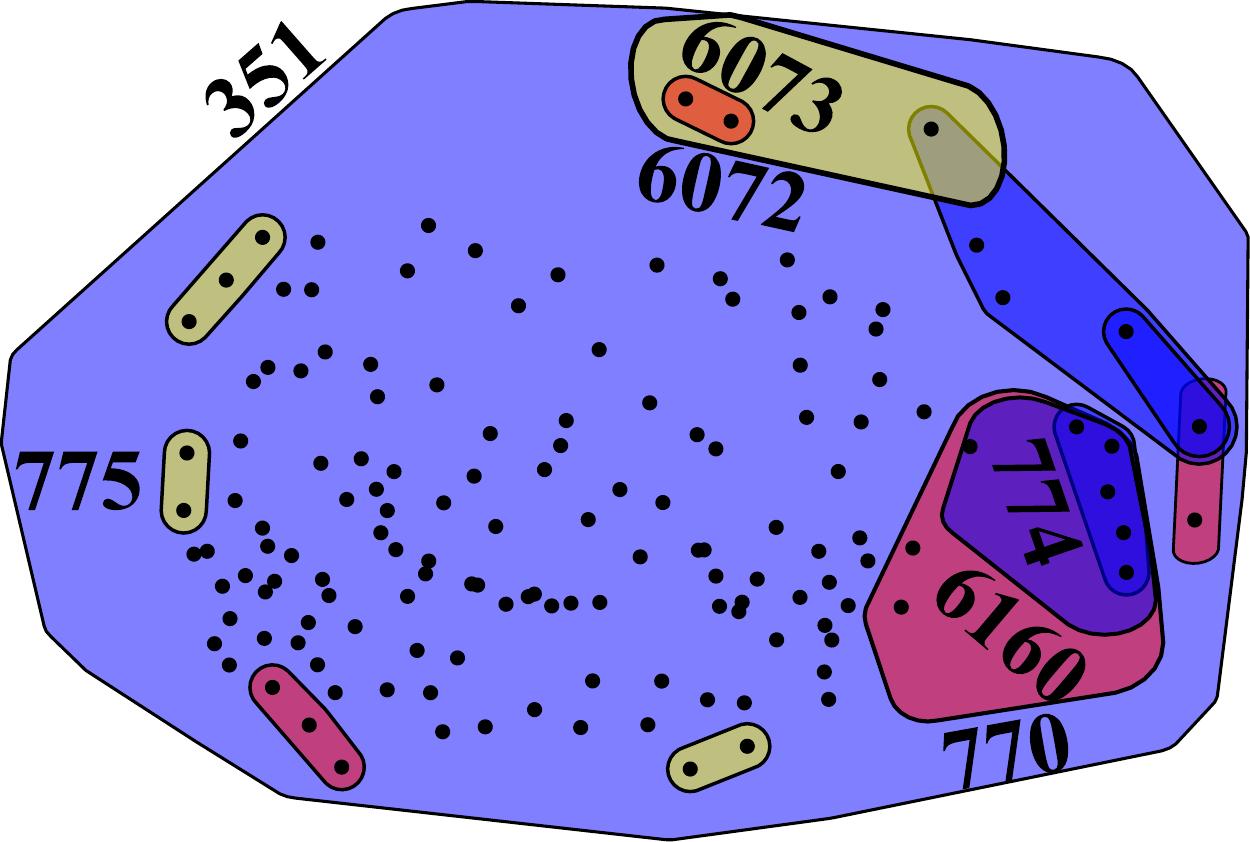}
    \vspace{-3mm}
    \caption{\footnotesize Volume-densest subhypergraph of the {\em human protein complex}}
    \label{fig:human}
    \vspace{-5mm}
\end{figure}

We also extracted various $(k,d)$-cores of protein complex dataset. We found important complexes that are not recovered by other methods' innermost core.
For instance, $(7,3)$-core contains \textit{PID complex}, \textit{COG complex} and \textit{Mi2/NuRD complex} along with its sub-complex \textit{MTA2}. 1) Experiments~\cite{luo2000deacetylation} showed that PID helps induce tumor protein p53-mediated apoptosis (programmed cell death), which is a desirable outcome in cancer therapy. 2) Conserved oligomeric Golgi (COG) complex consists of 8 subcomplexes; deficiency of some of them (\textit{COG1},\textit{COG7},\textit{COG8}) causes a novel group of Cogenital isorders of \textit{Glycosylation II} ~\cite{zeevaert2008deficiencies}. 3) Finally, \textit{Mi2/NuRD complex} and its component \textit{MTA2} are involved in \textit{Dermatomyositis} (a rare disease causing muscle weakness) and metastasis in breast cancer~\cite{covington2014role} respectively.
}

\vspace{-2mm}
\subsection{Densest SubHypergraph Discovery}
\label{sec:dens_applications}
The degree-densest subgraph is a subgraph with the maximum average node-degree among all subgraphs of a given graph \cite{G84,charikar,
FYCLX19}, which may correspond to communities \cite{DourisboureGP09} and echo chambers \cite{
L22} in social networks, brain regions responding to stimuli \cite{legenstein_et_al:LIPIcs}. 
Following same principle, we define a new notion of densest subhypergraph, called the \emph{volume-densest subhypergraph}, based on the number of neighbors of nodes in a hypergraph.
%
The \textbf{volume-density} $\rho^N[S]$ of a subset $S \subseteq V$ of nodes in a hypergraph $H = (V,E)$
is defined as the ratio of the summation of neighborhood sizes of all nodes $u \in S$ in the induced
subhypergraph $H[S]$ to the number of nodes in $H[S]$.
\begin{small}
\begin{align}
\rho^N[S] = \frac{\sum_{u\in S} \abs{N_S(u)} }{\abs{S}}
\label{eq:vol_density}
\end{align}
\end{small}
The \textbf{volume-densest subhypergraph} is a subhypergraph which has the largest volume-density among all subhypergraphs.
%
%
%
%
%
\eat{
\begin{table*}
\centering
\vspace{-1mm}
\captionof{table}{\footnotesize Top-5 highest degree nodes and top-5 highest neighborhood-size nodes from each of degree densest subhypergraph, degree densest subgraph of clique graph, and volume densest subhypergraph. Only in the volume-densest subhypergraph, the top-5 highest neighborhood-size nodes are all ordinary employees.} \label{tab:email_densest}
\vspace{-3mm}
\begin{scriptsize}
\begin{tabular}{c||c|c||c|c||c|c}
                             & \multicolumn{2}{c||}{\textbf{Degree-densest subhypergraph}} & \multicolumn{2}{c||}{\textbf{Degree-densest subgraph of clique graph}} & \multicolumn{2}{c}{\textbf{Volume-densest subhypergraph}} \\
                             & \textbf{name} & \textbf{designation} & \textbf{name} & \textbf{designation} & \textbf{name} & \textbf{designation} \\ \hline \hline
\multirow{5}{*}{\textbf{highest degree nodes}}  & Kenneth Lay    &  CEO       &  Kenneth Lay  &  CEO       & Kenneth Lay  &  CEO  \\
                                                      & Greg Whalley   &  President &  Greg Whalley &  President & Greg Whalley &  President  \\
                                                      & Mark Koenig    &  Head (Investor Relations) &  Jeffery Skilling & CEO &  Jeffery Skilling & CEO  \\
                                                & Jeffrey McMahon&  Chief Financial Officer & Mark A. Frevert & Chairman \& CEO (EWS) & Mark A. Frevert & Chairman \& CEO (EWS) \\
                                                & Mark A. Frevert& Chairman \& CEO (EWS)& Mark Koenig & Head (Investor Relations) & Mark Koenig & Head (Investor Relations) \\ \hline \hline
\multirow{5}{*}{\textbf{highest nbr-size nodes}}& Kenneth Lay    &  CEO       &  Kenneth Lay  &  CEO    &   \textbf{Gregory Martin} & \textbf{Analyst} \\
                                        & Greg Whalley & President &  Rick Buy Manager & (CRMO) & \textbf{Dustin Collins} & \textbf{Associate (Enron Global Commodities)} \\
                             & Mark Koenig & Head (Investor Relations) &  \textbf{Gregory Martin} & \textbf{Analyst} & \textbf{Andrea Richards} & \textbf{Ordinary Employee} \\
                             & Jeffrey McMahon&  Chief Financial Officer   &  \textbf{Michael Nguen} & \textbf{Ordinary Employee} & \textbf{Sladana-anna Kulic} & \textbf{Ordinary Employee}\\
                             & John Sherriff  &  President \& CEO (Enron Europe)& \textbf{Maureen Mcvicker} & \textbf{Assistant} & \textbf{Maureen Mcvicker} & \textbf{Assistant}  \\
\end{tabular}
\end{scriptsize}
\vspace{-6mm}
\end{table*}
}

\spara{Approximation algorithm.}
Inspired by Charikar~\cite{charikar},
our approach 
follows the peeling paradigm: 
In each round, we remove the node with the smallest number of neighbors in the current subhypergraph.
In particular, we sort the nodes in ascending order of their neighborhood-based core-numbers,
obtained from any neighborhood-based core-decomposition algorithm
(e.g., \textbf{Peel}, \textbf{E-Peel}, \textbf{Local-core(OPT)}, or \textbf{Local-core(P)}).
We peel nodes in that order. Among nodes with the same core-number, the one with
the smallest number of neighbors in the current subhypergraph is peeled earlier.
We finally return the subhypergraph that achieves the largest volume-density.
%
%
%
%
%
\vspace{-1mm}
\begin{theor}
Our volume-densest subhypergraph discovery algorithm returns $(d_{pair}(d_{card}-2)+2)$-approximate densest subhypergraph if the maximum cardinality of a hyperedge
is $d_{card}$ and the maximum number of hyperedges between a pair of nodes is $d_{pair}$ in the input hypergraph.
\end{theor}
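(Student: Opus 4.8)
The plan is to mimic Charikar's peeling argument for densest subgraphs, with the co-occurrence structure of hyperedges folded into the bookkeeping. Write $f(S)=\sum_{u\in S}\abs{N_S(u)}$, so that $\rho^{N}[S]=f(S)/\abs{S}$ and $f(S)$ counts the ordered pairs of distinct nodes that co-occur inside some hyperedge contained in $S$; the volume-densest subhypergraph is the $S$ maximizing $f(S)/\abs{S}$. Fix an optimal $S^{*}$ with $\rho^{*}=\rho^{N}[S^{*}]$; I may assume $\rho^{*}>0$ (otherwise the claim is vacuous, since the algorithm always returns a subhypergraph and its volume-density is non-negative), so $\abs{S^{*}}\ge 2$. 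Set $C=d_{pair}(d_{card}-2)+2$.

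The heart of the proof is a ``robustness'' lemma: every $v\in S^{*}$ satisfies $\abs{N_{S^{*}}(v)}\ge \rho^{*}/C$. One inequality is the usual optimality argument: removing $v$ cannot increase the volume-density, and rearranging $\frac{f(S^{*})-\Delta_v}{\abs{S^{*}}-1}\le\frac{f(S^{*})}{\abs{S^{*}}}$, with $\Delta_v:=f(S^{*})-f(S^{*}\setminus\{v\})$, gives $\Delta_v\ge\rho^{*}$. The opposite inequality, $\Delta_v\le C\,\abs{N_{S^{*}}(v)}$, is the step I expect to be the main obstacle, because deleting $v$ also deletes every hyperedge through $v$ and can therefore sever co-occurrences between pairs of nodes neither of which is $v$. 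I would split $\Delta_v$ into (i) the $2\abs{N_{S^{*}}(v)}$ ordered pairs that involve $v$, and (ii) the ordered pairs $(a,b)$ with $a,b\neq v$ all of whose witnessing hyperedges in $S^{*}$ contain $v$. A pair counted in (ii) has some witnessing hyperedge $e\subseteq S^{*}$, and $e$ must contain $v$, so $\{v,a,b\}\subseteq e$; hence $a\in N_{S^{*}}(v)$, and $b$ lies in one of the at most $d_{pair}$ hyperedges that contain both $v$ and $a$, each of cardinality at most $d_{card}$. Thus for each $a\in N_{S^{*}}(v)$ there are at most $d_{pair}(d_{card}-2)$ admissible $b$, giving at most $\abs{N_{S^{*}}(v)}\,d_{pair}(d_{card}-2)$ pairs in (ii). Adding (i) and (ii) yields $\Delta_v\le(d_{pair}(d_{card}-2)+2)\abs{N_{S^{*}}(v)}$, and combining with $\Delta_v\ge\rho^{*}$ finishes the lemma.

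The remaining steps tie this to the core hierarchy and to the algorithm. By the lemma, every node of $S^{*}$ has at least $k^{*}:=\lceil\rho^{*}/C\rceil$ neighbors inside $H[S^{*}]$, so the maximality of the nbr-$k^{*}$-core forces $S^{*}\subseteq V_{k^{*}}$; in particular $V_{k^{*}}$ is non-empty, and since every node of the nbr-$k^{*}$-core has at least $k^{*}$ neighbors within $H[V_{k^{*}}]$ we get $\rho^{N}[V_{k^{*}}]\ge k^{*}\ge\rho^{*}/C$. Finally, because the algorithm peels nodes in non-decreasing order of neighborhood core-number, at the instant it has deleted exactly the nodes of core-number $<k^{*}$ its current subhypergraph is precisely $H[V_{k^{*}}]$; this is one of the intermediate subhypergraphs over which the algorithm reports the maximum volume-density, so the returned subhypergraph $\hat S$ satisfies $\rho^{N}[\hat S]\ge\rho^{N}[V_{k^{*}}]\ge\rho^{*}/C$, i.e.\ $\rho^{*}\le C\cdot\rho^{N}[\hat S]$, which is the claimed approximation. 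I would also note that the tie-breaking rule within a core class is immaterial to this bound, and that setting $d_{card}=2$ gives $C=2$, recovering Charikar's guarantee for graphs.
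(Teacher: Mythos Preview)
Your argument is correct and follows the same Charikar-style template as the paper: both establish the ``robustness'' lemma $\abs{N_{S^{*}}(v)}\ge \rho^{*}/C$ for every $v\in S^{*}$ via the identical counting (the paper's Inequality~(9) is exactly your split into the $2\abs{N_{S^{*}}(v)}$ pairs through $v$ plus the at most $d_{pair}(d_{card}-2)\abs{N_{S^{*}}(v)}$ collateral pairs), and both then argue that some intermediate peeling state already has density $\ge \rho^{*}/C$.

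The only substantive difference is in that last step. The paper looks at the first iteration $i$ at which a node $u\in S^{*}$ is peeled and argues directly from greediness that $\abs{N_{S_i}(w)}\ge\abs{N_{S_i}(u)}\ge\abs{N_{S^{*}}(u)}$ for all $w\in S_i$. You instead route through the core hierarchy: the lemma forces $S^{*}\subseteq V_{k^{*}}$ with $k^{*}=\lceil\rho^{*}/C\rceil$, and because the algorithm peels in non-decreasing core-number order, $H[V_{k^{*}}]$ is literally one of the intermediate subhypergraphs, with $\rho^{N}[V_{k^{*}}]\ge k^{*}$ by the definition of the $k^{*}$-core. Your version is arguably cleaner here, since it uses only the coarse ordering by core-number and makes the tie-breaking rule within a core class manifestly irrelevant, whereas the paper's ``$\abs{N_{S_i}(w)}\ge\abs{N_{S_i}(u)}$ for all $w$'' tacitly leans on the fine-grained tie-breaking. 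Either way, the bound and the $d_{card}=2$ specialization to Charikar's $2$-approximation are the same.
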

\begin{proof}
Let a volume-densest subhypergraph be $H^* =(S^*,E[S^*])$. For all $v\in H^*$, due to optimality of $H^*$:
\begin{small}
\begin{align}
\rho^N[H^*] = \frac{\sum_{u\in S^*} \abs{N_{S^*}(u)} }{\abs{S^*}} \geq \frac{\sum_{u\in S^*\setminus\{v\}} \abs{N_{S^*\setminus\{v\}}(u)}}{\abs{S^*}-1}
\label{eq:ineq1}
\end{align}
\end{small}
Next, we verify that:
\begin{small}
\begin{align}
& \sum_{u\in S^*\setminus\{v\}} \abs{N_{S^*\setminus\{v\}}(u)} \nonumber \\
& \geq \sum_{u\in S^*} \abs{N_{S^*}(u)} - \abs{N_{S^*}(v)} - \left(d_{pair}(d_{card}-2)+1\right)\abs{N_{S^*}(v)}
\label{eq:ineq2}
\end{align}
\end{small}
where $d_{card}$ is the maximum cardinality of any hyperedge in the input hypergraph. On the right-hand side,
we subtract $\abs{N_{S^*}(v)}$ since $v$ is removed from $S^*$. We also subtract $\left(d_{pair}(d_{card}-2)+1\right)\abs{N_{S^*}(v)}$, since
by deleting $v$, all hyperedges involving $v$ will be removed. As a result, the neighborhood size of every vertex in $\abs{N_{S^*}(v)}$
can be reduced by at most $d_{pair}(d_{card}-2)+1$. By combining the Inequalities \ref{eq:ineq1} and \ref{eq:ineq2}, we derive, for all
$v\in H^*$:
\begin{small}
\begin{align}
\abs{N_{S^*}(v)} \geq \frac{\rho^N[H^*]}{d_{pair}(d_{card}-2)+2}
\label{eq:ineq3}
\end{align}
\end{small}
We show that at the point when a node $u \in S^*$ is removed by the algorithm,
the current subhypergraph must have a volume-density at least $\frac{1}{d_{pair}(d_{card}-2)+2}$
of the optimum. Consider the iteration such that $S^* \subseteq S_i$,
but $S^* \not\subseteq S_{i+1}$.
Due to the greediness of the algorithm, the following holds for all $w \in S_i$: $\abs{N_{S_i}(w)} \geq \abs{N_{S_i}(u)} \geq \abs{N_{S^*}(u)}$
The second Inequality holds since $S^* \subseteq S_i$.
Hence,
\begin{small}
\begin{align}
& \sum_{w\in S_i} \abs{N_{S_i}(w)} \geq \abs{S_i}\abs{N_{S^*}(u)}
&\implies \frac{\sum_{w\in S_i}\abs{N_{S_i}(w)}}{\abs{S_i}} \geq \abs{N_{S^*}(u)}
\label{eq:ineq5}
\end{align}
\end{small}
Substituting in Inequality~\ref{eq:ineq3}, we get:
\begin{small}
\begin{align}
\rho^N[S_i] \geq \frac{\rho^N[H^*]}{d_{pair}(d_{card}-2)+2}
\label{eq:ineq6}
\end{align}
\end{small}
\end{proof}
\eat{
\begin{figure}
    \centering
    \vspace{-3mm}
    \includegraphics[width= 0.5\linewidth, height = 0.26\linewidth]{application_figures/protein_edited.pdf}
    \vspace{-3mm}
    \caption{\footnotesize Volume-densest subhypergraph of the {\em human protein complex}}
    \label{fig:human}
    \vspace{-5mm}
\end{figure}
}
%
Notice that
$d_{pair} = 2$ if the hypergraph is a graph and our result gives $2$-approximation guarantee for the densest subgraph discovery~\cite{charikar}.
\eat{
\begin{table*}
\begin{varwidth}[b]{0.24\linewidth}
\centering
\vspace{-1mm}
\captionof{table}{\footnotesize Functions of some complexes in the volume-densest subhypergraph of {\em human protein complex} \label{tab:prfunc}}
\vspace{-3mm}
\begin{scriptsize}
\begin{tabular}{l|l|l}
\textbf{Hyper-} & \textbf{Complex} & \textbf{Function}  \\
\textbf{edge id} & \textbf{name} &  \\ \hline \hline
351                & Spliceosome           & RNA metabolic  \\
                   &                       & process        \\ \hline
770                & TREX                  & RNA Localization       \\ \hline
774                & THO                   & RNA Localization       \\ \hline
775                & CBC                   & RNA Localization       \\ \hline
6072               & TRA2B1-               & RNA Metabolic  \\
                   & SRSF9-                & process        \\
                   & SRSF6                 &                \\ \hline
6073               & SRSF9-                & RNA Metabolic  \\
                   & SRSF6                 & process        \\ \hline
6160               & hTREX84               & RNA Metabolic  \\
                   &                       & process        \\ 
\end{tabular}
\end{scriptsize}
\end{varwidth}%
\quad
\begin{varwidth}[b]{0.34\linewidth}
\centering
\vspace{-1mm}
\captionof{table}{\footnotesize The top-5 highest degree nodes in volume-densest and degree-densest subhypergraphs.
High-ranking executives of {\em enron} who make key decisions are captured in both subhypergraphs. \label{tab:deg_enron}}
\vspace{-3mm}
\begin{scriptsize}
\begin{tabular}{l|l|l|l}
                                                                                              & \textbf{\begin{tabular}[c]{@{}l@{}}Top-5 highest\\ degree nodes\end{tabular}} & \textbf{Designation}     & \textbf{\begin{tabular}[c]{@{}l@{}}Degree\\ in sub hyp.\end{tabular}} \\ \hline \hline
\multirow{5}{*}{\textbf{\begin{tabular}[c]{@{}l@{}}Degree\\ densest\\ subhyp.\end{tabular}}} & Kenneth Lay                                                                   & CEO                      & 202                                                                   \\ \cline{2-4}
                                                                                              & Greg Whalley                                                                  & President                & 118                                                                   \\ \cline{2-4}
                                                                                              & Mark Koenig                                                                   & Head (Investor            & 107                                                                   \\
                                                                                              &                        & Relations) &   \\ \cline{2-4}
                                                                                              & Jeffrey McMahon                                                               & Chief Financial          & 88                                                                    \\
                                                                                              &                        & Officer &   \\ \cline{2-4}
                                                                                              & Mark A. Frevert                                                               & Chairman and             & 86                                                                    \\
                                                                                              &                        & CEO (EWS) &   \\ \hline \hline
\multirow{5}{*}{\textbf{\begin{tabular}[c]{@{}l@{}}Volume\\ densest\\ subhyp.\end{tabular}}}  & Kenneth Lay                                                                   & CEO                      & 122                                                                   \\ \cline{2-4}
                                                                                              & Greg Whalley                                                                  & President                & 28                                                                    \\ \cline{2-4}
                                                                                              & Jeffery Skilling                                                              & CEO                      & 27                                                                    \\ \cline{2-4}
                                                                                              & Mark A. Frevert                                                               & Chairman                 & 22                                                                    \\
                                                                                              &                       & and CEO (EWS)   & \\ \cline{2-4}
                                                                                              & Mark Koenig                                                                   & Head (Investor & 20                                                                    \\
                                                                                              &     & Relations) & 
\\ \hline \hline
\multirow{5}{*}{\textbf{\begin{tabular}[c]{@{}l@{}}clique\\ densest\\ subhyp.\end{tabular}}}  & Kenneth Lay                                                                   & CEO                      & xx                                                                   \\ \cline{2-4}
                                                                                              & Greg Whalley                                                                  & President                & xx                                                                    \\ \cline{2-4}
                                                                                              & Jeffery Skilling                                                              & CEO                      & xx                                                                    \\ \cline{2-4}
                                                                                              & Mark A. Frevert                                                               & Chairman                 & xx                                                                    \\
                                                                                              &                       & and CEO (EWS)   & \\ \cline{2-4}
                                                                                              & Mark Koenig                                                                   & Head (Investor & xx                                                                    \\
                                                                                              &     & Relations) & 
\end{tabular}
\end{scriptsize}
\end{varwidth}%
\quad
\begin{varwidth}[b]{0.38\linewidth}
\centering
\vspace{-1mm}
\captionof{table}{\footnotesize The top-5 highest neighborhood-size nodes in volume-densest and degree-densest subhypergraphs.
In the degree-densest subhypergraph, the top-5 highest neighborhood-size nodes are quite similar to those in the top-5 highest-degree nodes (high-ranking executives).
However, the top-5 highest neighborhood-size nodes in the volume-densest subhypergraph are ordinary employees. \label{tab:nbr_enron}}
\vspace{-3mm}
\begin{scriptsize}
\begin{tabular}{l|l|l|l}
                                                                                              & \textbf{\begin{tabular}[c]{@{}l@{}}Top-5 highest\\ neighborhood-size nodes\end{tabular}} & \textbf{Designation}                & \textbf{\begin{tabular}[c]{@{}l@{}}\#nbrs\\ in subhyp.\end{tabular}} \\ \hline \hline
\multirow{5}{*}{\textbf{\begin{tabular}[c]{@{}l@{}}Degree\\ densest\\ subhyp.\end{tabular}}} & Kenneth Lay                                                                   & CEO                                 & 165                                                                        \\ \cline{2-4}
                                                                                              & Greg Whalley                                                                  & President                           & 158                                                                        \\ \cline{2-4}
                                                                                              & Mark Koenig                                                                   & Head(Investor                       & 153                                                                        \\
                                                                                              &                        & Relations) &   \\ \cline{2-4}
                                                                                              & Jeffrey McMahon                                                               & Chief Financial          & 152                                                                        \\
                                                                                              &                        & Officer  &   \\ \cline{2-4}
                                                                                              & John Sherriff                                                                 & President and            & 151                                                             \\
                                                                                              &                        & CEO (Enron Europe) &   \\  \hline \hline
\multirow{5}{*}{\textbf{\begin{tabular}[c]{@{}l@{}}Volume\\ densest\\ subhyp.\end{tabular}}}   & Gregory Martin         & \textbf{Analyst}                             & 1948                                                                       \\ \cline{2-4}
                                                                                              & Dustin Collins                                                                & \textbf{Associate (Enron} & 1948                                                                       \\
                                                                                              &               & \textbf{Global Commodities)} & 1948 \\ \cline{2-4}
                                                                                              & Andrea Richards  &  \textbf{Ordinary Employee}         & 1948                                                                       \\ \cline{2-4}
                                                                                              & Sladana-anna Kulic  & \textbf{Ordinary Employee}       & 1948                                                                       \\ \cline{2-4}
                                                                                              & Maureen Mcvicker    & \textbf{Assistant}           & 1948                                                                       \\
                                                                                              \hline \hline
\multirow{5}{*}{\textbf{\begin{tabular}[c]{@{}l@{}}clique\\ densest\\ subhyp.\end{tabular}}}   & Kenneth lay         & \textbf{CEO}                             & xx                                                                       \\ \cline{2-4}
                                                                                              & Rick Buy                                                                & \textbf{Manager (CRMO} & xx                                                                    \\ \cline{2-4}
                                                                                              & Gregory Martin  &  \textbf{Ordinary Employee}         & xx                                                                       \\ \cline{2-4}
                                                                                              & Michael Nguen  & \textbf{Ordinary Employee}       & xx                                                                       \\ \cline{2-4}
                                                                                              & Maureen Mcvicker    & \textbf{Assistant}           & xx                                                                       \\ 
\end{tabular}
\end{scriptsize}
\end{varwidth}%
\end{table*}
}
\eat{
\begin{table}
  \centering
   \captionof{table}{\footnotesize Subject and intent of the top-3 emails with the highest number of participants in the volume-densest subhypergraph \label{tab:email_sub}}
   \vspace{-3mm}
   \begin{scriptsize}
   \begin{tabular}{l|l}
   \textbf{Email subject} & \textbf{Intent of the email}     \\ \hline
   Updated Cougars@Enron email list    & \begin{tabular}[c]{@{}l@{}}Seeking applicants for Board of directors position \\ at Cougars@Enron (U Housten alumni group at Enron)\end{tabular} \\ \hline
   Associate/ analyst program          & \begin{tabular}[c]{@{}l@{}}Announcing about talent-seeking program of Enron.\end{tabular}                                                                       \\ \hline
   \begin{tabular}[c]{@{}l@{}}Analyst \& associate program - \\ e-speak invitation from Billy Lemmons\end{tabular} & Invitation to attend an online seminer                                                                                                                            \\ 
   \end{tabular}
   \end{scriptsize}
   \vspace{-6mm}
\end{table}%
}
\eat{
\subsubsection{Effectiveness of volume-densest subhypergraphs}
\label{subsubsec:dens_effect}
We compute volume-densest subhypergraphs from our datasets using our approximation algorithm, 
and in those subhypergraphs we compute the avg. number of neighbors (neighborhood-cohesion measure)
and the avg. degree (degree-cohesion measure) per node.
As baselines for comparison, we compute those two measures on the degree-densest subhypergraphs
of the same hypergraphs extracted using 
~\cite{hu2017maintaining}.
~\Cref{fig:dens_compare_alg}(a) shows that
the nodes in the volume-densest subhypergraph have more neighbors (on avg.)
than that in the degree-densest subhypergraph. ~\Cref{fig:dens_compare_alg}(b) depicts that 
the nodes in the degree-densest subhypergraph have a higher degree (on avg.) than that in
the volume-densest subhypergraph. These results suggest that volume-densest
subhypergraph is more effective 
in capturing neighborhood-cohesive regions.
In contrast, degree-densest subhypergraph is more effective in capturing degree-cohesive regions.
Neighborhood-cohesiveness is important 
in many applications as follows.

\subsubsection{Case Study I: Biology}
\label{subsubsec:casestudyI}
We analyze a real-world hypergraph of manually-annotated human protein complexes collected from the {\em CORUM} database
\cite{corum}.
There are 2611 hyperedges (protein complexes) and 3622 nodes (proteins) in the \emph{Human protein complex} hypergraph.
The volume-densest subhypergraph shown in~\Cref{fig:human} has several interesting characteristics.
%
{\bf First,} the complexes in the subhypergraph are correlated as they
participate in two fundamental biological processes: {\textsf RNA metabolism} and
{\textsf RNA localization}
(e.g., some of them are listed in \Cref{tab:prfunc}).
{\bf Second,} the largest hyperedge (351) is the {\textsf Spliceosome complex},
responsible for {\textsf RNA splicing},
which assists in cell-evolution process and in the making of new and improved proteins in
human body. Two subsets of {\textsf Spliceosome} (6072 and 6073) are responsible for regulating
{\textsf mRNA splicing}, which are known to be affected
by a genetic disease called {\textsf TAU-mutation} causing {\textsf frontotemporal dementia}. 
{\bf Finally,} the {\textsf TREX complex} (770) is responsible for transporting {\textsf mRNA} from the nucleus to the
cytoplasm. One of its subsets, {\textsf hTREX84} (6160) is found to be highly correlated with ovarian and breast cancers \cite{hTREX}.
}

\eat{
\spara{Case study: Email communication.}
We extract all emails involving Kenneth Lay, who was the founder, chief executive officer,
and chairman of {\textsf Enron}~\cite{enwiki:1096673001}. 
The ego-hypergraph of such a key-person, having 4718 nodes (person) and 1190 hyperedges (emails), can provide insights and differences among the volume-densest subhypergraph, degree-densest subhypergraph, and degree-densest subgraph of clique graph representation of the hypergraph. 
%
The degree-densest subhypergraph \cite{hu2017maintaining} has 166 nodes and 202 hyperedges, whereas the volume-densest subhypergraph has 1949 nodes and 122 hyperedges. We also compute 2-approximation of the degree-densest subgraph \cite{charikar} from the clique graph representation of our hypergraph via core decomposition of the clique graph. We notice that the clique graph initially has 4718 nodes and 1070536 edges, whereas its (approximated) degree-densest subgraph has 4718 nodes and 1069817 edges. Clearly, degree-densest subgraph of the clique graph is almost the same size of the initial hypergraph and therefore, it is not useful.

Next, we analyze the degree and neighborhood-sizes of
the top-5 nodes in these two subhypergraphs and one subgraph in Tables~\ref{tab:email_densest}.}

\eat{We find that both degree-densest and volume-densest subhypergraphs contain
high-ranking key-personnel in {\textsf Enron}. Such personnel (nodes) participate in many emails (hyperedges)
in the extracted subhypergraph. However, we also notice that ordinary employees
are communicated the most in emails (i.e., they are nodes with many neighbors),
and such employees can be extracted by analyzing the volume-densest subhypergraph.
We further investigate the reason why ordinary employees have more neighbors
in the volume-densest subhypergraph. We extract hyperedges (emails) where the top-5 highest neighborhood-size
nodes were involved. We find that many such emails were about internal announcements, meeting/seminar invitations,
and employee social gatherings (\Cref{tab:email_sub}). In Table \ref{tab:email_densest}, the top-$5$ highest neighborhood-size nodes from degree-densest subgraph of clique graph consist of both key personals and ordinary employees -- we do not observe all ordinary employees, unlike the top-$5$ highest neighborhood-size nodes from the volume-densest subhypergraph. As stated earlier, degree-densest subgraph of clique graph contains all nodes of the input hypergraph and unlike our volume-densest subhypergraph, do not provide additional information.}

\spara{Case study: Meetup dataset.}
We extract all events with $<$ 100 participants from the Nashville meetup dataset \cite{meetup}. The extracted hypergraph contains 24\,115 nodes (participants) and 11\,027 hyperedges (events) organized by various interest groups. We compute and analyze the volume-densest subhypergraph, degree-densest subhypergraph \cite{hu2017maintaining}, and degree-densest subgraph \cite{charikar} of clique graph (\Cref{tab:meetup}). {\bf (1)} The degree-densest subhypergraph contains casual, frequent gatherings, each having less participants, from only one socializing group (\textit{Bellevue Meetup: Meet new Friends}). {\bf (2)} However, both the degree-densest subgraph of the clique graph and the volume-densest subhypergraph contain events involving multiple technical groups that arrange meetups about diverse technical themes. {\bf (3)} Despite finding events of technical themes, the volume-densest subhypergraph finds a different set of events than the degree-densest subgraph of the clique graph. The reason is that the degree-density criterion used to extract these events from the clique graph is different from volume-density. 
The degree-densest subgraph of the clique graph, despite having a high degree-density (100.7), when projected to a subhypergraph (by projecting cliques to hyperedges), has a low volume-density (0.4). 
We also find these technical events to be less popular, having 5 participants on avg., compared to those returned by the volume-densest subhypergraph (avg. 78 participants).

\begin{table}[tb!]
\centering
\vspace{-1mm}
\captionof{table}{\footnotesize A summary of volume-densest and degree-densest subhypergraphs, degree-densest graph of clique graph.: Meetup dataset \cite{meetup}} \label{tab:meetup}
\vspace{-3mm}
\begin{scriptsize}
\resizebox{8.5cm}{!}{
\begin{tabular}{l|lll}
\multicolumn{1}{c|}{} & \multicolumn{1}{c|}{\textbf{Vol.-den. subhyp.}} & \multicolumn{1}{c|}{\textbf{Deg.-den. subg. of clique g.}} & \multicolumn{1}{c}{\textbf{Deg.-den. subhyp.}} \\ \hline
\textbf{\# Events}    & \multicolumn{1}{c|}{\textbf{27}}    & \multicolumn{1}{c|}{17} & \multicolumn{1}{c}{26} \\ \hline
\textbf{Vol.-density} & \multicolumn{1}{c|}{\textbf{116.9}} & \multicolumn{1}{c|}{0.4 (projected to subhyp.)} & \multicolumn{1}{c}{9.3} \\ \hline
\begin{tabular}[c]{@{}l@{}}\textbf{Example} \\ \textbf{events}\end{tabular}  & \multicolumn{1}{l|}{\begin{tabular}[c]{@{}l@{}}Identity and Access \\ Controls Landscape \\ in .NET; Web scraping\\ in Python; Regulatory Env. \\ Around Blockchain\end{tabular}} & \multicolumn{1}{l|}{\begin{tabular}[c]{@{}l@{}}Field Trip w/ \\ Genealogical Society;\\ Monthly Meeting: Civic \\  Tech; Nashville (Nv) \\ PHP User Group\end{tabular}} & \begin{tabular}[c]{@{}l@{}}Trivia Night \textcircled{a} \\ Plantation Pub;\\ Trivia Night \textcircled{a} \\ Three Stones Pub;\\ Dinner \textcircled{a} Dalton\end{tabular} \\ \hline
\begin{tabular}[c]{@{}l@{}}\textbf{Organizing}\\ \textbf{groups}\end{tabular} & \multicolumn{1}{l|}{\begin{tabular}[c]{@{}l@{}}Nv .NET User Group; \\ Data Science Nv.;\\ Nv. Blockchain User Group\end{tabular}}                                                                   & \multicolumn{1}{l|}{\begin{tabular}[c]{@{}l@{}}State \& Local Govt. Dev \\ Network; Dev Launchpad; \\ Nv. PHP User Group\end{tabular}}                              & \begin{tabular}[c]{@{}l@{}}Bellevue Meetup: \\ Meet new Friends\end{tabular}
\end{tabular}
}
\end{scriptsize}
\vspace{-6mm}
\end{table}
%

\section{Related Work}
\label{sec:related}
Recently, there has been a growing interest in hypergraph data management \cite{SybrandtSS22,FenderM13,WhangDJLDLKP20,KabiljoKPPSAP17,LeeKS20,shun2020practical}.
As we focus on core decomposition, we 
refer to
recent surveys~\cite{tina21,giovanni20,Eliassi-RadLRS21,Lee22} for a general exposition.

\spara{Core decomposition in hypergraphs.}
There are relatively few works on
hypergraph core decomposition. Ramadan et al.~\cite{ramadan2004hypergraph} propose a peeling
algorithm to find the maximal-degree based $k$-core of a hypergraph.
~\cite{jiang2017parallel,shun2020practical} discuss parallel implementations of degree-based hypergraph core computation
based on peeling approach. Sun et al.~\cite{sun2020fully} propose a fully dynamic approximation
algorithm that maintains approximate degree-based core-numbers of an unweighted hypergraph.
Gabert et al.~\cite{GabertPC21}
study degree-based core maintenance in dynamic hypergraphs,
and propose a parallel (shared-memory) local algorithm.
None of these works explore our neighborhood-based
hypergraph core decomposition, which is different from degree-based hypergraph core computation (\S \ref{sec:intro}). The existing approaches for degree-based core decomposition cannot be easily adapted for neighborhood-based hypergraph core computation (\S \ref{sec:algorithm}). 

\spara{Core decomposition in graphs.} 
The linear-time peeling algorithm for graph core decomposition \cite{Malliaros20} was given by Batagelj and Zaver{\v{s}}nik \cite{Batagelj11}.
Core decomposition has also been studied in disk-based \cite{ChengKCO11,K15},
distributed \cite{distributedcore,SariyuceSP18}, parallel \cite{D17}, and streaming \cite{S13} settings, and
for varieties of graphs, e.g., weighted \cite{al2017identification,Zhou0HY0021}, directed \cite{LiaoLJHXC22},
temporal \cite{GalimbertiCBBCG21}, uncertain \cite{FGKV14}, and multi-layer \cite{GalimbertiBGL20} networks.
Higher-order cores in a graph (e.g., $(k,h)$-core \cite{bonchi2019distance,LiuZHX21}, triangle $k$-core \cite{ZhangP12},
$h$-clique-core, pattern-core \cite{Tsourakakis15a,FYCLX19}) and more complex cores in an attributed network
(e.g., meta-path-based core \cite{FangYZLC20} and $(k,r)$-core \cite{ZhangZQZL17})
have been proposed. In \S \ref{sec:diff}, we reasoned that existing approaches for graph core decomposition
cannot be easily adapted for neighborhood-based hypergraph core decomposition. We also depicted
that the local approach \cite{distributedcore,eugene15}, an efficient method for graph core decomposition,
produces incorrect core-numbers for neighborhood-based hypergraph core decomposition (\S \ref{sec:local}, \S \ref{sec:experiments}). 
\section{Conclusions}
\label{sec:conclusion}
We introduced neighborhood-cohesive core decomposition of hypergraphs, 
having desirable properties such as \textbf{Uniqueness} and \textbf{Core-containment}. 
We then proposed three algorithms: \textbf{Peel}, \textbf{E-Peel}, and novel \textbf{Local-core} for hypergraph core decomposition. 
Empirical evaluation on synthetic and real-world hypergraphs 
showed that the novel \textbf{Local-core} with optimizations and parallel implementation is the most efficient among all 
proposed and baseline algorithms. 
Our proposed decomposition is more effective than the degree and clique graph-based decompositions in intervening diffusion. 
Case studies illustrated that our novel volume-densest subhypergraphs capture differently important meetup events, compared to both degree and clique graph decomposition-based densest subhypergraphs. Finally, we developed a new hypergraph-core model,
\emph{(neighborhood, degree)-core} by considering both neighborhood
and degree constraints, designed decomposition algorithm \textbf{Local-core+Peel}, and depicted its superiority in diffusion spread. 
In future, we shall design efficient algorithms for \emph{(neighborhood, degree)-core} decomposition.


\bibliographystyle{ACM-Reference-Format}
\bibliography{sample}

\end{document}